\newcommand{\eps}{\varepsilon}
\newcommand{\N}{\mathbb{N}}
\newcommand{\R}{\mathbb{R}}
\newcommand{\Z}{\mathbb{Z}}
\newcommand{\dd}{\mathrm{d}}
\newcommand{\Cb}{\mathbb{C}}
\newcommand{\D}{\mathbb{D}}
\newcommand{\E}{\mathbb{E}}
\newcommand{\FF}{\mathcal{F}}
\newcommand{\M}{\mathcal{M}}
\newcommand{\B}{\mathbb{B}}
\newcommand{\G}{\mathbb{G}}
\newcommand{\U}{\mathbb{U}}
\newcommand{\Lc}{\mathcal{L}}
\newcommand{\Nc}{\mathcal{N}}
\newcommand{\Ex}{\mathrm{E}}
\newcommand{\Var}{\mathrm{Var}}
\newcommand{\Cov}{\mathrm{Cov}}
\newcommand{\1}{\mathbf{1}}
\newcommand{\ip}[1]{\lfloor #1 \rfloor}
\renewcommand{\Pr}{\mathbb{P}}
\newcommand{\p}{\overset{\Pr}{\to}}
\newcommand{\scsp}{\scriptscriptstyle \overset{\Pr}{\to}}
\newcommand{\scs}{\scriptscriptstyle}
\renewcommand{\mid}{\,|\,}
\theoremstyle{plain}
\newtheorem{prop}{Proposition}[section]
\newtheorem{cond}[prop]{Condition}
\numberwithin{equation}{section}
\title{Combining cumulative sum change-point detection tests for assessing the stationarity of univariate time series}
\author{Axel B\"ucher\,\footnote{Heinrich-Heine-Universität D\"usseldorf,
Mathematisches Institut,
Universit\"atsstr.~1, 40225 D\"usseldorf, Germany.
{E-mail:} \texttt{axel.buecher@hhu.de}}
\and Jean-David Fermanian\,\footnote{CREST-ENSAE, J120, 3, avenue Pierre-Larousse, 92245 Malakoff cedex, France. {E-mail:} \texttt{jean-david.fermanian@ensae.fr}}
\and Ivan Kojadinovic\,\footnote{CNRS / Universit\'e de Pau et des Pays de l'Adour / E2S UPPA, Laboratoire de math\'ematiques et applications -- IPRA, UMR 5142, B.P. 1155, 64013 Pau Cedex, France.
{E-mail:} \texttt{ivan.kojadinovic@univ-pau.fr}}
}
\begin{document}
\maketitle

\begin{abstract}
We derive tests of stationarity for univariate time series by combining change-point tests sensitive to changes in the contemporary distribution with tests sensitive to changes in the serial dependence. The proposed approach relies on a general procedure for combining dependent tests based on resampling. After proving the asymptotic validity of the combining procedure under the conjunction of null hypotheses and investigating its consistency, we study rank-based tests of stationarity by combining cumulative sum change-point tests based on the contemporary empirical distribution function and on the empirical autocopula at a given lag. Extensions based on tests solely focusing on second-order characteristics are proposed next. The finite-sample behaviors of all the derived statistical procedures for assessing stationarity are investigated in large-scale Monte Carlo experiments and illustrations on two real data sets are provided. Extensions to multivariate time series are briefly discussed as well.

\medskip

\noindent {\it Keywords:} copula, dependent p-value combination, multiplier bootstrap, rank-based statistics, tests of stationarity.

\medskip

\noindent {\it MSC 2010:} 62E20, 62G10, 62G09.
\end{abstract}


\section{Introduction}

Testing the stationarity  of a time series is of great importance prior to any modeling. Existing approaches assessing whether a time series is stationary could roughly be grouped into two main categories: procedures that mostly work in the frequency domain, and those that mostly work in the time domain. Among the tests in the former group, one finds for instance approaches testing the constancy of a spectral functional \citep[see, e.g.,][]{PriSub69,Pap10}, procedures comparing a time-varying spectral density estimate with its stationary approximation \citep[see, e.g.,][]{DetPreVet11,PreVetDet13,PucPre16} and approaches based on wavelets \citep[see, e.g.,][]{vonNeu00,Nas13,CarNas13,CarNas16}. As far as the second category of tests is concerned, one mostly finds approaches based on the autocovariance / autocorrelation function such as \cite{LeeHaNa03}, \cite{DwiSub11}, \cite{JinWanWan15} and \cite{DetWuZho15}. In particular, the works of \cite{LeeHaNa03} and \cite{DetWuZho15} also clearly pertain to the change-point detection literature \citep[see, e.g.,][for an overview]{CsoHor97,AueHor13}. The latter should not come as a surprise. Indeed, any test for change-point detection may be seen as a test of stationarity designed to be sensitive to a particular type of departure from stationarity.

To illustrate the latter point, let $X_1,X_2,\dots$ be a stretch from a univariate time series and consider the classical \emph{cumulative sum} (CUSUM) test ``for a change in the mean'' \cite[see, e.g.,][]{Pag54,Phi87}. 
The latter is usually regarded as a test of
$$
H_0: X_1, X_2, \ldots \text{ have the same expectation}
$$
but it only holds its level asymptotically if $X_1,X_2, \dots$ is a stretch from a time series whose autocovariances at all lags are constant \citep{Zho13}. Without the latter assumption, a small p-value can only be used to conclude that $X_1,X_2, \ldots$ is not a stretch from a second-order stationary time series.
In other words, without the additional assumption of constant autocovariances, the classical CUSUM test ``for a change in the mean'' is merely a test of second-order stationarity that is particularly sensitive to a change in the expectation.

Obtaining a large p-value when carrying out the previously mentioned test should clearly not be interpreted as no evidence against second-order stationarity since a change in mean is only one possible departure from second-order stationarity. Following \cite{DetWuZho15}, complementing the previous test by tests for change-point detection particularly sensitive to changes in the variance and in the autocorrelation at some fixed lags may, in case of large p-values, comfort a practitioner in considering that $X_1,X_2,\dots$ might well be a stretch from a second-order stationary time series. The aim of this work is to adopt a similar perspective on assessing stationarity but without only restricting the analysis to second-order characteristics. In fact, all finite dimensional distributions induced by a time series could be potentially tested.

More formally, suppose we observe a stretch $X_1,\dots,X_{N}$ from a time series of univariate continuous random variables. For some $2\le h \le N$,  set $n=N-h+1$ and let $\bm Y_1^{\scs (h)},\dots,\bm Y_n^{\scs (h)}$ be $h$-dimensional random vectors defined by
\begin{equation}
\label{eq:Yi}
\bm Y_i^{(h)} = (X_i,\dots,X_{i+h-1}), \qquad i \in \{1,\dots,n\}.
\end{equation}
Note that the quantity $h$ is sometimes called the \emph{embedding dimension} and $h-1$ can be interpreted as the maximum lag under investigation. As an imperfect alternative, we shall focus on tests particularly sensitive to departures from the hypothesis
\begin{equation}
\label{eq:H0:Fh}
H_0^{(h)}: \,\exists \, F^{(h)} \text{ such that } \bm Y_1^{(h)},\dots,\bm Y_n^{(h)} \text{ have the distribution function (d.f.) } F^{(h)}.
\end{equation}

To derive such tests, a first natural approach would be to apply to the random vectors in~\eqref{eq:Yi} non-parametric CUSUM tests such as those based on differences of empirical d.f.s studied in~\cite{GomHor99}, \cite{Ino01} and \cite{HolKojQue13} (see also Section~\ref{sec:dftest} below), or on differences of empirical characteristic functions; see, e.g., \cite{HusMei06a} and \cite{HusMei06b}. However, preliminary numerical experiments (some of which are reported in Section~\ref{sec:MC}) revealed the low power of such an adaptation in the case of the empirical d.f.-based tests, especially when the non-stationarity of the underlying univariate time series is a consequence of changes in the serial dependence. These empirical conclusions, in line with those drawn in \cite{BucKojRohSeg14} in a related context, prompted us to consider the alternative approach consisting of assessing changes in the ``contemporary'' distribution (that is, of the $X_i$) separately from changes in the serial dependence.

Suppose that $H_0^{\scs (h)}$ in~\eqref{eq:H0:Fh} holds and recall that $X_1, \dots, X_{n+h-1}$ is assumed to be a stretch from a time series of univariate continuous random variables. Then, the common d.f.\ of $\bm Y_i^{\scs (h)}$ can be written \citep{Skl59} as
$$
F^{(h)}(\bm x)=C^{(h)} \{ G(x_1),\dots,G(x_h) \}, \qquad \bm x \in \R^h,
$$
where $C^{(h)}$ is the unique \emph{copula} (merely an $h$-dimensional d.f.\ with standard uniform margins) associated with $F^{(h)}$, and $G$ is the common marginal univariate d.f.\ of all the components of the $\bm Y_i^{\scs (h)}$, $i \in \{1,\dots,n\}$. The copula $C^{(h)}$ controls the dependence between the components of the~$\bm Y_i^{\scs (h)}$. Equivalently, it controls the \emph{serial dependence} up to lag $h-1$ in the time series, which is why it is sometimes called the lag $h-1$ \emph{serial copula} or \emph{autocopula} in the literature.

Notice further that, slightly abusing notation, the hypothesis $H_0^{\scs (h)}$ in~\eqref{eq:H0:Fh} can be written as $H_0^{\scs (1)} \cap H_{0,c}^{\scs (h)}$, where
\begin{equation}
\label{eq:H0:1}
H_0^{(1)}: \,\exists \, G \text{ such that } X_1,X_2, \dots \text{ have the d.f.\ } G,
\end{equation}
and
\begin{equation}
\label{eq:H0:Ch}
H_{0,c}^{(h)}: \,\exists \, C^{(h)} \text{ such that } \bm Y_1^{(h)},\dots,\bm Y_n^{(h)} \text{ have the copula } C^{(h)}.
\end{equation}
In other words, $H_0^{\scs (h)}$ in~\eqref{eq:H0:Fh} holds if all the $X_i$  have the same (contemporary) distribution and if all the $\bm Y_i^{\scs (h)}$ have the same copula.

A sensible strategy for assessing whether $H_0^{\scs (h)}$ in~\eqref{eq:H0:Fh} is plausible would thus naturally consist of combining two tests: a test particularly sensitive to departures from $H_0^{\scs (1)}$ in~\eqref{eq:H0:1} and a test particularly sensitive to departures from $H_{0,c}^{\scs (h)}$ in~\eqref{eq:H0:Ch}. For the former, as already mentioned, a natural candidate in the general context under consideration is the CUSUM test based on differences of empirical d.f.s studied in~\cite{GomHor99} and \cite{HolKojQue13}. We shall briefly revisit the latter approach in the setting of serially dependent observations. One of the main goals of this work is to derive a test that is particularly sensitive to departures from $H_{0,c}^{\scs (h)}$ in~\eqref{eq:H0:Ch}, that is,  to changes in the serial dependence. The idea is not new but seems to have been employed only with respect to second-order characteristics of a time series: see, e.g., \cite{LeeHaNa03} for tests on the autocovariance in a CUSUM setting, and \cite{DwiSub11} and \cite{JinWanWan15} for tests in a different setting. Specifically, one of the main contributions of this work is to propose a CUSUM test that is sensitive to departures from $H_{0,c}^{\scs (h)}$. It will be based on a serial version of the so-called \emph{empirical copula} that we should naturally refer to as the \emph{empirical autocopula} hereafter.

Because the aforementioned test based on empirical d.f.s (particularly sensitive to departures from $H_0^{\scs (1)}$ in~\eqref{eq:H0:1} by construction) and the test based on empirical autocopulas (designed to be sensitive to departures from $H_{0,c}^{\scs (h)}$ in~\eqref{eq:H0:Ch}) rely on the same type of resampling, bootstrap replicates on the underlying statistics $S_{n,G}$ and $S_{n,C^{(h)}}$ can be generated jointly to reproduce, approximately, the distribution of $(S_{n,G},S_{n,C^{(h)}})$ under stationarity. Under such an assumption, another main contribution of this work, that may be of independent interest, is a general procedure for combining dependent bootstrap-based tests, relying on appropriate extensions of well-known p-value combination methods such as those of \cite{Fis33} or \cite{StoEtAl49}. 

An interesting and desirable feature of the resulting global testing procedure is that it is rank-based. It is therefore expected to be quite robust in the presence of heavy-tailed observations. Still, in the case of Gaussian time series, some tests based on second-order characteristics might be more powerful. A natural competitor to our aforementioned global test could thus be obtained by combining tests particularly sensitive to changes in the expectation, variance and autocovariances up to lag $h-1$. Interestingly enough, CUSUM versions of such tests can be cast in the setting considered in \cite{BucKoj16b}: they can all be carried out using the same type of resampling and thus, as described in the previous paragraph, their (dependent) p-values can be combined, leading to a test that could be regarded as a test of second-order stationarity.

The paper is organized as follows. The proposed procedure for combining dependent boot\-strap-based tests is described in Section~\ref{sec:combine:tests}, conditions under which it is asymptotically valid under the conjunction of the component null hypotheses are stated and its consistency is theoretically investigated. The detailed description of the combined rank-based test involving empirical d.f.s and empirical autocopulas is given in Section~\ref{sec:rank}, along with theoretical results about its asymptotic validity under the null hypothesis of stationarity. The choice of the embedding dimension $h$ is discussed in Section~\ref{sec:hchoice}. The fourth section is devoted to related combined tests based on second-order characteristics: the corresponding testing procedures are provided and asymptotic validity results under the null are stated. Section~\ref{sec:MC} reports Monte Carlo experiments that are used to empirically study the previously described tests. Some illustrations on real-world data are presented in Section~\ref{sec:illus}. Finally, concluding remarks are provided in Section~\ref{sec:conc}, one of which, in particular, discusses multivariate extensions of the proposed tests. 

Auxiliary results and all proofs are deferred to a sequence of appendices. Additional theoretical and simulation results are provided in a supplementary material. The studied tests are implemented in the package {\tt npcp} \citep{npcp} for the \textsf{R} statistical system \citep{Rsystem}. In the rest of the paper, the arrow~`$\leadsto$' denotes weak convergence in the sense of Definition~1.3.3 in \cite{vanWel96}, while the arrow~`$\scsp$' denotes convergence in probability. All convergences are for $n\to\infty$ if not mentioned otherwise. Finally, given a set $S$, $\ell^\infty(S)$ denotes the space of all bounded real-valued functions on $S$ equipped with the uniform metric.

\section{A general procedure to combine dependent tests based on resampling}
\label{sec:combine:tests}

As argued in the introduction, to assess whether stationarity is likely to hold, it might be beneficial to combine several tests, each of which being designed to be sensitive to a particular form of non-stationarity. As the need for similar approaches may arise in other contexts than stationarity testing, in this section, we propose a very general strategy for combining tests based on resampling by relying on well-known p-value combination methods such as those of \cite{Fis33} or \cite{StoEtAl49}. Recall that, given $r$ p-values $p_1,\dots,p_r$ for right-tailed tests of corresponding null hypotheses $H_0^{\scs (1)}, \dots, H_0^{\scs (r)}$ with corresponding strictly positive weights $w_1,\dots,w_r$  that quantify the importance of each test in the combination, the latter method consists of computing, up to a rescaling term, the global statistic 
\begin{equation}
\label{eq:stouffer}
\psi_S(p_1,\dots,p_r) = \sum_{j=1}^r w_j \Phi^{-1}(1 - p_j),
\end{equation}
where $\Phi^{-1}$ is the quantile function of the standard normal. Large values provide evidence against the global null hypothesis $H_0 = H_0^{\scs (1)} \cap \dots \cap H_0^{\scs (r)}$. By analogy, the corresponding weighted version of the global statistic in Fisher's p-value combination method can be defined by
\begin{equation}
\label{eq:fisher}
\psi_F(p_1,\dots,p_r) = -2 \sum_{j=1}^r w_j \log(p_j).
\end{equation}
If the p-values $p_1, \dots, p_r$ are independent and uniformly distributed on $(0,1)$, then it can be verified that $\psi_S(p_1,\dots,p_r)$ or $\psi_F(p_1,\dots,p_r)$ are pivotal, giving rise to simple exact global tests. If the component tests are dependent, however, the distributions of the previous statistics are not pivotal and computing the corresponding global p-values is not straightforward anymore.

Let $\bm X_n$ denote the available data (apart from measurability, no assumptions are made on $\bm X_n$, but it is instructive to think of $\bm X_n$ as an $n$-tuple of possibly multivariate observations which may be serially dependent) and let $T_{n,1}=T_{n,1}(\bm X_n),\dots,T_{n,r}=T_{n,r}(\bm X_n)$ be the statistics, each $\R$-valued, of the $r$ tests to be combined.

We assume furthermore that, for any $j \in \{1,\dots,r\}$, large values of $T_{n,j}$ provide evidence against the hypothesis $H_0^{\scs (j)}$. As we continue, we let $\bm T_n = \bm T_n(\bm X_n)$ denote the $r$-dimensional random vector $(T_{n,1},\dots,T_{n,r}) = (T_{n,1}(\bm X_n),\dots,T_{n,r}(\bm X_n))$.

We suppose additionally that we have available a resampling mechanism which allows us to obtain a sample of $M$ bootstrap replicates $\bm T_n^{\scs [i]} = \bm T_n^{\scs [i]}(\bm X_n, \bm V_n^{\scs [i]})$, $i \in \{1,\dots,M\}$, of $\bm T_n$ where $\bm V_n^{\scs [1]}, \dots, \bm V_n^{\scs [M]}$ are independent and identically distributed (i.i.d.) $\R^n$-valued random vectors representing the additional sources of randomness involved in the resampling mechanism and such that, for any $i \in \{1,\dots,M\}$, $T_{n,j}^{\scs [i]}$ depends on the data $\bm X_n$ and $\bm V_n^{\scs [i]}$, that is, $T_{n,j}^{\scs [i]} = T_{n,j}^{\scs [i]}(\bm X_n, \bm V_n^{\scs [i]})$ for all $j \in \{1,\dots,r\}$. Note that the previous setup naturally implies that the components $T_{n,\scs 1}^{\scs [i]},\dots,T_{n,r}^{\scs [i]}$ of $\bm T_n^{\scs [i]}$ are bootstrap replicates of the components $T_{n,1},\dots,T_{n,r}$ of $\bm T_n$. The fact that all the components of $\bm T_n^{\scs [i]}$ depend on the same additional source of randomness $\bm V_n^{\scs [i]}$ makes it possible to expect that the bootstrap replicates $\bm T_n^{\scs [i]}$, $i \in \{1,\dots,M\}$, be, approximately, i.i.d.\ copies of $\bm T_n$ under the global null hypothesis $H_0 = H_0^{\scs (1)} \cap \dots \cap H_0^{\scs (r)}$. For the individual test based on $T_{n,j}$, $j \in \{1,\dots,r\}$, an approximate p-value could then naturally be computed as
\[
	\frac1M \sum_{i=1}^M \1(T_{n,j}^{[i]} \ge T_{n,j}).
\]

Let $\psi$ be a continuous function from $(0,1)^r$ to $\R$ that is decreasing in each of its $r$ arguments (such as $\psi_S$ or $\psi_F$ in~\eqref{eq:stouffer} and~\eqref{eq:fisher}, respectively). To compute an approximate $p$-value for the global statistic $\psi \{ p_{n,M}(T_{n,1}),\dots,p_{n,M}(T_{n,r}) \}$, we propose the following procedure:
\begin{enumerate}

\item Let $\bm T_n^{\scs [0]} = \bm T_{n}$.

\item Given a large integer $M$, compute the sample of $M$ bootstrap replicates $\bm T_n^{\scs [1]},\dots,\bm T_n^{\scs [M]}$ of~$\bm T_n^{\scs [0]}$.

\item Then, for all $i \in \{0,1,\dots,M\}$ and $j \in \{1,\dots,r\}$, compute
\begin{equation}
\label{eq:pval_T}
p_{n,M}(T_{n,j}^{[i]}) = \frac{1}{M+1} \bigg\{\frac{1}{2} + \sum_{k=1}^M \1 \left( T_{n,j}^{[k]} \geq T_{n,j}^{[i]} \right) \bigg\}.
\end{equation}

\item Next, for all $i \in \{0,1,\dots,M\}$, compute
\begin{equation}
\label{eq:WnMi}
W_{n,M}^{[i]} = \psi \{ p_{n,M}(T_{n,1}^{[i]}),\dots,p_{n,M}(T_{n,r}^{[i]}) \}.
\end{equation}
\item The global statistic is $W_{n,M}^{\scs [0]}$ and the corresponding approximate $p$-value is given by
\begin{equation}
\label{eq:pval_W}
p_{n,M}(W_{n,M}^{[0]}) =\frac{1}{M} \sum_{k=1}^M \1 \left( W_{n,M}^{[k]} \geq W_{n,M}^{[0]} \right).
\end{equation}
\end{enumerate}

Note that the quantities $p_{n,M}(T_{n,j}^{\scs [i]})$, $j \in \{1,\dots,r\}$, in Step~3 can be regarded as approximate p-values for the ``statistic values'' $T_{n,j}^{\scs [i]}$, $j \in \{1,\dots,r\}$. The offset by $1/2$ and the division by $M+1$ instead of $M$ in the formula is carried out to ensure that $p_{n,M}(T_{n,j}^{\scs [i]})$ belongs to the interval $(0,1)$ so that Step~4 is well-defined.

The next result, proved in Appendix~\ref{app:proofs}, provides conditions under which the global test based on $W_{n,M}^{\scs [0]}$ given by~\eqref{eq:WnMi} is asymptotically valid under the global null hypothesis $H_0 = H_0^{\scs (1)} \cap \dots \cap H_0^{\scs (r)}$ and the natural assumption that $M = M_n \to \infty$ as $n \to \infty$. Before proceeding, note that $W_{\scs n,M_n}^{\scs [0]}$ is a Monte Carlo approximation of the unobservable statistic 
\begin{equation}
\label{eq:Wn}
W_n = \psi\{\Pr(T_{n,1}^{\scs [1]} \geq T_{n,1} \mid \bm X_n),\dots,\Pr(T_{n,r}^{\scs [1]} \geq T_{n,r} \mid \bm X_n)\}.
\end{equation}

\begin{prop}
\label{prop:combined:general}
Let $M=M_n\to\infty$ as $n\to\infty$. Assume that $H_0 = H_0^{\scs (1)} \cap \dots \cap H_0^{\scs (r)}$ holds, that $\bm T_n$ converges weakly to $\bm T = (T_1,\dots,T_r)$, where $\bm T$ has a continuous d.f., and that either 
\begin{equation}
\label{eq:uncond:Tn}
(\bm T_n, \bm T_n^{[1]}, \bm T_n^{[2]}) \leadsto (\bm T, \bm T^{[1]}, \bm T^{[2]}),
\end{equation}
where $\bm T^{[1]}$ and $\bm T^{[2]}$ are independent copies of $\bm T$, or
\begin{equation}
\label{eq:cond:Tn}
\sup_{\bm x \in \R^r} | \Pr(\bm T_n^{[1]} \leq \bm x \mid \bm X_n) -  \Pr(\bm T_n \leq \bm x)|  \p 0.
\end{equation}
Then, for any $N \in \N$,
\begin{equation}
\label{eq:uncondM}
(W_{n,M_n}^{[0]}, W_{n,M_n}^{[1]},\dots, W_{n,M_n}^{[N]}) \leadsto (W,W^{[1]},\dots,W^{[N]}),
\end{equation}
where 
\begin{equation}
\label{eq:W}
W = \psi\{\bar F_{T_1}(T_1), \dots, \bar F_{T_r}(T_r) \}
\end{equation}
is the weak limit of $W_n$ in~\eqref{eq:Wn} with $\bar F_{T_j}(x) = \Pr(T_j \geq x)$, $x \in \R$, $j \in \{1,\dots,r\}$, and $W^{[1]},\dots,W^{[N]}$ are independent copies of $W$. Furthermore, if $\psi$ is chosen in such a way that the random variable $W$ has a continuous d.f., then
\begin{align}
\label{eq:condM}
&\sup_{x \in\R} | \Pr(W_{n,M_n}^{[1]} \le x \mid \bm X_n) - \Pr(W_n \le x) | \p 0, \\
\label{eq:MC}
&\sup_{x \in\R} \bigg| \frac{1}{M_n} \sum_{i=1}^{M_n} \1(W_{n,M_n}^{[i]} \le x)  - \Pr(W_n \le x) \bigg| \p 0,
\end{align}
and, as a consequence, $p_{n,M_{n}}(W_{\scs n,M_n}^{\scs [0]}) \leadsto \text{Uniform}(0,1)$, where $p_{n,M_{n}}(W_{\scs n,M_n}^{\scs [0]})$ is defined by~\eqref{eq:pval_W}.
\end{prop}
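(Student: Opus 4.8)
The plan is to prove the chain of assertions in the stated order, treating \eqref{eq:uncondM} as the heart of the matter and deducing the rest by now-standard bootstrap/Monte Carlo arguments. First I would reduce everything to the joint behaviour of the bootstrap replicates of $\bm T_n$. Introduce the ``empirical-survival'' map $\Psi$ that sends a finite collection of vectors $(\bm t^{[0]},\bm t^{[1]},\dots,\bm t^{[M]})$ in $(\R^r)^{M+1}$ to the collection $(W^{[0]},\dots,W^{[M]})$ obtained by (i) forming, for each coordinate $j$ and each index $i$, the rank-type p-value $p_{n,M}(t_j^{[i]})$ from \eqref{eq:pval_T}, and (ii) applying $\psi$ coordinatewise as in \eqref{eq:WnMi}. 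The key observation is that, because $\bm T$ has a continuous d.f., ties among $T_{1}^{[k]},\dots$ occur with probability zero, so on the support of the limit this map is a continuous functional of the order statistics; hence a continuous-mapping argument will transport weak convergence of the $\bm T_n^{[i]}$'s to weak convergence of the $W_{n,M_n}^{[i]}$'s, provided one first upgrades the two-replicate hypothesis \eqref{eq:uncond:Tn} (or the conditional version \eqref{eq:cond:Tn}) to convergence of arbitrarily many replicates.

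That upgrade is the first substantive step. Under \eqref{eq:cond:Tn}, conditionally on $\bm X_n$ the replicates $\bm T_n^{[1]},\bm T_n^{[2]},\dots$ are i.i.d.\ with law close (uniformly) to that of $\bm T_n$, which itself converges to $\bm T$; a routine argument with characteristic functions or with the bounded-Lipschitz metric, combined with a subsequence/almost-sure-representation trick (as in, e.g., Lemma~4.2 of \citealp{BucKoj16b} or the generic argument of \citealp{KosYu12}), yields, for every fixed $K$, $(\bm T_n,\bm T_n^{[1]},\dots,\bm T_n^{[K]})\leadsto(\bm T,\bm T^{[1]},\dots,\bm T^{[K]})$ with the $\bm T^{[k]}$ independent copies of $\bm T$; under the unconditional hypothesis \eqref{eq:uncond:Tn} one argues directly that the pairwise-independent limit forces, together with exchangeability of the replicates, the full independence for any $K$. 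Since $M_n\to\infty$, I would then let $M_n$ play the role of $K$ while also letting it grow, which requires a diagonal/Slutsky argument: fix $N$, write $W_{n,M_n}^{[i]}$ for $i\le N$, and show that replacing the finite empirical survival function $\frac1{M_n}\sum_{k=1}^{M_n}\1(T_{n,j}^{[k]}\ge\cdot)$ by the true limiting survival function $\bar F_{T_j}$ costs $o_P(1)$ uniformly, by a Glivenko–Cantelli/DKW-type bound applied conditionally. This is the step I expect to be the main obstacle, because one must control the $p_{n,M}(\cdot)$ defined via a \emph{self-referential} empirical c.d.f.\ (the replicates index both the points being ranked and the reference sample) uniformly over the growing index set, and simultaneously handle the $i=0$ term $\bm T_n$ whose law is only asymptotically that of $\bm T$; the continuity of the d.f.\ of $\bm T$ is exactly what rescues the argument, guaranteeing that the limiting empirical p-values $\bar F_{T_j}(T_j^{[i]})$ are genuinely Uniform$(0,1)$ and mutually independent across $i$.

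Granting \eqref{eq:uncondM}, the remaining conclusions follow quickly. For \eqref{eq:condM}, note that conditionally on $\bm X_n$ the variables $W_{n,M_n}^{[1]},W_{n,M_n}^{[2]},\dots$ are (exchangeable and asymptotically) i.i.d.\ with common conditional law converging weakly to that of $W$; \eqref{eq:uncondM} together with a Portmanteau/subsequence argument and the assumed continuity of the d.f.\ of $W$ converts the weak convergence into the uniform (Pólya-type) convergence of the conditional d.f.\ of $W_{n,M_n}^{[1]}$ to that of $W_n$ — here one also uses that $W_n$ itself converges weakly to $W$, which is part of \eqref{eq:uncondM} with $N=0$. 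Statement \eqref{eq:MC} is then the conditional Glivenko–Cantelli theorem: given $\bm X_n$, the empirical d.f.\ of the (conditionally i.i.d.) sample $W_{n,M_n}^{[1]},\dots,W_{n,M_n}^{[M_n]}$ converges uniformly to their common conditional d.f., which by \eqref{eq:condM} is uniformly close to that of $W_n$; the continuity of the limit d.f.\ upgrades pointwise to uniform, and $M_n\to\infty$ makes the sampling error vanish.

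Finally, for the p-value assertion, write $p_{n,M_n}(W_{n,M_n}^{[0]})=\frac1{M_n}\sum_{k=1}^{M_n}\1(W_{n,M_n}^{[k]}\ge W_{n,M_n}^{[0]})$ and combine the three displays: by \eqref{eq:MC} the empirical survival function evaluated at $W_{n,M_n}^{[0]}$ is, up to $o_P(1)$, equal to $\Pr(W_n\ge W_{n,M_n}^{[0]}\mid\bm X_n)$, hence by \eqref{eq:condM} to $\bar F_{W}(W_{n,M_n}^{[0]})+o_P(1)$ where $\bar F_W$ is the (continuous) survival function of $W$; since $W_{n,M_n}^{[0]}\leadsto W$ by \eqref{eq:uncondM}, the continuous mapping theorem gives $\bar F_W(W_{n,M_n}^{[0]})\leadsto \bar F_W(W)$, and by continuity of $F_W$ the random variable $\bar F_W(W)$ is exactly Uniform$(0,1)$. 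A standard Slutsky step assembles these into $p_{n,M_n}(W_{n,M_n}^{[0]})\leadsto\text{Uniform}(0,1)$, as claimed.
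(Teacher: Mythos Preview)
Your overall strategy coincides with the paper's: upgrade the two-replicate hypothesis to arbitrarily many replicates, replace the empirical survival functions by the limiting $\bar F_{T_j}$ via a uniform (Glivenko--Cantelli type) bound, apply the continuous mapping theorem to obtain \eqref{eq:uncondM}, and then deduce \eqref{eq:condM}, \eqref{eq:MC} and the uniform limit of the p-value by standard bootstrap arguments. The paper carries this out by repeated appeals to Lemma~2.2 of \cite{BucKoj18} (which gives the equivalence of \eqref{eq:uncond:Tn} and \eqref{eq:cond:Tn} and the extension to any number of replicates) and Problem~23.1 in \cite{Van98} (Polya's theorem).

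Two points deserve correction. First, your remark that under \eqref{eq:uncond:Tn} ``the pairwise-independent limit forces, together with exchangeability of the replicates, the full independence for any $K$'' is not a valid general implication (there exist exchangeable, pairwise-independent but not mutually independent triples). The paper does not argue this way; it simply invokes the equivalence with \eqref{eq:cond:Tn}, under which conditional i.i.d.\ structure gives the extension immediately. Second, and more substantively, in your derivation of \eqref{eq:MC} you call $W_{n,M_n}^{[1]},\dots,W_{n,M_n}^{[M_n]}$ ``conditionally i.i.d.'' given $\bm X_n$. They are not: each $W_{n,M_n}^{[i]}$ depends on \emph{all} replicates $\bm T_n^{[1]},\dots,\bm T_n^{[M_n]}$ through the common empirical survival function in \eqref{eq:pval_T}, so conditionally on $\bm X_n$ they are only exchangeable. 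The paper resolves this by introducing the intermediate quantities $W_n^{[i]}=\psi\{\bar F_{T_1}^*(T_{n,1}^{[i]}),\dots,\bar F_{T_r}^*(T_{n,r}^{[i]})\}$, built from the \emph{conditional} survival functions $\bar F_{T_j}^*(x)=\Pr(T_{n,j}^{[1]}\ge x\mid \bm X_n)$, which \emph{are} conditionally i.i.d.; it then shows $W_n^{[i]}-W_{n,M_n}^{[i]}=o_\Pr(1)$ and transfers the Glivenko--Cantelli conclusion via the elementary bound $|\1(a\le x)-\1(b\le x)|\le \1(|x-a|\le\eps)+\1(|a-b|>\eps)$. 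You effectively have this ingredient available from your earlier ``replace the empirical survival by $\bar F_{T_j}$'' step, but your sketch of \eqref{eq:MC} as written skips it.
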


It is worthwhile mentioning that, by Lemma~2.2 of \cite{BucKoj18} and the assumption of continuity for the d.f.\ of $\bm T$, the statements~\eqref{eq:uncond:Tn} and~\eqref{eq:cond:Tn} are actually equivalent in the setting under consideration. Notice also that the resulting unconditional bootstrap consistency statement in~\eqref{eq:uncondM} does not require $W$ in~\eqref{eq:W} to have a continuous d.f. Proving the latter might actually be quite complicated as shall be illustrated in a particular case in Section~\ref{sec:comb:cop:df}.

We end this section by providing a result, proved in Appendix~\ref{app:proofs}, that states conditions under which the global test based on $W_{n,M}^{\scs [0]}$ given by~\eqref{eq:WnMi} leads to the rejection of the global null hypothesis $H_0 = H_0^{\scs (1)} \cap \dots \cap H_0^{\scs (r)}$.

\begin{prop}
  \label{prop:combined:alternative}
  Let $M=M_n\to\infty$ as $n\to\infty$. Assume that
  \begin{enumerate}[(i)]

  \item the combining function $\psi$ is of the form 
    \[
      \psi(p_1, \dots, p_r) = \sum_{j=1}^r w_j \varphi(p_j),
    \]
    where $\varphi$ is decreasing, non-negative and one-to-one from $(0,1)$ to $(0,\infty)$,
    
  \item there exists $j_0 \in \{1,\dots,r\}$ such that the null hypothesis $H_0^{\scs (j_0)}$ of $j_0$th test $T_{n,j_0}$ does not hold and $\Pr(T_{n,j_0}^{\scs [1]} \ge T_{n,j_0})$ converges to zero,

  \item for any $j \in \{1,\dots,r\}$, the sample of bootstrap replicates $T_{n,j}^{\scs [1]}, \dots, T_{n,j}^{\scs [M_n]}$ does not contain ties.
  \end{enumerate}
  Then, the approximate p-value  $p_{n,M_n}(W_{\scs n,M_n}^{\scs [0]})$ of the global test converges to zero in probability, where $p_{n,M_n}(W_{\scs n,M_n}^{\scs [0]})$ is defined by~\eqref{eq:pval_W}.
\end{prop}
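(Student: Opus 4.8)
The plan is to show that the observed global statistic $W_{\scs n,M_n}^{\scs [0]}$ is driven to $+\infty$ by its $j_0$th coordinate, while assumption~(iii) forces all but a vanishing fraction of the bootstrap replicates $W_{\scs n,M_n}^{\scs [k]}$ to remain bounded; feeding these two facts into the defining formula~\eqref{eq:pval_W} then pins the approximate $p$-value to zero. Throughout I would abbreviate $q_j^{\scs [i]} = p_{n,M_n}(T_{n,j}^{\scs [i]})$ and $w_+ = \sum_{j=1}^r w_j$, and I would use the elementary consequence of assumption~(i) that a decreasing one-to-one map $\varphi$ from $(0,1)$ onto $(0,\infty)$ satisfies $\varphi(p) \to \infty$ as $p \to 0^+$.

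First I would establish that $W_{\scs n,M_n}^{\scs [0]} \p +\infty$. Since the $\bm V_n^{\scs [k]}$ are i.i.d., the replicates $T_{n,j_0}^{\scs [1]},\dots,T_{n,j_0}^{\scs [M_n]}$ are exchangeable, so each indicator $\1(T_{n,j_0}^{\scs [k]} \ge T_{n,j_0})$ has expectation $\Pr(T_{n,j_0}^{\scs [1]} \ge T_{n,j_0})$, which tends to $0$ by assumption~(ii); a Markov bound on the non-negative average then gives $M_n^{-1}\sum_{k=1}^{M_n}\1(T_{n,j_0}^{\scs [k]} \ge T_{n,j_0}) \p 0$, hence $q_{j_0}^{\scs [0]} \p 0$ by~\eqref{eq:pval_T}, and therefore $\varphi(q_{j_0}^{\scs [0]}) \p +\infty$. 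Because the remaining $r-1$ terms of~\eqref{eq:WnMi} are non-negative, $W_{\scs n,M_n}^{\scs [0]} \ge w_{j_0}\,\varphi(q_{j_0}^{\scs [0]}) \p +\infty$.

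Next I would exploit assumption~(iii) combinatorially. For a fixed index $j$, the no-ties condition implies that, as $i$ runs over $\{1,\dots,M_n\}$, the count $\sum_k \1(T_{n,j}^{\scs [k]} \ge T_{n,j}^{\scs [i]})$ takes each value in $\{1,\dots,M_n\}$ exactly once, so by~\eqref{eq:pval_T} the multiset $\{q_j^{\scs [1]},\dots,q_j^{\scs [M_n]}\}$ equals $\{(\ell+\tfrac12)/(M_n+1):\ell=1,\dots,M_n\}$. Consequently, for any threshold $\delta \in (0,1)$, at most $(M_n+1)\delta$ of the $q_j^{\scs [k]}$ lie below $\delta$, and hence the ``bad set'' $\mathcal B_\delta = \{k \in \{1,\dots,M_n\} : \min_{1\le j\le r} q_j^{\scs [k]} < \delta\}$ has cardinality at most $r(M_n+1)\delta$ — a \emph{deterministic} bound. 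Off this set every coordinate satisfies $q_j^{\scs [k]} \ge \delta$, so monotonicity of $\varphi$ gives $W_{\scs n,M_n}^{\scs [k]} \le \varphi(\delta)\,w_+$ for all $k \notin \mathcal B_\delta$.

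Finally I would combine the two parts: splitting the sum in~\eqref{eq:pval_W} according to whether $k \in \mathcal B_\delta$ yields, for every $\delta \in (0,1)$, $p_{n,M_n}(W_{\scs n,M_n}^{\scs [0]}) \le r(M_n+1)\delta/M_n + \1(W_{\scs n,M_n}^{\scs [0]} \le \varphi(\delta)\,w_+)$. Given $\eps > 0$, one first chooses $\delta$ small enough that the deterministic term is below $\eps/2$ for all large $n$ (using $(M_n+1)/M_n \to 1$), and then, with $\delta$ fixed, uses the divergence established in the first step to make $\Pr(W_{\scs n,M_n}^{\scs [0]} \le \varphi(\delta)\,w_+) \to 0$; this gives $p_{n,M_n}(W_{\scs n,M_n}^{\scs [0]}) \p 0$. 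I expect the only genuinely delicate point to be this combinatorial step: one must resist trying to control \emph{every} bootstrap replicate of the global statistic — some of them really can be large, namely those for which one individual bootstrap $p$-value is near zero — and instead isolate the small, deterministically bounded bad set via the no-ties hypothesis, after which everything is routine. A minor additional care point is the passage between the unconditional statement in~(ii) and the average of indicators, which is handled by exchangeability of the replicates rather than by conditioning on $\bm X_n$.
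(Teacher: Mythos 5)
Your proof is correct and follows essentially the same route as the paper's: Markov's inequality turns assumption~(ii) into $p_{n,M_n}(T_{n,j_0}^{[0]})\p 0$, non-negativity of $\varphi$ lower-bounds the global statistic by its $j_0$th term, and the no-ties condition makes the bootstrap p-values a deterministic uniform grid so that a union bound over $j$ controls the fraction of replicates with any small coordinate p-value. The only difference is bookkeeping: the paper carries the random threshold $\varphi^{-1}\{w_{j_0}\varphi(p_{n,M_n}(T_{n,j_0}^{[0]}))/(r\bar w)\}$ through an explicit chain of inequalities, whereas you fix a threshold $\delta$, isolate a deterministically small ``bad set'', and finish with an $\eps$--$\delta$ argument — both are valid and interchangeable.
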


Let us comment on the assumptions of the previous proposition. Assumption~$(i)$ is satisfied by the function $\psi_F$ defined by~\eqref{eq:fisher} but not by the function $\psi_S$ defined by~\eqref{eq:stouffer}. A result similar to Proposition~\ref{prop:combined:alternative}, which can be used to handle the function~$\psi_S$, is stated and proved in the supplementary material. Assumption~$(ii)$ can for instance be shown to hold under the hypothesis of one change in the contemporary d.f.\ of a time series when $T_{n,j_0}$ is a test statistic such as the one to be defined in Section~\ref{sec:dftest}, the observations are i.i.d., and the underlying resampling mechanism is a particular multiplier bootstrap. Specifically, in that case,  one can rely on Theorem~3 of \cite{HolKojQue13} to show that, under the hypothesis of one change in the contemporary d.f., $T_{n,j_0}$ diverges to infinity in probability while $T_{n,j_0}^{\scs [1]}$ is bounded in probability, implying that $T_{n,j_0}^{\scs [1]} - T_{n,j_0}$ diverges to $-\infty$ in probability, and thus that $\Pr(T_{n,j_0}^{\scs [1]} \ge T_{n,j_0})$ converges to zero. Finally, assumption~$(iii)$ appears empirically to be satisfied for most bootstrap-based tests for time series of continuous random variables.

\section{A rank-based combined test sensitive to departures from $H_{0}^{(h)}$}
\label{sec:rank}

The aim of this section is to use the results of the previous section to derive a global test of stationarity by combining a test that is particularly sensitive to departures from $H_0^{\scs (1)}$ in~\eqref{eq:H0:1} with a test that is particularly sensitive to departures from $H_{0,c}^{\scs (h)}$ in~\eqref{eq:H0:Ch}. We start by describing the latter test and provide conditions under which it is asymptotically valid under stationarity. The available data, denoted generically by $\bm X_n$ in Section~\ref{sec:combine:tests}, take here, as in the introduction, the form of a stretch $X_1,\dots,X_{n+h-1}$ from a univariate time series, where $h$ is the chosen embedding dimension and where each $X_i$ is assumed to have  a continuous d.f.

\subsection{A  copula-based test sensitive to changes in the serial dependence}

The test that we consider has the potential of being sensitive to all types of changes in the serial dependence up to lag $h-1$. Under $H_0^{\scs (h)}$ in~\eqref{eq:H0:Fh}, this serial dependence is completely characterized by the (auto)copula $C^{(h)}$ in~\eqref{eq:H0:Ch}. It is then natural to base the test on \emph{empirical (auto)copulas} \citep[see, e.g.,][]{Deh79,Deh81} calculated from portions of the data. For any $1 \leq k \leq l \leq n$, let
\begin{equation}
\label{eq:Cklh}
C_{k:l}^{(h)}(\bm u) = \frac{1}{l-k+1} \sum_{i=k}^l \prod_{j=1}^h \1\{ G_{k:l}(X_{i+j-1}) \leq u_j \}, \qquad \bm u \in [0,1]^h,
\end{equation}
where
\begin{equation}
\label{eq:Gkl}
G_{k:l}(x) = \frac{1}{l+h-k} \sum_{j=k}^{l+h-1} \1(X_j \leq x), \qquad x \in \R,
\end{equation}
with the convention that $C_{k:l}^{(h)} = 0$ if $k > l$. The quantity $C_{k:l}^{\scs (h)}$ is a non-parametric estimator of $C^{(h)}$ based on $\bm Y_k^{\scs (h)},\dots,\bm Y_l^{\scs (h)}$ that, as already mentioned, we shall call the \emph{lag $h-1$ empirical autocopula}. The latter was for instance used in \cite{GenRem04} for testing serial independence. It can be verified that it is a straightforward transposition of one of the usual definitions of the empirical copula  (when computed from a subsample) to the serial context under consideration.

\subsubsection{Test statistic}

The CUSUM statistic that we consider is
\begin{equation}
\label{eq:SnCh}
S_{n,C^{(h)}} = \sup_{s \in [0,1]} \int_{[0,1]^h} \left\{ \D_{n,C^{(h)}}(s,\bm{u}) \right\}^2 \dd C_{1:n}^{(h)} (\bm u) = \max_{1 \leq k \leq n-1} \int_{[0,1]^h} \left\{ \D_{n,C^{(h)}}(k/n,\bm{u}) \right\}^2 \dd C_{1:n}^{(h)} (\bm u),
\end{equation}
where, as mentioned earlier, $\ip{.}$ is the floor function,
\begin{equation}
\label{eq:Dnh}
\D_{n,C^{(h)}}(s,\bm{u}) = \sqrt{n} \lambda_n(0,s) \lambda_n(s,1) \left\{ C_{1:\ip{ns}}^{(h)}(\bm u) - C_{\ip{ns}+1:n}^{(h)} (\bm u) \right\}, \qquad (s, \bm u) \in [0,1]^{h+1},
\end{equation}
and $\lambda_n(s,t) = (\ip{nt}-\ip{ns}) / n$,  $(s,t) \in \Delta = \{ (s,t) \in [0,1]^2: s \le t\}$.

Under $H_0^{\scs (h)}$ in~\eqref{eq:H0:Fh}, the difference between $C_{\scs 1:k}^{\scs (h)}$ and $C_{\scs k+1:n}^{\scs (h)}$ should be small for all $k \in \{1,\dots,n-1\}$, resulting in small values of $S_{n,C^{(h)}}$. At the opposite, large values of $S_{n,C^{(h)}}$ provide evidence of non-stationarity. The coefficient $\sqrt{n} \lambda_n(0,s) \lambda_n(s,1)$ in~\eqref{eq:Dnh} is the classical normalizing term in the CUSUM approach. It ensures that, under suitable conditions, $S_{n,C^{(h)}}$ converges in distribution under the null hypothesis of stationarity. Analogously to what was explained in the introduction, the test based on $S_{n,C^{(h)}}$ should in general not be used to reject $H_{\scs 0,c}^{\scs (h)}$ in~\eqref{eq:H0:Ch}: It is merely a test of stationarity that is particularly sensitive to a change in the lag $h-1$ autocopula.

\subsubsection{Limiting null distribution}

The limiting null distribution of $S_{n,C^{(h)}}$ turns out to be a corollary of a recent result by \cite{BucKoj16} and \cite{BucKojRohSeg14}. Under $H_0^{\scs (h)}$ in~\eqref{eq:H0:Fh}, it can be verified that $\D_{n,C^{(h)}}$ in \eqref{eq:Dnh} can be written as
\begin{equation}
\label{eq:DnhH0}
  \D_{n,C^{(h)}}(s,\bm{u})   = \lambda_n(s,1) \, \Cb_{n,C^{(h)}}(0,s,\bm{u}) - \lambda_n(0,s) \, \Cb_{n,C^{(h)}}(s,1,\bm{u}), \qquad (s,\bm{u}) \in [0,1]^{h+1},
\end{equation}
where
\begin{equation}
  \label{eq:Cnh}
  \Cb_{n,C^{(h)}}(s, t, \bm{u}) = \sqrt{n} \, \lambda_n(s, t) \, \{ C_{\ip{ns}+1: \ip{nt}}^{(h)}(\bm{u}) - C^{(h)}(\bm{u}) \}, \qquad (s, t, \bm{u}) \in \Delta \times [0, 1]^h.
\end{equation}
Hence, the null weak limit of the empirical process $\D_{n,C^{(h)}}$ follows from that of $\Cb_{n,C^{(h)}}$, which we shall call the \emph{sequential empirical autocopula process}.

The following usual condition on the partial derivatives of $C^{(h)}$ \citep[see][]{Seg12} is considered as we continue.

\begin{cond}
\label{cond:pd}
For any $j \in \{1,\dots,h\}$, the partial derivative $\dot C_{\scs j}^{\scs (h)} = \partial C^{\scs (h)}/\partial u_j$ exists and is continuous on $V_{\scs j}^{\scs (h)} = \{ \bm{u} \in [0, 1]^h : u_j \in (0,1) \}$.
\end{cond}

Condition~\ref{cond:pd} is nonrestrictive in the sense that it is necessary so that the candidate weak limit of $\Cb_{n,C^{(h)}}$ exists pointwise and has continuous sample paths. In the sequel, following \cite{BucVol13}, for any $j \in \{1,\dots,h\}$, we define $\dot C_{\scs j}^{\scs (h)}$ to be zero on the set $\{ \bm{u} \in [0, 1]^h : u_j \in \{0,1\} \}$. Also, as we continue, for any $j \in \{1,\dots,h\}$ and any $\bm{u} \in [0, 1]^h$, $\bm{u}^{(j)}$ will stand for the vector of $[0, 1]^h$ defined by $u^{\scs (j)}_i = u_j$ if $i = j$ and 1 otherwise.

The null weak limit of $\Cb_{n,C^{(h)}}$ follows in turn from that of the sequential serial empirical process
\begin{equation}
\label{eq:seqep}
  \B_{n,C^{(h)}}(s, t, \bm{u})
  = \frac{1}{\sqrt{n}} \sum_{i=\ip{ns}+1}^{\ip{nt}} \bigg[  \prod_{j=1}^h \1\{ G(X_{i+j-1}) \leq u_j \} - C^{(h)}(\bm{u}) \bigg], \qquad (s, t,\bm{u}) \in \Delta \times [0, 1]^h,
\end{equation}
with the convention that $\B_{n,C^{(h)}}(s, t, \cdot) = 0$ if $\ip{nt} - \ip{ns} = 0$.

The following result, stating the weak limit of $\Cb_{n,C^{(h)}}$ and proved in Appendix~\ref{app:proofs}, is a consequence of the results of \cite{BucKoj16} and \cite{BucKojRohSeg14}. It considers $X_1,\dots,X_{n+h-1}$ as a stretch from a \emph{strongly mixing sequence}. For a sequence of random variables $(Z_i)_{i \in \Z}$, the $\sigma$-field generated by $(Z_i)_{a \leq i \leq b}$, $a, b \in \Z \cup \{-\infty,+\infty \}$, is denoted by $\FF_a^b$. The strong mixing coefficients corresponding to the sequence $(Z_i)_{i \in \Z}$ are then defined by $\alpha_0^Z = 1/2$,
\begin{equation}
\label{eq:alpha}
\alpha_r^Z = \sup_{p \in \Z} \sup_{A \in \FF_{-\infty}^p,B\in \FF_{p+r}^{+\infty}} \big| \Pr(A \cap B) - \Pr(A) \Pr(B) \big|, \qquad r \in \N, \, r > 0.
\end{equation}
The sequence $(Z_i)_{i \in \Z}$ is said to be \emph{strongly mixing} if $\alpha_r^Z \to 0$ as $r \to \infty$.

\begin{prop}
\label{prop:weak_Cnh_sm}
Let $X_1,\dots,X_{n+h-1}$ be drawn from a strictly stationary sequence $(X_i)_{i \in \Z}$ of continuous random variables whose strong mixing coefficients satisfy $\alpha_r^X = O(r^{-a})$ for some $a > 1$ as $r \to \infty$. Then, provided Condition~\ref{cond:pd} holds,
\begin{equation*}
  \sup_{(s, t,\bm{u})\in \Delta \times [0, 1]^h} \Big| \Cb_{n,C^{(h)}}(s, t, \bm{u}) - \B_{n,C^{(h)}}(s, t, \bm{u}) + \sum_{j=1}^h \dot C_j^{(h)}(\bm{u}) \, \B_{n,C^{(h)}}(s, t, \bm{u}^{(j)}) \Big| \p 0.
\end{equation*}
Consequently, $\Cb_{n,C^{(h)}} \leadsto \Cb_{C^{(h)}}$ in $\ell^\infty(\Delta \times [0, 1]^h)$, where, for any $(s, t, \bm{u}) \in \Delta \times [0, 1]^h$,
\begin{equation}
\label{eq:CbCh}
\Cb_{C^{(h)}}(s, t, \bm{u}) = \B_{C^{(h)}}(s, t, \bm{u}) - \sum_{j=1}^h \dot C_j^{(h)}(\bm{u}) \, \B_{C^{(h)}}(s, t, \bm{u}^{(j)}),
\end{equation}
and $\B_{C^{(h)}}$ in $\ell^\infty(\Delta \times [0, 1]^h)$, a tight centered Gaussian process, is the weak limit of $\B_{n,C^{(h)}}$ in~\eqref{eq:seqep}.
\end{prop}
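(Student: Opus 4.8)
The plan is to deduce everything from the known asymptotics of the \emph{sequential empirical copula process} of a strongly mixing multivariate time series, applied to the overlapping block sequence $\bm Y_i^{(h)}$. First I would reduce to uniform observations: the empirical autocopula $C_{k:l}^{(h)}$ in~\eqref{eq:Cklh} is invariant under strictly increasing componentwise transformations of the $X_i$, so, replacing $X_i$ by $G(X_i)$, I may assume that $G$ is the identity on $[0,1]$ and that the $X_i$ are uniform on $[0,1]$; then $C^{(h)}$ is simply the common d.f.\ of the vectors $\bm Y_i^{(h)}$ in~\eqref{eq:Yi}. Strict stationarity of $(X_i)_{i\in\Z}$ carries over to $(\bm Y_i^{(h)})_{i\in\Z}$, and, because $\bm Y_i^{(h)}$ is a function of $(X_i,\dots,X_{i+h-1})$, the strong mixing coefficients satisfy $\alpha_r^{\bm Y^{(h)}}\le\alpha_{(r-h+1)\vee 0}^X$, whence $\alpha_r^{\bm Y^{(h)}}=O(r^{-a})$ with the same $a>1$. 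Under this reduction $\B_{n,C^{(h)}}$ in~\eqref{eq:seqep} becomes exactly the sequential empirical process of $(\bm Y_i^{(h)})$ evaluated at its own distribution function, and I would invoke \cite{BucKoj16} to get $\B_{n,C^{(h)}}\leadsto\B_{C^{(h)}}$ in $\ell^\infty(\Delta\times[0,1]^h)$ for a tight centered Gaussian limit $\B_{C^{(h)}}$.

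The crux is the uniform stochastic linearization of $\Cb_{n,C^{(h)}}$. Writing $k=\ip{ns}+1$, $l=\ip{nt}$ and letting $\tilde C_{k:l}^{(h)}$ be the ``oracle'' autocopula obtained from~\eqref{eq:Cklh} with $G_{k:l}$ replaced by the identity, one has $\sqrt n\,\lambda_n(s,t)\{\tilde C_{k:l}^{(h)}(\bm u)-C^{(h)}(\bm u)\}=\B_{n,C^{(h)}}(s,t,\bm u)$ identically, so the only stochastic expansion to perform concerns the effect of estimating the margin by $G_{k:l}$. Following the now-standard empirical-copula arguments of \cite{Seg12} and \cite{BucVol13}, transposed to the sequential, strongly mixing and serial setting as in \cite{BucKojRohSeg14} and \cite{BucKoj16}, I would establish, uniformly in $(s,t,\bm u)\in\Delta\times[0,1]^h$,
\[
\Cb_{n,C^{(h)}}(s,t,\bm u)=\B_{n,C^{(h)}}(s,t,\bm u)-\sum_{j=1}^h\dot C_j^{(h)}(\bm u)\,\sqrt n\,\lambda_n(s,t)\bigl\{G_{\ip{ns}+1:\ip{nt}}(u_j)-u_j\bigr\}+o_{\Pr}(1).
\]
Condition~\ref{cond:pd} is precisely what controls the partial-derivative terms near the faces of the cube, and the delicate point is to make the remainder uniformly small \emph{including over short blocks} $(s,t)$, which is handled via maximal inequalities for the sequential process together with the convention $\B_{n,C^{(h)}}(s,t,\cdot)=0$ when $\ip{nt}-\ip{ns}=0$. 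I expect this step to be the main obstacle: it is where all the real work sits; since it is available in the cited references for exactly this type of process, in the write-up I would essentially quote it, checking only that the serial estimator $G_{k:l}$ of~\eqref{eq:Gkl} (a single, pooled marginal empirical d.f.) plays the role of the $h$ coordinatewise empirical d.f.s.

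It then remains to identify the margin correction. Since $\bm u^{(j)}$ has $j$th coordinate $u_j$ and all other coordinates equal to $1$, a direct computation from~\eqref{eq:seqep} gives $\B_{n,C^{(h)}}(s,t,\bm u^{(j)})=n^{-1/2}\sum_{i=\ip{ns}+1}^{\ip{nt}}\{\1(X_{i+j-1}\le u_j)-u_j\}$, while $\sqrt n\,\lambda_n(s,t)\{G_{\ip{ns}+1:\ip{nt}}(u_j)-u_j\}$ equals the same sum up to the normalization ratio $(l-k+1)/(l+h-k)\to1$ and at most $2(h-1)$ boundary summands bounded by $1$; both discrepancies are $O(h/\sqrt n)$ uniformly in $(s,t,\bm u)$, hence $o_{\Pr}(1)$. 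Substituting this into the displayed expansion yields the first assertion. Finally, the map $f\mapsto\bigl((s,t,\bm u)\mapsto f(s,t,\bm u)-\sum_{j=1}^h\dot C_j^{(h)}(\bm u)\,f(s,t,\bm u^{(j)})\bigr)$ is continuous from $\ell^\infty(\Delta\times[0,1]^h)$ into itself (the $\dot C_j^{(h)}$ are bounded by Condition~\ref{cond:pd} and $\bm u\mapsto\bm u^{(j)}$ is continuous), it sends $\B_{C^{(h)}}$ to $\Cb_{C^{(h)}}$ as in~\eqref{eq:CbCh}, and combining it with $\B_{n,C^{(h)}}\leadsto\B_{C^{(h)}}$ and the $o_{\Pr}(1)$ expansion gives $\Cb_{n,C^{(h)}}\leadsto\Cb_{C^{(h)}}$ by the continuous mapping theorem.
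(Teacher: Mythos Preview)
Your proposal is correct and follows essentially the same route as the paper: transfer the strong mixing rate from $(X_i)$ to $(\bm Y_i^{(h)})$ via $\alpha_r^{\bm Y}\le\alpha_{(r-h+1)\vee 0}^X$, and then invoke the sequential empirical copula process results of \cite{BucKojRohSeg14} and \cite{BucKoj16} for the strongly mixing sequence $(\bm Y_i^{(h)})$. The paper's proof is a one-line citation of Proposition~3.3 in \cite{BucKojRohSeg14} together with the mixing-coefficient identity; you unpack the mechanics and, usefully, make explicit the one point the paper glosses over---that the serial empirical autocopula uses a single pooled marginal $G_{k:l}$ from~\eqref{eq:Gkl} rather than $h$ coordinatewise empirical d.f.s---and your $O(h/\sqrt n)$ comparison between $\B_{n,C^{(h)}}(s,t,\bm u^{(j)})$ and $\sqrt n\,\lambda_n(s,t)\{G_{\ip{ns}+1:\ip{nt}}(u_j)-u_j\}$ is the right way to close that gap.
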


Since they are not necessary for the subsequent derivations, the expressions of the covariances of $\B_{C^{(h)}}$  and $\Cb_{C^{(h)}}$ are not provided. The latter can however be deduced from the above mentioned references.

The next result, proved in Appendix~\ref{app:proofs}, and partly a simple consequence of the previous proposition and the continuous mapping theorem, gives the limiting distribution of $S_{n,C^{(h)}}$ under the null hypothesis of stationarity.

\begin{prop} \label{prop:acs}
Under the conditions of Proposition~\ref{prop:weak_Cnh_sm}, $\D_{n,C^{(h)}} \leadsto \D_{C^{(h)}}$ in $\ell^\infty([0,1]^{h+1})$, where, for any $(s,\bm{u}) \in [0,1]^{h+1}$,
\begin{equation}
  \label{eq:DCh}
  \D_{C^{(h)}}(s,\bm{u}) =  \Cb_{C^{(h)}}(0,s,\bm{u}) - s \, \Cb_{C^{(h)}}(0,1,\bm{u}),
\end{equation}
and $\Cb_{C^{(h)}}$ is defined by~\eqref{eq:CbCh}. As a consequence, we get
\begin{equation}
\label{eq:SCh}
  S_{n,C^{(h)}} \leadsto S_{C^{(h)}} = \sup_{s \in [0,1]} \int_{[0, 1]^h} \{ \D_{C^{(h)}}(s,\bm{u}) \}^2 \, \dd C^{(h)}(\bm{u}).
\end{equation}
Moreover, the distribution of $S_{C^{(h)}}$ is absolutely continuous with respect to the Lebesgue measure.
\end{prop}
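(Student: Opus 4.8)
The plan is to prove the three assertions in order: (a) $\D_{n,C^{(h)}}\leadsto\D_{C^{(h)}}$ in $\ell^\infty([0,1]^{h+1})$, (b) $S_{n,C^{(h)}}\leadsto S_{C^{(h)}}$, and (c) the absolute continuity of the law of $S_{C^{(h)}}$; the first two are continuous-mapping arguments on top of Proposition~\ref{prop:weak_Cnh_sm}, and the substance lies in~(c). For~(a), starting from the representation~\eqref{eq:DnhH0} valid under $H_0^{(h)}$, I would write $\D_{n,C^{(h)}}=\Psi(\Cb_{n,C^{(h)}})+R_n$, where $\Psi:\ell^\infty(\Delta\times[0,1]^h)\to\ell^\infty([0,1]^{h+1})$ is the bounded linear map $\Psi(f)(s,\bm u)=(1-s)f(0,s,\bm u)-s\,f(s,1,\bm u)$ and $R_n(s,\bm u)=\{\lambda_n(s,1)-(1-s)\}\Cb_{n,C^{(h)}}(0,s,\bm u)-\{\lambda_n(0,s)-s\}\Cb_{n,C^{(h)}}(s,1,\bm u)$. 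Since $\sup_s|\lambda_n(s,1)-(1-s)|$ and $\sup_s|\lambda_n(0,s)-s|$ are both at most $1/n$, while $\Cb_{n,C^{(h)}}$ is bounded in probability in $\ell^\infty(\Delta\times[0,1]^h)$ by asymptotic tightness, one gets $\|R_n\|_\infty\p 0$; Proposition~\ref{prop:weak_Cnh_sm}, the continuous mapping theorem applied to $\Psi$, and Slutsky's lemma then give $\D_{n,C^{(h)}}\leadsto\Psi(\Cb_{C^{(h)}})$. It remains to identify $\Psi(\Cb_{C^{(h)}})$ with $\D_{C^{(h)}}$ in~\eqref{eq:DCh}, which follows from the additivity $\Cb_{C^{(h)}}(0,s,\bm u)+\Cb_{C^{(h)}}(s,1,\bm u)=\Cb_{C^{(h)}}(0,1,\bm u)$ that the limit inherits, via~\eqref{eq:CbCh}, from the exact concatenation identity $\B_{n,C^{(h)}}(0,s,\cdot)+\B_{n,C^{(h)}}(s,1,\cdot)=\B_{n,C^{(h)}}(0,1,\cdot)$.

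For~(b), note that $S_{n,C^{(h)}}$ and $S_{C^{(h)}}$ are obtained by applying the map $\Theta(f,\mu)=\sup_{s\in[0,1]}\int_{[0,1]^h}f(s,\bm u)^2\,\dd\mu(\bm u)$ to the pairs $(\D_{n,C^{(h)}},C_{1:n}^{(h)})$ and $(\D_{C^{(h)}},C^{(h)})$, respectively. Because $\|C_{1:n}^{(h)}-C^{(h)}\|_\infty\p 0$, the second coordinate is asymptotically degenerate, so $(\D_{n,C^{(h)}},C_{1:n}^{(h)})\leadsto(\D_{C^{(h)}},C^{(h)})$ follows from~(a); by the continuous mapping theorem it then suffices to check that $\Theta$ is continuous at every $(f,C^{(h)})$ with $f$ continuous on $[0,1]^{h+1}$, which holds for $f=\D_{C^{(h)}}$ almost surely (Condition~\ref{cond:pd} and Proposition~\ref{prop:weak_Cnh_sm}). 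For $f_n\to f$ uniformly and $\mu_n\to C^{(h)}$ uniformly as d.f.s, one bounds $|\Theta(f_n,\mu_n)-\Theta(f,C^{(h)})|$ by $\sup_s\int|f_n^2-f^2|\,\dd\mu_n$ plus $\sup_s\big|\int f(s,\cdot)^2\,\dd(\mu_n-C^{(h)})\big|$: the first term is at most $\|f_n-f\|_\infty(\|f_n\|_\infty+\|f\|_\infty)\to0$, and the second tends to $0$ because uniform convergence of the d.f.s $\mu_n$ entails weak convergence of the associated probability measures, while $\{f(s,\cdot)^2:s\in[0,1]\}$ is a uniformly bounded, equicontinuous family.

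For~(c), the key observation is that $\sqrt{S_{C^{(h)}}}=\sup_{s\in[0,1]}\|\D_{C^{(h)}}(s,\cdot)\|_{L^2(C^{(h)})}$ is a continuous seminorm evaluated at $\D_{C^{(h)}}$, which, having almost surely continuous sample paths, is a centered Gaussian random element of the separable Banach space $C([0,1]^{h+1})$; equivalently, $s\mapsto\D_{C^{(h)}}(s,\cdot)$ is a continuous $L^2(C^{(h)})$-valued Gaussian process and $\sqrt{S_{C^{(h)}}}$ is the norm of the associated element of $C([0,1];L^2(C^{(h)}))$. One can then invoke the classical result of Tsirel'son on the distribution of convex functionals, in particular seminorms, of Gaussian random elements of separable Banach spaces: the law of $\sqrt{S_{C^{(h)}}}$ is absolutely continuous with respect to the Lebesgue measure, apart from a possible atom at its essential infimum. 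That essential infimum is $0$, since $0$ lies in the support of every centered Gaussian measure and the seminorm is continuous; and the atom at $0$ has mass $\Pr\big(\D_{C^{(h)}}\in\{f:\sup_s\|f(s,\cdot)\|_{L^2(C^{(h)})}=0\}\big)\in\{0,1\}$ by the Gaussian zero--one law applied to this measurable linear subspace, and this mass is $0$ because $\D_{C^{(h)}}$ is non-degenerate, its covariance---obtainable from~\eqref{eq:CbCh} and the references cited in Proposition~\ref{prop:weak_Cnh_sm}---not being identically zero. Hence $\sqrt{S_{C^{(h)}}}$, and therefore $S_{C^{(h)}}$, has an absolutely continuous distribution on $\R$.

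I expect part~(c) to be the main obstacle: one must recognise $S_{C^{(h)}}$ as the square of a seminorm of a Gaussian random element in a suitable separable space, apply the Gaussian absolute-continuity theorem in the version that covers suprema rather than a single linear functional, and dispose of the potential atom at the bottom of the support via the zero--one law, the only step requiring a return to the explicit covariance structure being the verification that $\D_{C^{(h)}}$ is genuinely non-degenerate. Parts~(a) and~(b) are, by contrast, routine modulo the bookkeeping with the coefficients $\lambda_n$ and with the random integrator $C_{1:n}^{(h)}$.
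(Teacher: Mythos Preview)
Your proposal is correct and follows essentially the same route as the paper: parts~(a) and~(b) are handled via the continuous mapping theorem on top of Proposition~\ref{prop:weak_Cnh_sm} (the paper simply asserts this without the detail you supply on the $\lambda_n$ remainders and the random integrator), and for~(c) both you and the paper recognise $\sqrt{S_{C^{(h)}}}$ as a continuous convex functional of the Gaussian element $\D_{C^{(h)}}\in\mathcal C([0,1]^{h+1})$ and invoke the Davydov--Lifshits/Tsirel'son absolute-continuity theorem, leaving only a possible atom at the bottom of the support to be excluded.

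The one place where the arguments differ slightly is in disposing of that atom. You appeal to the Gaussian zero--one law for the closed linear subspace $\{f:\sup_s\|f(s,\cdot)\|_{L^2(C^{(h)})}=0\}$ and then to ``non-degeneracy'' of $\D_{C^{(h)}}$; the paper instead picks a concrete $\bm u^*$ in the support of $C^{(h)}$ with $\Var\{\Cb_{C^{(h)}}(0,1,\bm u^*)\}>0$ and computes $\Var\{\D_{C^{(h)}}(1/2,\bm u^*)\}=\tfrac14\Var\{\Cb_{C^{(h)}}(0,1,\bm u^*)\}>0$ directly, using that $\Cb_{C^{(h)}}(0,1/2,\bm u^*)$ and $\Cb_{C^{(h)}}(1/2,1,\bm u^*)$ are uncorrelated with equal variance. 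Your zero--one argument is fine, but note that what you really need is non-degeneracy \emph{on the support of $C^{(h)}$} (since membership in the kernel only constrains $f$ there); the paper's explicit computation makes this point concrete, and you should tighten your final sentence accordingly.
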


\subsubsection{Bootstrap and computation of approximate p-values}

The null weak limit of $S_{n,C^{(h)}}$ in~\eqref{eq:SCh} is unfortunately untractable. Starting from Proposition~\ref{prop:weak_Cnh_sm} and adapting the approach of \cite{BucKoj16} and \cite{BucKojRohSeg14}, we propose to base the computation of approximate p-values for $S_{n,C^{(h)}}$ on \emph{multiplier} resampling versions of $\Cb_{n,C^{(h)}}$ in~\eqref{eq:Cnh}. For any $m \in \N$ and any $(s,t,\bm{u}) \in \Delta \times [0,1]^h$, let
\begin{equation}
\label{eq:hatCbnm}
  \hat{\Cb}_{n,C^{(h)}}^{[m]}(s, t, \bm{u}) = \hat{\B}_{n,C^{(h)}}^{[m]}(s, t, \bm{u}) -  \sum_{j=1}^h \dot C_{j,1:n}^{(h)}(\bm{u}) \, \hat{\B}_{n,C^{(h)}}^{[m]}(s, t, \bm{u}^{(j)}),
\end{equation}
where
$$
\dot{C}_{j,1:n}^{(h)}(\bm{u}) = \frac{C_{1:n}^{(h)}( \bm{u} + h \bm{e}_j ) - C_{1:n}^{(h)}( \bm{u} - h \bm{e}_j )}{\min(u_j+h,1) - \max(u_j-h, 0)}
$$
with $\bm e_j$ the $j$-th unit vector and
\begin{equation}
\label{eq:hatBnm}
\hat{\B}_{n,C^{(h)}}^{[m]}(s, t,\bm{u}) = \frac{1}{\sqrt{n}} \sum_{i=\ip{ns}+1}^{\ip{nt}} \xi_{i,n}^{[m]} \bigg[ \prod_{j=1}^h \1\{ G_{1:n}(X_{i+j-1}) \leq u_j \} - C_{1:n}^{(h)}(\bm{u}) \bigg],
\end{equation}
with $C_{1:n}^{\scs (h)}$ and $G_{1:n}$ defined by~\eqref{eq:Cklh} and~\eqref{eq:Gkl}, respectively. The sequences of random variables $(\xi_{\scs i,n}^{\scs [m]})_{i \in \Z}$, $m \in \N$, appearing in the expressions of the processes $\hat{\B}_n^{\scs (h),[m]}$ in~\eqref{eq:hatBnm}, $m \in \N$, are independent copies of what was called a \emph{dependent multiplier sequence} in \cite{BucKoj16}. Details on that definition, on how such a sequence can be generated and on how a respective block length parameter can be chosen adaptively are presented in Appendix~\ref{app:dep}.

Next, starting from~\eqref{eq:hatCbnm} and having~\eqref{eq:DnhH0} in mind, multiplier resampling versions of $\D_{n,C^{(h)}}$ are then naturally given, for any $m \in \N$ and $(s,\bm u) \in [0,1]^{h+1}$, by
\begin{align*}
\nonumber
  \hat{\D}_{n,C^{(h)}}^{[m]}(s,\bm{u}) &= \lambda_n(s,1) \, \hat{\Cb}_{n,C^{(h)}}^{[m]}(0, s, \bm{u}) - \lambda_n(0,s) \, \hat{\Cb}_{n,C^{(h)}}^{[m]}(s, 1, \bm{u}) \\
&= \hat{\Cb}_{n,C^{(h)}}^{[m]}(0,s,\bm{u}) - \lambda_n(0,s) \, \hat{\Cb}_{n,C^{(h)}}^{[m]}(0, 1, \bm{u}).
\end{align*}
Corresponding multiplier resampling versions of the statistic $S_{n,C^{(h)}}$ in~\eqref{eq:SnCh} are finally
\begin{equation}
\label{eq:SnChm}
\hat{S}_{n,C^{(h)}}^{[m]} = \sup_{s \in [0,1]} \int_{[0, 1]^h} \{ \hat{\D}_{n,C^{(h)}}^{[m]} (s, \bm{u}) \}^2 \, \dd C_{1:n}^{(h)}(\bm{u}),
\end{equation}
which suggests computing an approximate p-value for $S_{n,C^{(h)}}$ as $M^{-1} \sum_{m=1}^M \1 \big( \hat{S}_{n,C^{(h)}}^{[m]} \geq S_{n,C^{(h)}} \big)$ for some large integer $M$.

The following proposition establishes the asymptotic validity of the multiplier resampling scheme under the null hypothesis of stationarity. The proof is given in Appendix~\ref{app:proofs}.

\begin{prop}
\label{prop:S_mult}
Assume that $X_1,\dots,X_{n+h-1}$ are drawn from a strictly stationary sequence $(X_i)_{i \in \Z}$ of continuous random variables whose strong mixing coefficients satisfy $\alpha_r^X = O(r^{-a})$ as $r \to \infty$ for some $a > 3+3h/2$, and $(\xi_{\scs i,n}^{\scs [1]})_{i \in \Z},(\xi_{\scs i,n}^{\scs [2]})_{i \in \Z}, \dots$ are independent copies of a dependent multiplier sequence satisfying~($\M 1$)--($\M 3$) in Appendix~\ref{app:dep} with $\ell_n = O(n^{1/2 - \gamma})$ for some $0 < \gamma < 1/2$. Then, for any $M \in \N$, 
$$
  \big(\Cb_{n,C^{(h)}}, \hat{\Cb}_{n,C^{(h)}}^{[1]}, \dots, \hat{\Cb}_{n,C^{(h)}}^{[M]} \big)
  \leadsto
  \big(\Cb_{C^{(h)}}, \Cb_{C^{(h)}}^{[1]}, \dots, \Cb_{C^{(h)}}^{[M]} \big)
$$
in $\{\ell^\infty(\Delta \times [0, 1]^h)\}^{M+1}$, where $\Cb_{C^{(h)}}$ is defined by~\eqref{eq:CbCh}, and $\Cb_{C^{(h)}}^{\scs [1]},\dots,\Cb_{C^{(h)}}^{\scs [M]}$ are independent
copies of $\Cb_{C^{(h)}}$. As a consequence, for any $M \in \N$,
$$
  \big( \D_{n,C^{(h)}}, \hat{\D}_{n,C^{(h)}}^{[1]}, \dots, \hat{\D}_{n,C^{(h)}}^{[M]} \big)
  \leadsto
  \big( \D_{C^{(h)}}, \D_{C^{(h)}}^{[1]}, \dots, \D_{C^{(h)}}^{[M]} \big)
$$
in $\{ \ell^\infty([0,1]^{h+1}) \}^{M+1}$, where $\D_{C^{(h)}}$ is defined by~\eqref{eq:DCh} and $\D_{C^{(h)}}^{\scs [1]}, \dots,  \D_{C^{(h)}}^{\scs [M]}$ are independent copies of $\D_{C^{(h)}}$. Finally, for any $M \in \N$,
\begin{equation*}
  \big( S_{n,C^{(h)}}, \hat{S}_{n,C^{(h)}}^{[1]}, \dots, \hat{S}_{n,C^{(h)}}^{[M]} \big)   \leadsto
  \big( S_{C^{(h)}},S_{C^{(h)}}^{[1]}, \dots, S_{C^{(h)}}^{[M]} \big),
\end{equation*}
where $S_{C^{(h)}}$ is defined by~\eqref{eq:SCh} and $S_{C^{(h)}}^{\scs [1]},\dots,S_{C^{(h)}}^{\scs [M]}$ are independent copies of $S_{C^{(h)}}$.
\end{prop}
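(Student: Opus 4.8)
The strategy follows the proof of Proposition~\ref{prop:weak_Cnh_sm}: I would reduce the three asserted weak convergences to a single joint convergence at the level of the sequential serial empirical process, and then propagate it through the (asymptotically continuous) maps producing $\Cb_{n,C^{(h)}}$ from $\B_{n,C^{(h)}}$, then $\D_{n,C^{(h)}}$ from $\Cb_{n,C^{(h)}}$, then $S_{n,C^{(h)}}$ from $\D_{n,C^{(h)}}$, together with the corresponding maps for the multiplier replicates. The only ingredient specific to the serial setup is that the $h$-dimensional vectors $\bm Y_i^{(h)}$ in~\eqref{eq:Yi} form a strictly stationary sequence whose strong mixing coefficients satisfy $\alpha_r^{\bm Y^{(h)}} \le \alpha_{(r-h+1)\vee 0}^X$, so that the assumed rate $\alpha_r^X = O(r^{-a})$ with $a > 3 + 3h/2$ is inherited, with the same exponent $a$, by the sequence on which the empirical autocopulas are built.

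\textbf{Step 1: the multiplier CLT at the empirical-process level.} I would first establish
\[
  \big( \B_{n,C^{(h)}}, \hat{\B}_{n,C^{(h)}}^{[1]}, \dots, \hat{\B}_{n,C^{(h)}}^{[M]} \big)
  \leadsto
  \big( \B_{C^{(h)}}, \B_{C^{(h)}}^{[1]}, \dots, \B_{C^{(h)}}^{[M]} \big)
\]
in $\{\ell^\infty(\Delta \times [0,1]^h)\}^{M+1}$, where $\B_{C^{(h)}}^{[1]},\dots,\B_{C^{(h)}}^{[M]}$ are independent copies of the Gaussian limit $\B_{C^{(h)}}$ of Proposition~\ref{prop:weak_Cnh_sm}. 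This is the dependent multiplier bootstrap result of \cite{BucKoj16} (see also \cite{BucKojRohSeg14}) applied to the strongly mixing sequence $(\bm Y_i^{(h)})_i$: the mixing rate $a > 3 + 3h/2$ (which grows with the dimension $h$ because the $h$-dimensional empirical process and its bootstrap are involved), the block-length condition $\ell_n = O(n^{1/2-\gamma})$, and conditions~($\M 1$)--($\M 3$) on the multiplier sequences are exactly the hypotheses required there. That reference also accommodates the use of the empirical marginal d.f.\ $G_{1:n}$ and of the empirical autocopula $C_{1:n}^{(h)}$ as centering in~\eqref{eq:hatBnm} in place of $G$ and $C^{(h)}$. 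This step is the main obstacle, and it is where the plain mixing rate $a > 1$ of Proposition~\ref{prop:weak_Cnh_sm} has to be strengthened.

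\textbf{Step 2: passing to the empirical autocopula processes.} Proposition~\ref{prop:weak_Cnh_sm} already gives, uniformly over $\Delta\times[0,1]^h$,
\[
  \Cb_{n,C^{(h)}}(s,t,\bm u) = \B_{n,C^{(h)}}(s,t,\bm u) - \sum_{j=1}^h \dot C_j^{(h)}(\bm u)\, \B_{n,C^{(h)}}(s,t,\bm u^{(j)}) + o_{\Pr}(1),
\]
while $\hat{\Cb}_{n,C^{(h)}}^{[m]}$ in~\eqref{eq:hatCbnm} is obtained from $\hat{\B}_{n,C^{(h)}}^{[m]}$ by the same linear combination, with $\dot C_j^{(h)}$ replaced by the finite-difference estimator $\dot C_{j,1:n}^{(h)}$. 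Hence the first asserted convergence follows from Step~1 by a Slutsky-type argument, provided the remainder $\sum_{j=1}^h \big( \dot C_{j,1:n}^{(h)}(\bm u) - \dot C_j^{(h)}(\bm u) \big)\, \hat{\B}_{n,C^{(h)}}^{[m]}(s,t,\bm u^{(j)})$ tends to zero in probability, uniformly in $(s,t,\bm u)$; with the boundary convention recalled before Condition~\ref{cond:pd}, this is the weighted consistency of the derivative estimators (which compensates for their possible blow-up near the faces of $[0,1]^h$) established in \cite{Seg12} and \cite{BucVol13} and transposed to the serial/mixing setting as in \cite{BucKoj16}. The limits are then the processes $\Cb_{C^{(h)}}^{[m]}$, each obtained from $\B_{C^{(h)}}^{[m]}$ exactly as $\Cb_{C^{(h)}}$ is obtained from $\B_{C^{(h)}}$ in~\eqref{eq:CbCh}, hence independent copies of $\Cb_{C^{(h)}}$.

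\textbf{Step 3: passing to $\D$ and to $S$.} Under $H_0^{(h)}$, identity~\eqref{eq:DnhH0} and its multiplier counterpart exhibit $\D_{n,C^{(h)}}$ and each $\hat{\D}_{n,C^{(h)}}^{[m]}$ as the images of $\Cb_{n,C^{(h)}}$ and $\hat{\Cb}_{n,C^{(h)}}^{[m]}$, respectively, under the map $g \mapsto [\,(s,\bm u)\mapsto g(0,s,\bm u) - \lambda_n(0,s)\, g(0,1,\bm u)\,]$. Since $\sup_{s\in[0,1]}|\lambda_n(0,s)-s|\to 0$ and the map $g\mapsto[\,(s,\bm u)\mapsto g(0,s,\bm u)-s\,g(0,1,\bm u)\,]$ is continuous from $\ell^\infty(\Delta\times[0,1]^h)$ into $\ell^\infty([0,1]^{h+1})$, the continuous mapping theorem (applied coordinatewise and combined with Slutsky for the $\lambda_n$ factor) yields the second asserted convergence. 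Finally, all $M+1$ statistics integrate the squared processes against the \emph{same} random measure $C_{1:n}^{(h)}$, which converges uniformly in probability to $C^{(h)}$; replacing $C_{1:n}^{(h)}$ by $C^{(h)}$ and then applying the functional $g\mapsto \sup_{s\in[0,1]}\int_{[0,1]^h}\{g(s,\bm u)\}^2\,\dd C^{(h)}(\bm u)$ coordinatewise, exactly as in the proof of Proposition~\ref{prop:acs}, gives the last asserted convergence.
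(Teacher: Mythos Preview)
Your proposal is correct and follows essentially the same route as the paper. The paper's own proof is a one-line reduction: it invokes Proposition~4.2 of \cite{BucKojRohSeg14} directly, after noting the mixing relation $\alpha_r^{\bm Y} = \alpha_{(r-h+1)\vee 0}^X$ for the embedded sequence. Your Steps~1--3 simply unpack what that cited proposition contains (the joint multiplier CLT for the sequential empirical process, the replacement of $\dot C_j^{(h)}$ by its finite-difference estimator, and the continuous-mapping passages to $\D$ and $S$), so the two arguments coincide once one expands the citation.
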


Notice that, by Lemma~2.2 of \cite{BucKoj18} and the continuity of the d.f.\ of $S_{C^{(h)}}$ (see Proposition~\ref{prop:acs} above), the last statement of Proposition~\ref{prop:S_mult} is equivalent to the following more classical formulation of bootstrap consistency:
$$
\sup_{x \in \R} | \Pr(\hat S_{n,C^{(h)}}^{[1]} \leq x \mid \bm X_n) - \Pr(S_{n,C^{(h)}} \leq x)| \p 0.
$$
Furthermore, Lemma~4.2 in \cite{BucKoj18} ensures that the test based on $S_{n,C^{(h)}}$ with approximate p-value $p_{n,M}(S_{n,C^{(h)}}) = M^{-1} \sum_{m=1}^M \1 \big( \hat{S}_{n,C^{(h)}}^{\scs [m]} \geq S_{n,C^{(h)}} \big)$ holds its level asymptotically under the null hypothesis of stationarity as $n$ and $M$ tend to the infinity. By Corollary~4.3 in the same reference, this implies that $p_{n,M_n}(S_{n,C^{(h)}}) \leadsto \text{Uniform}(0,1)$ when $n\to \infty$, for any sequence $M_n \to \infty$.

\subsection{A d.f.-based test sensitive to changes in the contemporary distribution}
\label{sec:dftest}

We propose to combine the previous test with a test particularity  sensitive to departures from $H_0^{\scs (1)}$ in~\eqref{eq:H0:1}. As mentioned in the introduction, a natural candidate is the CUSUM test studied in~\cite{GomHor99} and extended in \cite{HolKojQue13}. For the sake of a simpler presentation, we proceed as if the only available observations were $X_1,\dots,X_n$, thereby ignoring the remaining $h-1$ ones. The test statistic can then be written as
\begin{equation}
\label{eq:Sng}
S_{n,G} = \sup_{s \in [0,1]} \int_{\R} \left\{ \E_n(s,x) \right\}^2 \dd G_{1:n}(x),
\end{equation}
where
\begin{equation}
\label{eq:En}
\E_n(s,x) = \sqrt{n} \lambda_n(0,s) \lambda_n(s,1) \left\{ G_{1:\ip{ns}}(x) - G_{\ip{ns}+1:n} (x) \right\}, \qquad (s, x) \in [0,1] \times \R,
\end{equation}
and, for any $1 \leq k \leq l \leq n$, $G_{k:l}$ is defined as in~\eqref{eq:Gkl} but with $h=1$. As one can see, the test involves the comparison of the empirical d.f.\ of $X_1,\dots,X_k$ with the one of $X_{k+1},\dots,X_n$ for all $k \in \{1,\dots,n-1\}$. Under $H_0^{\scs (1)}$ in~\eqref{eq:H0:1}, it can be verified that $\E_n$ in \eqref{eq:En} can be written as
\begin{equation*}
  \E_n(s,x) = \G_n(s,x) - \lambda_n(0,s) \, \G_n(1,x), \qquad (s, x) \in [0,1] \times \R,
\end{equation*}
where
\begin{equation}
  \label{eq:Gn}
  \G_n(s, x) = \sqrt{n} \, \lambda_n(0, s) \, \{ G_{1: \ip{ns}}(x) - G(x) \}, \qquad (s, x) \in [0,1] \times \R.
\end{equation}

The following result, proved in Appendix~\ref{app:proofs} and providing the null weak limit of $S_{n,G}$ in~\eqref{eq:Sng}, is partly an immediate consequence of Theorem~1 of \cite{Buc15} and of the continuous mapping theorem.

\begin{prop}
\label{prop:weak_Gn_sm}
Let $X_1,\dots,X_{n}$ be drawn from a strictly stationary sequence $(X_i)_{i \in \Z}$ of continuous random variables whose strong mixing coefficients satisfy $\alpha_r = O(r^{-a})$ for some $a > 1$, as $r \to \infty$. Then, $\G_n \leadsto \G$ in $\ell^\infty([0, 1] \times \R)$, where $\G$ is a tight centered Gaussian process with covariance function
$$
\Cov\{\G(s,x), \G(t, y) \} = \min(s,t) \sum_{k \in \Z} \Cov\{\1(X_0 \leq x) \1(X_k \leq y) \}.
$$
Consequently, $\E_n \leadsto \E$ in $\ell^\infty([0, 1] \times \R)$, where
\begin{equation}
\label{eq:E}
\E(s,x) = \G(s,x) - s \G(1,x), \qquad (s, x) \in [0,1] \times \R,
\end{equation}
and $S_{n,G} \leadsto S_G$ with
\begin{equation}
\label{eq:SG}
S_G = \sup_{s \in [0,1]} \int_{\R} \left\{ \E(s,x) \right\}^2 \dd G(x).
\end{equation}
Moreover, the distribution of $S_G$ is absolutely continuous with respect to the Lebesgue measure.
\end{prop}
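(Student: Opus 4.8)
The plan is to obtain the first three statements from known results together with the continuous mapping theorem, and to handle the absolute continuity of $S_G$ separately, as it is the only point that needs an argument of its own. First, $\G_n \leadsto \G$ is precisely Theorem~1 of \cite{Buc15} specialised to the univariate case, the assumed mixing rate $\alpha_r = O(r^{-a})$ with $a>1$ being exactly what that result requires. Next, writing $\E_n(s,x) = \G_n(s,x) - \lambda_n(0,s)\,\G_n(1,x)$ and using $\sup_{s\in[0,1]}|\lambda_n(0,s) - s| \le 1/n$ together with the asymptotic tightness of $\G_n$, one gets $\sup_{(s,x)} \big| \E_n(s,x) - \{\G_n(s,x) - s\,\G_n(1,x)\} \big| \p 0$, so the continuous mapping theorem applied to the bounded linear operator $f \mapsto \{(s,x)\mapsto f(s,x) - s\,f(1,x)\}$ on $\ell^\infty([0,1]\times\R)$ yields $\E_n \leadsto \E$. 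Finally, $S_{n,G}$ is the image of $(\E_n,G_{1:n})$ under the functional $\Upsilon(f,\nu) = \sup_{s\in[0,1]}\int_\R f(s,x)^2\,\dd\nu(x)$; strong mixing implies ergodicity, so $\sup_{x\in\R}|G_{1:n}(x) - G(x)| \as 0$ by the Glivenko--Cantelli theorem, whence $(\E_n,G_{1:n}) \leadsto (\E,G)$ jointly, and $\Upsilon$ is continuous at $(\E,G)$ because $G$ is continuous and $\E$ has, almost surely, uniformly continuous sample paths on $[0,1]\times\R$ that vanish at $\pm\infty$ (the latter since $\E_n$ vanishes outside the range of the observations, so that $\E(s,\pm\infty) = 0$, and the limit is tight); an extended continuous mapping theorem then gives $S_{n,G} \leadsto S_G$. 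These steps run parallel to those for $S_{n,C^{(h)}}$ in Proposition~\ref{prop:acs} and to the corresponding arguments in \cite{HolKojQue13}.

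The absolute continuity of $S_G$ requires the following observation. Write $S_G = \sup_{s\in[0,1]} Q(s)$ with $Q(s) = \int_\R \E(s,x)^2\,\dd G(x)$, a continuous function of $s$ with $Q(0) = Q(1) = 0$. For each $s$, $\sqrt{Q(s)} = \|\E(s,\cdot)\|_{L^2(\dd G)} = \sup\big\{ \int_\R \E(s,x)\,g(x)\,\dd G(x) : \int_\R g^2\,\dd G \le 1 \big\}$, so that
\begin{equation*}
\sqrt{S_G} \;=\; \sup_{(s,g)} L(s,g), \qquad L(s,g) := \int_\R \E(s,x)\,g(x)\,\dd G(x),
\end{equation*}
the supremum running over $s\in[0,1]$ and $g$ in the closed unit ball of $L^2(\dd G)$. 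Since $\E$ is a Gaussian process, so is $\{L(s,g)\}$ (its values being integrals of $\E$); it is almost surely bounded, as $\sup_{(s,g)}|L(s,g)| = \sup_{s\in[0,1]}\|\E(s,\cdot)\|_{L^2(\dd G)} \le \sup_{(s,x)}|\E(s,x)| < \infty$ almost surely; and, by norm-continuity of $g\mapsto L(s,g)$ and sample-path continuity of $s\mapsto \E(s,\cdot)$, its supremum over the above index set equals, almost surely, the supremum over a countable subset. Being the supremum of an almost surely bounded, separable, centred Gaussian process, $\sqrt{S_G}$ has a law which is absolutely continuous on $\R$ apart from a possible atom at the bottom of its support (a classical fact, due to Tsirelson and a consequence of the Gaussian isoperimetric inequality); it therefore remains to check that this bottom equals $0$ and carries no mass. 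It equals $0$ because $0$ belongs to the support of the Gaussian law of $\E$, so that $\Pr\big(\sup_{(s,x)}|\E(s,x)| < \varepsilon\big) > 0$, and hence $\Pr(\sqrt{S_G} < \varepsilon) > 0$, for every $\varepsilon > 0$. There is no atom at $0$ because $\sqrt{S_G} = 0$ forces $\int_\R \E(\tfrac12,x)^2\,\dd G(x) = 0$, while $\E(\tfrac12,\cdot)$ is a centred Gaussian process with covariance $(x,y) \mapsto \tfrac14 \sum_{k\in\Z}\Cov\{\1(X_0\le x),\1(X_k\le y)\}$, so that $\int_\R \E(\tfrac12,x)^2\,\dd G(x)$ is $\tfrac14$ times a weighted sum of independent $\chi^2_1$ variables with not all weights equal to zero --- this uses the mild non-degeneracy $\int_\R \sum_{k\in\Z}\Cov\{\1(X_0\le x),\1(X_k\le x)\}\,\dd G(x) > 0$, which holds automatically when the $X_i$ are independent --- and therefore has no atom at $0$. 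Hence $\sqrt{S_G}$ has an absolutely continuous law on $\R$, and since $t\mapsto t^2$ is a diffeomorphism of $(0,\infty)$, $S_G = (\sqrt{S_G})^2$ is absolutely continuous with respect to the Lebesgue measure.

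I expect the hard part to be conceptual rather than computational: one must recognise that the square root of the non-Gaussian statistic $S_G$ is the supremum of a \emph{Gaussian} process --- indexed by $(s,g)$, with $g$ in the unit ball of $L^2(\dd G)$ --- so that a Tsirelson-type absolute-continuity result for Gaussian suprema becomes applicable. The remaining steps (boundedness and separability of that process, identifying the bottom of its support with $0$, and ruling out an atom there under a harmless non-degeneracy assumption) are routine.
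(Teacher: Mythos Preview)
Your weak-convergence arguments match the paper's: Theorem~1 of \cite{Buc15} for $\G_n$, then the continuous mapping theorem for $\E_n$ and $S_{n,G}$.

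For the absolute continuity of $S_G$, your approach is correct and close in spirit to, though packaged differently from, the paper's. The paper (via the parallel argument for $S_{C^{(h)}}$ in Proposition~\ref{prop:acs}) writes $\sqrt{S_G} = f(\E)$ with $f(g)=\sup_s\{\int g(s,x)^2\,\dd G(x)\}^{1/2}$ a continuous \emph{convex} functional on $\mathcal C([0,1]\times\R)$, and then invokes Theorem~7.1 of \cite{DavLif84} (absolute continuity of convex functionals of Gaussian measures on $(a_0,\infty)$), together with a small-ball argument to get $a_0=0$. You instead unpack the convexity by writing $\sqrt{S_G}$ as a supremum of the Gaussian family $\{L(s,g)\}$ and appeal to Tsirelson's theorem for suprema of bounded separable Gaussian processes; this is effectively the same mechanism, since a sup of linear functionals is precisely what makes the Davydov--Lifshits result work. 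Your route is slightly more self-contained; the paper's is a one-line citation once one knows the reference.

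For ruling out an atom at $0$, the paper picks a single point $x^*$ in the support of $G$ with $\sigma^2(x^*)=\Var\{\G(1,x^*)\}>0$ and checks directly that $\Var\{\E(\tfrac12,x^*)\}=\sigma^2(x^*)/4>0$, so $\Pr(\sqrt{S_G}=0)\le\Pr(\E(\tfrac12,x^*)=0)=0$. Your $\chi^2$-expansion argument is fine but slightly heavier, and your caveat ``which holds automatically when the $X_i$ are independent'' is unnecessary: both arguments rely on the same mild non-degeneracy (existence of some $x^*$ with positive long-run variance), which the paper also leaves implicit. Replacing your integrated argument by the pointwise one would remove the caveat and align with the paper.
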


Following \cite{GomHor99}, \cite{HolKojQue13} and \cite{BucKoj16}, we shall compute approximate p-values for $S_{n,G}$ using multiplier resampling versions of $\G_n$ in~\eqref{eq:Gn}. Let $(\xi_{i,n}^{\scs [m]})_{i \in \Z}$, $m \in \N$, be independent copies of the same dependent multiplier sequence. For any $m \in \N$ and any $(s,x) \in [0,1] \times \R$, let
\begin{align}
  \nonumber
  \hat{\G}_n^{[m]}(s, x) &=  \frac{1}{\sqrt{n}} \sum_{i=1}^{\ip{ns}} \xi_{i,n}^{[m]} \left\{ \1(X_i \leq x) - G_{1:n}(x) \right\}, \\ 
  \nonumber
  \hat \E_n^{[m]}(s,x) &= \G_n^{[m]}(s,x) - \lambda_n(0,s) \, \G_n^{[m]}(1,x), \\
  \label{eq:SnGm}
  \hat{S}_{n,G}^{[m]} &= \sup_{s \in [0,1]} \int_{\R} \left\{ \hat \E_n^{[m]}(s,x) \right\}^2 \dd G_{1:n}(x).
\end{align}
An approximate p-value for $S_{n,G}$ will then be computed as $p_{n,M}(S_{n,G}) = M^{-1} \sum_{m=1}^M \1 \big( \hat{S}_{n,G}^{\scs [m]} \geq S_{n,G} \big)$ for some large integer $M$.
The asymptotic validity of this approach under the null hypothesis of stationarity can be shown as for the test based on $S_{n,C^{(h)}}$ presented in the previous section. The result is a direct consequence of Corollary~2.2 in \cite{BucKoj16}; see also Proposition~\ref{prop:combined1} in the next section. In particular, $p_{n,M_n}(S_{n,G}) \leadsto \text{Uniform}(0,1)$ when $n \to \infty$, for any sequence $M_n\to\infty$.

\subsection{Combining the two tests}
\label{sec:comb:cop:df}

To combine the two tests, we use the general procedure described in Section~\ref{sec:combine:tests} with $r=2$, $T_{n,1} = S_{n,C^{(h)}}$ and $T_{n,2} = S_{n,G}$, for some suitable function $\psi:(0,1)^2 \to \R$ such as $\psi_S$ in~\eqref{eq:stouffer} or $\psi_F$ in~\eqref{eq:fisher}. To be able to apply Proposition~\ref{prop:combined:general}, we need to find conditions under which $\bm T_n = (T_{n,1}, T_{n,2})$ and its bootstrap replicates satisfy~\eqref{eq:uncond:Tn} or, equivalently,~\eqref{eq:cond:Tn}. A natural prerequisite is to compute the $M$ bootstrap replicates of $T_{n,1} = S_{n,C^{(h)}}$ and $T_{n,2} = S_{n,G}$ in~\eqref{eq:SnChm} and~\eqref{eq:SnGm}, respectively, using the same $M$ dependent multiplier sequences. Since a moving average approach is used to generate such sequences, it follows from~\eqref{eq:movave} that it is sufficient to impose that the same $M$ initial independent normal sequences be used for both tests. In practice, prior to using~\eqref{eq:movave} to generate the $M$ independent copies of the same dependent multiplier sequence, we estimate the key bandwidth parameter $\ell_n$ from $X_1,\dots,X_{n+h-1}$ using the approach proposed in \citet[Section 5.1]{BucKoj16}, briefly overviewed in Appendix~\ref{app:dep}.

The next result, proven in Appendix~\ref{app:proofs},  provides conditions under which~\eqref{eq:uncond:Tn} holds.

\begin{prop}
\label{prop:combined1}
Under the conditions of Proposition~\ref{prop:S_mult}, for any $M \in \N$,
$$
  \big((\D_{n,C^{(h)}},\E_n), (\hat{\D}_{n,C^{(h)}}^{[1]}, \hat{\E}_n^{[1]}), \dots, (\hat{\D}_{n,C^{(h)}}^{[M]}, \hat{\E}_n^{[M]}) \big)
  \leadsto
  \big((\D_{C^{(h)}}, \E), (\D_{C^{(h)}}^{[1]}, \E^{[1]}), \dots, (\D_{C^{(h)}}^{[M]},  \E^{[M]}) \big)
$$
in $\{\ell^\infty([0, 1] \times \R)\}^{2(M+1)}$, where $\D_{C^{(h)}}$ and $\E$ are defined by~\eqref{eq:DCh} and~\eqref{eq:E}, respectively, and $(\D_{C^{(h)}}^{\scs [1]},\E^{[1]}),\dots,(\D_{C^{(h)}}^{\scs [M]},\E^{[M]})$ are independent copies of $(\D_{C^{(h)}},\E)$. Note that we do not specify the joint law of $(\D_{C^{(h)}},\E)$; it will only be important that $(\hat{\D}_{n,C^{(h)}}^{\scs [m]}, \hat{\E}_n^{\scs [m]})$, $m \in \{1,\dots,M\}$, can be considered to have the same joint law as $(\D_{C^{(h)}},\E)$ asymptotically. As a consequence, 
$$
  \big( (S_{n,C^{(h)}},S_{n,G}),( \hat{S}_{n,C^{(h)}}^{[1]},  \hat{S}_{n,G}^{[1]}), \dots, (\hat{S}_{n,C^{(h)}}^{[M]}, \hat{S}_{n,G}^{[M]}) \big)
  \leadsto
  \big( (S_{C^{(h)}},S_G), (S_{C^{(h)}}^{[1]},S_G^{[1]}), \dots, (S_{C^{(h)}}^{[M]},  S_G^{[M]}) \big),
$$
where $S_{C^{(h)}}$ and $S_G$ are defined by~\eqref{eq:SCh} and~\eqref{eq:SG}, respectively, and where the random vectors $(S_{C^{(h)}}^{\scs [1]},S_G^{\scs [1]}), \dots, (S_{C^{(h)}}^{\scs [M]},S_G^{\scs [M]})$ are independent copies of $(S_{C^{(h)}},S_G)$.
\end{prop}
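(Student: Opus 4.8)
The plan is to establish the joint weak convergence of the pair of sequential processes $(\D_{n,C^{(h)}},\E_n)$ together with $M$ of their multiplier replicates by reducing everything to the joint behaviour of the two underlying \emph{sequential serial empirical processes} and their multiplier counterparts. The key observation is that both $\D_{n,C^{(h)}}$ and $\E_n$, as well as their bootstrap analogues, are continuous (indeed linear, up to the estimated partial derivatives $\dot C_{j,1:n}^{(h)}$ and the deterministic maps $\lambda_n$) functionals of a single ``mother'' process, namely the stacked vector consisting of $\B_{n,C^{(h)}}$ (the lag $h-1$ sequential serial empirical process from~\eqref{eq:seqep}), the ordinary sequential empirical process $\G_n$ from~\eqref{eq:Gn}, and the $M$ multiplier versions $\hat\B_{n,C^{(h)}}^{[m]}$, $\hat\G_n^{[m]}$ built from the \emph{same} $M$ dependent multiplier sequences $(\xi_{i,n}^{[m]})_{i}$. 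So the first step is to state and prove joint weak convergence of this mother vector in $\{\ell^\infty(\Delta\times[0,1]^h)\}^{M+1}\times\{\ell^\infty([0,1]\times\R)\}^{M+1}$; this is essentially a simultaneous application of the central limit theorem for strongly mixing sequential serial empirical processes of \cite{BucKoj16}/\cite{BucKojRohSeg14} (for the $h$-dimensional indicator functions) and of \cite{Buc15}/\cite{BucKoj16} (for the univariate case), which is legitimate because the multiplier-CLT machinery of \cite{BucKoj16} covers vectors of functionals of the data driven by a common dependent multiplier sequence. The strong-mixing rate $a>3+3h/2$ and the block-length condition $\ell_n=O(n^{1/2-\gamma})$ inherited from Proposition~\ref{prop:S_mult} are exactly what is needed so that both marginal multiplier CLTs hold simultaneously.

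The second step is the functional-delta / continuous-mapping passage. On the ``true'' side, Proposition~\ref{prop:weak_Cnh_sm} already gives the Hadamard-type linearization $\Cb_{n,C^{(h)}}\rightsquigarrow \Cb_{C^{(h)}}$ with the representation~\eqref{eq:CbCh}, and $\D_{n,C^{(h)}}$ is then obtained from $\Cb_{n,C^{(h)}}$ via the explicit affine map in~\eqref{eq:DnhH0}; similarly $\E_n$ is the affine image of $\G_n$ via~\eqref{eq:Gn}. On the bootstrap side, $\hat\Cb_{n,C^{(h)}}^{[m]}$ is defined in~\eqref{eq:hatCbnm} with the \emph{estimated} partials $\dot C_{j,1:n}^{(h)}$, and Proposition~\ref{prop:S_mult} already asserts that $(\Cb_{n,C^{(h)}},\hat\Cb_{n,C^{(h)}}^{[1]},\dots)$ and hence $(\D_{n,C^{(h)}},\hat\D_{n,C^{(h)}}^{[1]},\dots)$ converge jointly; I would reuse that proof verbatim but carry the extra coordinate $(\G_n,\hat\G_n^{[1]},\dots,\hat\G_n^{[M]})$ along, so that the only genuinely new content is checking that the convergence of $\dot C_{j,1:n}^{(h)}\to\dot C_j^{(h)}$ (in the appropriate sup-sense along the approaching sequences) and the uniform approximations $\sup|\lambda_n(0,s)-s|\to0$ still work inside the joint supremum over the larger product space — all of which are already verified in \cite{BucKoj16}/\cite{BucKojRohSeg14} for each block separately and combine by the obvious ``joint convergence is preserved under jointly continuous maps'' argument. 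This yields the first displayed convergence of the proposition; the remark that the joint law of $(\D_{C^{(h)}},\E)$ is left unspecified is exactly what frees us from having to compute any cross-covariances.

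The third and final step is to pass from the processes to the statistics $(S_{n,C^{(h)}},S_{n,G})$ and their replicates by the continuous mapping theorem applied to the functional $(f,g)\mapsto\bigl(\sup_s\int\{\text{affine}(f)\}^2\,\dd C_{1:n}^{(h)},\ \sup_s\int\{\text{affine}(g)\}^2\,\dd G_{1:n}\bigr)$. Here one must be slightly careful because the integrating measures $C_{1:n}^{(h)}$ and $G_{1:n}$ are themselves random and only converge weakly to $C^{(h)}$ and $G$; the standard device (used already in the proofs of Propositions~\ref{prop:acs} and~\ref{prop:weak_Gn_sm}) is to write $\int\{\D_{n,C^{(h)}}(s,\bm u)\}^2\,\dd C_{1:n}^{(h)}(\bm u)$ as a functional of the pair $(\D_{n,C^{(h)}},C_{1:n}^{(h)})$, note that $C_{1:n}^{(h)}\to C^{(h)}$ uniformly in probability (a by-product of Proposition~\ref{prop:weak_Cnh_sm}) and that the map $(\varphi,\mu)\mapsto\sup_s\int\varphi^2\,\dd\mu$ is continuous at points where $\varphi$ has continuous sample paths, then invoke the extended continuous mapping theorem (Theorem~1.11.1 in \cite{vanWel96}) on the jointly convergent vector. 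Doing this simultaneously in all $2(M+1)$ coordinates gives the claimed joint convergence of $\bigl((S_{n,C^{(h)}},S_{n,G}),(\hat S_{n,C^{(h)}}^{[1]},\hat S_{n,G}^{[1]}),\dots\bigr)$ to the stated limit with independent copies.

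\textbf{Main obstacle.} The genuinely delicate point is not any single estimate but the \emph{joint-convergence bookkeeping}: making sure that the two multiplier-CLT results — one for the $h$-dimensional serial empirical process, one for the $1$-dimensional empirical process — can be run off the \emph{same} $M$ dependent multiplier sequences and deliver convergence of the stacked $(M+1)$-fold vector in the product space, rather than merely marginally in each block. This hinges on the fact (established in \cite{BucKoj16}) that the dependent multiplier bootstrap is consistent for vectors of statistics indexed by a common multiplier array, so that the replicates $(\hat\B_{n,C^{(h)}}^{[m]},\hat\G_n^{[m]})$, $m=1,\dots,M$, are asymptotically i.i.d.\ copies of a limit whose first-vs-second coordinate cross-covariance matches that of $(\B_{C^{(h)}},\G)$; once that joint multiplier CLT is in hand, the remaining steps — the affine maps, the linearization via $\dot C_j^{(h)}$, and the continuous-mapping passage to the sup-integral statistics — are routine and essentially identical to the single-test arguments already carried out for Propositions~\ref{prop:S_mult}, \ref{prop:acs} and~\ref{prop:weak_Gn_sm}.
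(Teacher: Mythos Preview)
Your proposal is correct and follows essentially the same route as the paper's own proof: establish joint weak convergence of the full vector of processes by combining finite-dimensional convergence (which, as you correctly identify, hinges on the fact that both families of multiplier replicates are driven by the \emph{same} dependent multiplier sequences, so that the multiplier-CLT machinery of \cite{BucKoj16} applies to the stacked vector) with joint asymptotic tightness inherited from the marginal tightness already proved in Proposition~\ref{prop:S_mult} and Corollary~2.2 of \cite{BucKoj16}, then pass to the statistics by the (extended) continuous mapping theorem. The paper's proof is much terser---it simply points to ``a more notationally involved version of the proof of Lemma~A.1 in \cite{BucKoj16}'' for the finite-dimensional part and cites marginal tightness for the rest---but your more explicit three-step decomposition (mother process $\to$ linearized processes $\to$ statistics) is exactly the content of that sketch, and your identification of the ``main obstacle'' as the joint-convergence bookkeeping across the common multiplier array is precisely the point.
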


A consequence of the previous proposition is that the unconditional bootstrap consistency statement in~\eqref{eq:uncondM} holds under the conditions of Proposition~\ref{prop:S_mult}. To conclude that the conditional statements given in~\eqref{eq:condM} and~\eqref{eq:MC} hold has well, it is necessary to establish that $W$, given generically by~\eqref{eq:W}, has a continuous d.f. Proving the latter might actually be quite complicated: unlike $S_{C^{(h)}}$ in~\eqref{eq:SCh} and $S_G$ in~\eqref{eq:SG}, $W$ is not a convex function of some Gaussian process, whence the general results from \cite{DavLif84} and the references therein are not applicable. Proving the absolute continuity of the vector $(S_{C^{(h)}},S_G)$ could be a first step but the latter does not seem easy either: available results in the literature are mostly based on complicated conditions from Malliavin Calculus, see, e.g., Theorem~2.1.2 in \cite{Nua06}. For these reasons, we do not pursue such investigations any further in this paper. Nonetheless, we conjecture that $W$ will have a continuous d.f.\ in all except a few very pathological situations.

Under suitable conditions on alternative models, it can further be shown that at least one of the statistics $S_{n,G}$ or $S_{n,C^{(h)}}$ (for $h$ suitably chosen) diverges to infinity in probability at rate~$n$. For instance, for $S_{n,G}$, under the assumption of at most one change in the contemporary d.f.\ of the time series, the latter can be shown by adapting to the serially dependent case the arguments used in \citet[Proof of Theorem 3~(i)]{HolKojQue13}. Further details are omitted for the sake of brevity. As far as bootstrap replicates of $S_{n,G}$ or $S_{n,C^{(h)}}$ are concerned, based on our extensive simulation results, we conjecture that, for many alternative models, the bootstrap replicates are of lower order than $O_\Pr(n)$. As a consequence, assuming the aforementioned results, and when the combining function $\psi$ is $\psi_F$ in~\eqref{eq:fisher}, one can rely on Proposition~\ref{prop:combined:alternative} to show the consistency of the test based on~$W_{\scs n,M_n}^{\scs [0]}$ in~\eqref{eq:WnMi}.

\subsection{On the choice of the embedding dimension $h$}
\label{sec:hchoice}

The methodology described in the previous sections depends on the embedding dimension~$h$. In this section, we will provide some intuition about the trade-off between the choice of small and large values of $h$. Based on the developed arguments, and on the large-scale simulation study in Section~\ref{sec:MC} and in the supplementary material, we will make a practical suggestion at the end of this section.

Let us start by considering arguments in favour of choosing a large value of $h$. For that purpose, note that stationarity is equivalent to $H_0^{\scs (1)}$ in~\eqref{eq:H0:1} and $H_{\scs 0,c}^{\scs (h)}$ in~\eqref{eq:H0:Ch} for all $h\ge 2$, and that a test based on the embedding dimension $h$ can only detect alternatives for which $H_{\scs 0,c}^{\scs (h)}$ does not hold. Hence, since $H_{\scs 0,c}^{\scs (2)} \Leftarrow H_{\scs 0,c}^{\scs (3)} \Leftarrow \dots$, we would like to choose $h$ as large as possible to be consistent against as many alternatives as possible. Note that, at the same time, the potential gain in moving from $h$ to $h+1$ should decrease with $h$: first, the larger $h$, the less likely it seems  that real-life phenomena satisfy $H_{\scs 0,c}^{\scs (h)}$ but not $H_{\scs 0,c}^{\scs (h+1)}$; second, from a model-engineering perspective, the larger the value of~$h$, the more difficult and artificial it becomes to construct sensible models that satisfy $H_{\scs 0,c}^{\scs (h)}$ but not $H_{\scs 0,c}^{\scs (h+1)}$. To illustrate the latter point, constructing such a model on the level of copulas would amount to finding (at least two) different $(h+1)$-dimensional copulas $C^{(h+1)}$ that have the same lower-dimensional (multivariate) margins. More formally and given the serial context under consideration, this would mean finding a model such that
\[
C^{(h+1)}(1,\ldots,1,u_i,\ldots,u_{i+k-1},1,\ldots,1)=C^{(k)}(u_i,\ldots,u_{i+k-1}),
\]
for all $k\in \{ 2,\ldots, h\}$, $i \in \{1,\ldots,h-k+2\}$, $u_i,\ldots,u_{i+k-1} \in [0,1]$, for some given $k$-dimensional copulas $C^{(k)}$.
This problem is closely related to the so-called compatibility problem (\citealp{Nel06}, Section 3.5) and,
to the best of our knowledge, has not yet a general solution.
Some necessary conditions can be found in \citet[Theorem~4]{Rus85} for the case of copulas that are absolutely continuous with respect to the Lebesgue measure on the unit hypercube. As another example, consider as a starting point the autoregressive process $X_i=\beta X_{i-h} + \eps_i$, where the noises $\eps_i\sim \mathcal N(0,\tau^2)$ are i.i.d. and where $|\beta|<1$. The components of the vectors $\bm Y^{\scs (h)}_i=(X_i, \dots,X_{i+h-1})$ are then i.i.d.\ $\mathcal N(0,\tau^2/(1-\beta^2))$. Hence, $C^{(h)}$ is the independence copula and $H_0^{\scs (h)}$ in~\eqref{eq:H0:Fh} is met, while $H_{0,c}^{\scs (h+1)}$ in~\eqref{eq:H0:Ch} would not be met should the parameters $\tau$ and $\beta$ change (smoothly or abruptly) in such a way that $\tau^2/(1-\beta^2)$ stays constant; a rather artificial example. More generally, one could argue that, the larger $h$, the more artificial instances of common time series models (such as ARMA- or GARCH-type models) for which $H_{\scs 0,c}^{\scs (h)}$ holds but not $H_{\scs 0,c}^{\scs (h+1)}$ seem to be.

The previous paragraph suggests to choose $h$ as large as possible, even if the marginal gain of an increase of $h$ becomes smaller for larger and larger $h$. At the opposite, there are also good reasons for choosing $h$ rather small. Indeed, for many  sensible models, the power of the test based on $S_{n,C^{(h)}}$ {in~\eqref{eq:SnCh} is a decreasing function of $h$, at least from some small value onwards. This observation will for instance be one of the results of our simulation study in Section~\ref{sec:MC} (see, e.g., Figure~\ref{fig:h}), but it can also be supported by more theoretical arguments. Indeed, consider for instance the following simple alternative model: $X_1,X_2,\dots$ have the same d.f.\ $G$ and, for some  $s^* \in (0,1)$, $\bm Y_i^{\scs (h)}$, $i \in \{1,\dots,\ip{ns^*}-\ip{h/2}\}$, have copula $C_1^{\scs (h)}$ and $\bm Y_i^{\scs (h)}$, $i \in \{\ip{ns^*}+1+\ip{h/2},\dots,n\}$, have copula $C_2^{\scs (h)} \ne C_1^{\scs (h)}$. For simplicity, we do not specify the laws of the $\bm Y_i^{\scs (h)}$ for $i\in \{\ip{ns^*}-\ip{h/2}+1,\ldots, \ip{ns^*}+\ip{h/2}\}$ (these observations induce negligible effects in the following reasoning), whence, asymptotically, we can do ``as if'' $\bm Y_i^{\scs (h)}$, $i \in \{1,\dots,\ip{ns^*}\}$, have copula $C_1^{\scs (h)}$ and $\bm Y_i^{\scs (h)}$, $i \in \{\ip{ns^*}+1,\dots,n\}$, have copula $C_2^{\scs (h)}$. Under this model and additional regularity conditions, we obtain that
\[
n^{-1} S_{n,C^{(h)}} \leadsto \kappa_h \equiv \{ s^*(1-s^*) \}^2 \int_{[0,1]^h}  \{ C_1^{(h)} (\bm u) - C_2^{(h)}(\bm u) \}^2 \, \dd C_{s^*}^{(h)}(\bm u),
\]
where $C_{\scs s^*}^{\scs (h)} =  s^* C_1^{\scs (h)} + (1-s^*)  C_2^{\scs (h)}$. In other words, the dominating term in an asymptotic expansion of $S_{n,C^{(h)}}$ diverges to infinity at rate $n$, with scaling factor $\kappa_h$ depending on $h$. Since we conjecture that the bootstrap replicates of $S_{n,C^{(h)}}$ are of lower order than $O_\Pr(n)$ for any $h$, we further conjecture that the power curves of the test will  be controlled to a large extent by the ``signal of non-stationarity'' $\kappa_h$. The impact of $h$ on this quantity is ambiguous, but, in many sensible models, it is decreasing in $h$ eventually, inducing a sort of ``curse of dimensionality''. This results in a smaller power of the corresponding test for larger $h$ and fixed sample size $n$, as will be empirically confirmed in several scenarios considered in the Monte Carlo experiments of Section~\ref{sec:MC} and in the supplementary material.

Additionally, several arguments lead us to assume that smaller values of~$h$ also yield a better approximation of the nominal level. From an empirical perspective, this will be confirmed for all the scenarios under stationarity in our Monte Carlo experiments. While we are not aware of any theoretical result for our quite general serially dependent setting (that would include the dependent multiplier bootstrap), some results are available for the i.i.d.\ or non-bootstrap case. For instance, \cite{CheCheKat13} provide bounds on the approximation error of i.i.d.\ sum statistics by an i.i.d.\ multiplier bootstrap; the bounds are increasing in the dimension~$h$. Moreover, the asymptotics of our test statistics relying on the asymptotics of empirical processes, we would be interested in a good approximation of empirical processes by their limiting counterparts. As shown in \cite{DedMerRio14} for the case of beta-mixing random variables, the approximation error by strong approximation techniques is again increasing in $h$.

Globally, the above arguments as well as the results of the simulation study in Section~\ref{sec:MC} below and in the supplementary material suggest that a rather small value of $h$, for instance in \{2,3,4\}, should be sufficient to test strong stationarity in many situations. Such a choice would provide relatively powerful tests for many interesting alternatives without strongly suffering from the curse of dimensionality. Depending on the ultimate interest, one might also consider choosing $h$ differently, e.g., as the ``forecast horizon''. Finally, a natural research direction would consist of developing data-driven procedures for choosing $h$, for instance following ideas developed in \cite{EscLob09} for testing serial correlation in a time series. However, such an analysis appears to be a research topic in itself and lies beyond the scope of the present paper.

\section{A combined test of second-order stationarity}
\label{sec:sotests}

Starting from the general framework considered in \cite{BucKoj16b} and proceeding as in Section~\ref{sec:rank}, one can derive a combined test of second-order stationarity. Given the embedding dimension $h \geq 2$ and the available univariate observations $X_1,\dots,X_{n+h-1}$, let $\bm Z_i^{\scs (q)}$, $i \in \{1,\dots,n\}$, be the random variables defined by
\begin{equation}
\label{eq:Zi}
\bm Z_i^{(q)} = \left\{
\begin{array}{ll}
X_i, & \mbox{if } q=1, \\
(X_i,X_{i+q-1}), & \mbox{if } q \in \{2,\dots,h\}.
\end{array}
\right.
\end{equation}
Let $\phi$ be a symmetric, measurable function on $\R \times \R$ or on $\R^2 \times \R^2$. Then, the \emph{$U$-statistic of order~2 with kernel $\phi$} obtained from the subsample $\bm Z_k^{\scs (q)},\dots,\bm Z_l^{\scs (q)}$, $1 \leq k < l \leq n$, is given by
\begin{equation}
\label{eq:Uklqphi}
U_{k:l,q,\phi,} = \frac{1}{\binom{l-k+1}{2}} \sum_{k \leq i < j \leq l} \phi(\bm Z_i^{(q)},\bm Z_j^{(q)}).
\end{equation}
We focus on CUSUM tests for change-point detection based on the generic statistic
\begin{equation}
\label{eq:Snqphi}
S_{n,q,\phi} = \max_{2 \leq k \leq n-2} | \U_{n,q,\phi} (k/n) | = \sup_{s \in [0,1]} | \U_{n,q,\phi}(s) |,
\end{equation}
where
\begin{equation*}
\U_{n,q,\phi}(s) = \sqrt n \lambda_n(0,s) \lambda_n(s,1) ( U_{1:\ip{ns},q,\phi} - U_{\ip{ns}+1:n,q,\phi} )  \qquad \mbox{if } s \in [2/n,1-2/n],
\end{equation*}
and $\U_{n,q,\phi}(s)=0$ otherwise.

With the aim of assessing whether second-order stationarity is plausible, the following possibilities for $q \in \{1,\dots,h\}$ and the kernel $\phi$ are of interest: If $q=1$ and $\phi(z,z') = m(z, z') = z$, $z,z' \in \R$, the statistic $S_{n,q,\phi} = S_{n,1,m}$ is (asymptotically equivalent to) the classical CUSUM statistic that is particularly sensitive to changes in the expectation of $X_1,\dots,X_n$. Similarly, setting $q=1$ and $\phi(z,z') = v(z, z') = (z - z')^2 / 2$, $z,z' \in \R$, gives rise to the statistic $S_{n,1,v}$ particularly sensitive to changes in the variance of the observations. For $q \in \{2,\dots,h\}$, setting $\phi(\bm z, \bm z') = a(\bm z, \bm z') = (z_1 - z_1') (z_2 - z_2') / 2$, $\bm z, \bm z' \in \R^2$, results in the CUSUM statistic $S_{n,q,a}$ sensitive to changes in the autocovariance at lag $q-1$.

From \cite{BucKoj16b}, CUSUM tests based on $S_{n,1,m}$, $S_{n,1,v}$  and $S_{n,q,a}$, $q \in \{ 2, \dots, h\}$, sensitive to changes in the expectation, variance and autocovariances, respectively, can all be carried out using a resampling scheme based on dependent multiplier sequences. As a consequence, they can be combined by proceeding as in Sections~\ref{sec:combine:tests} and~\ref{sec:comb:cop:df}. Specifically, for the generic test based on $S_{n,q,\phi}$, let $(\xi_{i,n}^{\scs [m]})_{i \in \Z}$, $m \in \N$, be independent copies of the same dependent multiplier sequence and, for any $m \in \N$ and $s \in [0,1]$, let
\begin{equation*}
\hat \U_{n,q,\phi}^{[m]} (s)  = \frac{2}{\sqrt{n}} \sum_{i=1}^{\ip{ns}} \xi_{i,n}^{(m)} \hat \phi_{1,1:n} (\bm Z_i^{(q)}) - \lambda_n(0,s) \times \frac{2}{\sqrt{n}} \sum_{i=1}^n \xi_{i,n}^{(m)} \hat \phi_{1,1:n}(\bm Z_i^{(q)}), \qquad \mbox{if } s \in [2/n,1-2/n],
\end{equation*}
and $\hat \U_{n,q,\phi}^{[m]} (s) = 0$ otherwise, where
\begin{equation*}
\hat \phi_{1,1:n}(\bm Z_i^{(q)}) = \frac{1}{n-1} \sum_{j=1, j \neq i}^n \phi(\bm Z_i^{(q)}, \bm Z_j^{(q)}) - U_{1:n,q,\phi}, \qquad i \in \{1,\dots,n\},
\end{equation*}
with $U_{1:n,q,\phi}$ defined by~\eqref{eq:Uklqphi}. Then, multiplier replications of $S_{n,q,\phi}$ are given by
\begin{equation*}
\hat S_{n,q,\phi}^{[m]} = \max_{2 \leq k \leq n-2} | \hat \U_{n,q,\phi}^{[m]} (k/n) | = \sup_{s \in [0,1]} | \hat \U_{n,q,\phi}^{[m]}(s) |, \qquad m \in \N,
\end{equation*}
and an approximate p-value for $S_{n,q,\phi}$ can be computed as $p_{n,M}(S_{n,q,\phi}) = M^{-1} \sum_{m=1}^M \1 \big( \hat S_{n,q,\phi}^{\scs [m]} \geq S_{n,q,\phi} \big)$ for some large integer $M \in \N$.

To obtain a test of second-order stationarity, we use again the combining procedure of Section~\ref{sec:combine:tests}, this time, with $r = h + 1$, $T_{n,1} = S_{n,1,m}$, $T_{n,2} = S_{n,1,v}$ and $T_{n,q+1} = S_{n,q,a}$, $q \in \{2,\dots,h\}$, for some function $\psi:(0,1)^{h+1} \to \R$ decreasing in each of its arguments such as $\psi_S$ in~\eqref{eq:stouffer} or $\psi_F$ in~\eqref{eq:fisher}. As in Section~\ref{sec:comb:cop:df}, to compute bootstrap replicates of the components of $\bm T_n = (T_{n,1},\dots,T_{n,r})$, we use the same $M$ dependent multiplier sequences. Specifically, we first estimate $\ell_n$ from $X_1,\dots,X_{n}$ as explained in \citet[Section 2.4]{BucKoj16b} for $\phi=m$. Then, with the obtained value of $\ell_n$, we generate $M$ independent copies of the same dependent multiplier sequence using~\eqref{eq:movave} and compute the corresponding multiplier replicates $\hat S_{n,q,\phi}^{\scs [1]},\dots,\hat S_{n,q,\phi}^{\scs [M]}$ for $q=1$ and $\phi \in \{m,v\}$, and for $q \in \{ 2, \dots, h\}$ and $\phi = a$. 

As in Section~\ref{sec:comb:cop:df}, to establish the asymptotic validity of the global test under stationarity using Proposition~\ref{prop:combined:general}, we need to establish conditions under which, using the notation of Section~\ref{sec:combine:tests}, $\bm T_n = (T_{n,1},\dots,T_{n,r})$ and its bootstrap replicates satisfy~\eqref{eq:uncond:Tn} or, equivalently,~\eqref{eq:cond:Tn}. The latter can be proved by starting from Proposition~2.5 in \cite{BucKoj16b} and by proceeding as in the proofs of the results stated in Section~\ref{sec:comb:cop:df}. For the sake of simplicity, the conditions in the following proposition require that $X_1,\dots,X_{n+h-1}$ is a stretch from an absolutely regular sequence. Indeed, assuming that $(X_i)_{i \in \Z}$ is only strongly mixing leads to significantly more complex statements. Recall that the absolute regularity coefficients corresponding to a sequence $(Z_i)_{i \in \Z}$ are defined by
$$
\beta_r^Z = \sup_{p \in \Z} \Ex \sup_{A \in \FF_{p+r}^\infty} | \Pr(A \mid \FF_{-\infty}^p ) - \Pr(A) |, \qquad r \in \N, \, r > 0,
$$
where $\FF_a^b$ is defined above~\eqref{eq:alpha}. The sequence $(Z_i)_{i \in \N}$ is then said to be \emph{absolutely regular} if $\beta_r \to 0$ as $r \to \infty$. As $\alpha_r^Z \leq \beta_r^Z$, absolute regularity implies strong mixing.

\begin{prop}
Let $X_1,\dots,X_{n+h-1}$ be drawn from a strictly stationary sequence $(X_i)_{i \in \Z}$ such that $\Ex\{|X_1|^{2(4+\delta)}\} < \infty$ for some $\delta > 0$. Also, let $(\xi_{\scs i,n}^{\scs [1]})_{i \in \Z}$ and $(\xi_{\scs i,n}^{\scs [2]})_{i \in \Z}$ be independent copies of a dependent multiplier sequence satisfying~($\M 1$)--($\M 3$) in Appendix~\ref{app:dep} with $\ell_n = O(n^{1/2 - \gamma})$ for some $1/(6+2\delta) < \gamma < 1/2$. Then, if 
$\beta_r^X = O(r^{-b})$ for some $b > 2(4 + \delta)/\delta$ as $r \to \infty$,~\eqref{eq:uncond:Tn} or, equivalently,~\eqref{eq:cond:Tn}, hold.

\end{prop}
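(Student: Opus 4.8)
The plan is to reduce the joint statement for $\bm T_n = (T_{n,1},\dots,T_{n,r})$ with $r = h+1$ to a joint weak-convergence statement at the level of the underlying empirical processes, exactly as was done for the rank-based test in Proposition~\ref{prop:combined1}. The building blocks are the $U$-statistic-based processes $\U_{n,q,\phi}$ and their multiplier counterparts $\hat\U_{n,q,\phi}^{[m]}$. The key input is Proposition~2.5 of \cite{BucKoj16b}, which, for each fixed pair $(q,\phi)$, gives the joint weak convergence of $\U_{n,q,\phi}$ together with its multiplier replicates $(\hat\U_{n,q,\phi}^{[1]},\dots,\hat\U_{n,q,\phi}^{[M]})$ towards a Gaussian limit process and independent copies thereof, in $\{\ell^\infty([0,1])\}^{M+1}$, under the stated moment condition $\Ex\{|X_1|^{2(4+\delta)}\}<\infty$ and absolute-regularity rate $\beta_r^X = O(r^{-b})$ with $b > 2(4+\delta)/\delta$, provided the multiplier sequences satisfy ($\M1$)--($\M3$) with $\ell_n = O(n^{1/2-\gamma})$ for $1/(6+2\delta) < \gamma < 1/2$.

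First I would set up the \emph{vector} of processes $\bm\U_n = (\U_{n,1,m}, \U_{n,1,v}, \U_{n,2,a}, \dots, \U_{n,h,a})$, which lives in $\{\ell^\infty([0,1])\}^{r}$, together with the analogously stacked multiplier versions $\hat{\bm\U}_n^{[m]}$, $m\in\{1,\dots,M\}$, all generated from the \emph{same} $M$ dependent multiplier sequences (equivalently, from the same $M$ initial i.i.d.\ normal sequences via the moving-average construction~\eqref{eq:movave}). The crucial observation is that each component process is, up to asymptotically negligible terms, a linear functional of a single underlying multivariate sequential empirical process indexed by the relevant kernels; hence the joint convergence of all $r(M+1)$ processes follows from a single application of a multivariate sequential empirical process limit theorem together with the multiplier CLT, rather than from concatenating $r$ separate statements. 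Concretely, one writes each $\U_{n,q,\phi}(s)$ via its Hájek projection $\hat\phi_{1}$ (the conditional-expectation linearization of the $U$-statistic kernel), so that the vector $\bm\U_n$ and its replicates are continuous images of the partial-sum / multiplier-partial-sum processes built from the $r$-dimensional vector $(\hat\phi_{1,m}(X_i), \hat\phi_{1,v}(X_i), \hat\phi_{1,a}(X_i,X_{i+1}), \dots)$. Invoking the relevant functional CLT for absolutely regular sequences (Proposition~2.5 of \cite{BucKoj16b}, applied jointly — which is legitimate since the projection variables for different $(q,\phi)$ are measurable functions of overlapping blocks of the same $(X_i)$) delivers
\[
\big(\bm\U_n, \hat{\bm\U}_n^{[1]}, \dots, \hat{\bm\U}_n^{[M]}\big) \leadsto \big(\bm\U, \bm\U^{[1]}, \dots, \bm\U^{[M]}\big)
\]
in $\{\ell^\infty([0,1])\}^{r(M+1)}$, with $\bm\U^{[1]},\dots,\bm\U^{[M]}$ independent copies of $\bm\U$ and independent of $\bm\U$. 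Applying the continuous mapping theorem with the map $g(f) = \sup_{s\in[0,1]}|f(s)|$ componentwise then yields
\[
\big(\bm T_n, \bm T_n^{[1]}, \dots, \bm T_n^{[M]}\big) \leadsto \big(\bm T, \bm T^{[1]}, \dots, \bm T^{[M]}\big),
\]
which, specialized to $M=2$, is exactly~\eqref{eq:uncond:Tn}; the equivalence with~\eqref{eq:cond:Tn} is then the already-cited Lemma~2.2 of \cite{BucKoj18}, using that $\bm T$ inherits a continuous d.f.\ from the absolute continuity of its marginals (each $S_{n,q,\phi}$ converging to the sup of a squared Gaussian process, absolutely continuous by the results of \cite{DavLif84}) — or, more carefully, one need only invoke the continuous d.f.\ of $\bm T$ as an ingredient of that lemma, which follows from the non-degeneracy of the Gaussian limits.

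The main obstacle I anticipate is twofold. First, one must verify that the approximation of $\U_{n,q,\phi}$ by its Hájek projection is uniform in $s$ and holds \emph{jointly} across all $r$ choices of $(q,\phi)$ in the appropriate $\ell^\infty$ norm, and likewise that the multiplier replicates $\hat\U_{n,q,\phi}^{[m]}$ — which are built directly from the estimated projections $\hat\phi_{1,1:n}$ rather than the true ones — are uniformly close to the ``oracle'' multiplier processes built from the true $\hat\phi_1$; this is where the moment condition $\Ex\{|X_1|^{2(4+\delta)}\}<\infty$ (rather than the $4+\delta$ moments that would suffice for a single $\phi=m$) enters, because the autocovariance kernel $a(\bm z,\bm z')$ is quadratic and its projection involves products $X_iX_{i+q-1}$, whose squared-integrability-of-the-square requires eighth-plus moments. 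Second, the bandwidth $\ell_n$ is estimated from the data (as in \citet[Section 2.4]{BucKoj16b}) and then shared across all $r$ tests; one must check that this data-dependent choice is asymptotically compatible with the deterministic rate $\ell_n = O(n^{1/2-\gamma})$ required by the multiplier CLT, so that the consistency of the estimator (together with a standard conditioning/subsequence argument, or monotonicity in $\ell_n$ as exploited in the cited reference) transfers the unconditional convergence through. Both points are handled by adapting, essentially verbatim, the arguments in the proofs of Propositions~\ref{prop:S_mult} and~\ref{prop:combined1} in Appendix~\ref{app:proofs}, replacing the empirical-autocopula ingredients by the $U$-statistic ingredients of \cite{BucKoj16b}; the bookkeeping is heavier but introduces no genuinely new difficulty.
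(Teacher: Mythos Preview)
Your proposal is correct and follows essentially the same route as the paper, which merely states that the result is obtained ``by starting from Proposition~2.5 in \cite{BucKoj16b} and by proceeding as in the proofs of the results stated in Section~\ref{sec:comb:cop:df}''. Your write-up supplies the details the paper omits (stacking the $\U_{n,q,\phi}$ processes, joint tightness plus fidi convergence via the common multiplier sequences, continuous mapping to the sup-statistics), and your identification of the reason for the $2(4+\delta)$-moment requirement is on point; the only caveat is that continuity of the \emph{joint} d.f.\ of $\bm T$ does not follow from marginal absolute continuity alone, so that step remains as loose here as it is in the paper itself.
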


\section{Monte Carlo experiments}
\label{sec:MC}

Extensive simulations were carried out in order to try to answer several fundamental questions (hereafter in bold) regarding the tests proposed in Sections~\ref{sec:rank} and~\ref{sec:sotests}. For the sake of readability, we only present a small subset of the performed Monte Carlo experiments in detail and refer the reader to the supplementary material for more results. Before formulating the questions, we introduce abbreviations for the components tests whose behavior we investigated:
\vspace{-\medskipamount}
\begin{itemize}\parskip0pt
\item d for the d.f.\ test based on $S_{n,G}$ in~\eqref{eq:Sng},
\item c for the empirical autocopula test at lag $h-1$ based on $S_{n,C^{(h)}}$ in~\eqref{eq:SnCh} (the value of $h$ will always be clear from the context),
\item m for the sample mean test based on $S_{n,m}^{\scs (1)}$ defined generically by~\eqref{eq:Snqphi},
\item v for the variance test based on $S_{n,v}^{\scs (1)}$ defined generically by~\eqref{eq:Snqphi}, and
\item a for the autocovariance test at lag $q-1$ based on $S_{n,a}^{\scs (q)}$, $q \in \{2,\dots,h\}$, defined generically by~\eqref{eq:Snqphi} (the value of $q$ will always be clear from the context).
\end{itemize}
\vspace{-\medskipamount}

\noindent With these conventions, the following abbreviations are used for the combined tests:
\vspace{-\medskipamount}
\begin{itemize}\parskip0pt
\item dc: equally weighted combination of the tests d and c for embedding dimension $h$ or, equivalently, lag $h-1$,
\item va: combination of the test v with weight 1/2 and the autocovariance tests a for lags $q \in \{1,\dots,h-1\}$ with equal weights $1/\{2(h-1)\}$,
\item mva: combination of the test m with weight 1/3, of the variance test v with weight 1/3 and the autocovariance tests a for lags $q \in \{1,\dots,h-1\}$ with equal weights $1/\{3(h-1)\}$,
\item dcp: combination of the test d with weight 1/2 with pairwise bivariate empirical autocopula tests for lags $1,\dots,h-1$ with equal weights $1/\{2(h-1)\}$; in other words, the d.f.\ test based on $S_{n,G}$ in~\eqref{eq:Sng} is combined with $S_{n,C^{\scs (2)}}$ in~\eqref{eq:SnCh} and $S_{n,\tilde C^{\scs (3)}},\dots,S_{n,\tilde C^{(h)}}$, where the latter are the analogues of $S_{n, C^{\scs (2)}}$ but for lags $2,\dots,h-1$ (that is, they are computed from~\eqref{eq:Zi} for $q \in \{3,\dots,h\}$).
\end{itemize}
\vspace{-\medskipamount}
The above choices for the weights are arbitrary and thus clearly debatable. An ``optimal'' strategy for the choice of the weights is beyond the scope of this work. For the function $\psi$ in Sections~\ref{sec:rank} and~\ref{sec:sotests}, we only consider $\psi_F$ in~\eqref{eq:fisher} as the use of $\psi_S$ in~\eqref{eq:stouffer} sometimes gave inflated levels.

Let us now state the fundamental questions concerning the studied tests that we attempted to answer empirically by means of a large number of Monte Carlo experiments.

\paragraph{Do the studied component and combined tests maintain their level?} As is explained in detail in the supplementary material, ten strictly stationarity models, including ARMA, GARCH and nonlinear autoregressive models with either normal or Student $t$ with 4 degrees of freedom innovations, were used to generate observations under the null hypothesis of stationarity. The rank-based tests of Section~\ref{sec:rank}, that is, d, c, dc and dcp, were never found to be too liberal, while some of the second-order tests of Section~\ref{sec:sotests}, namely, v, va and mva, were found to reject stationarity too often for a particular GARCH model mimicking S\&P500 daily log-returns.

\paragraph{How do the rank-based tests of Section~\ref{sec:rank} compare to the second-order tests of Section~\ref{sec:sotests} in terms of power?} 
As presented in detail in the supplementary material, to investigate the power of the tests, eight models connected to the literature on locally stationary processes were considered alongside with four models more in line with the change-point detection literature. All tests were found to have reasonable power for at least one (and, usually, several) of the alternatives under consideration. The combined rank-based tests proposed in Section~\ref{sec:rank}, that is, dc or dcp, were found, overall, to be more powerful than the combined second-order tests, namely, va or mva, even in situations involving changes in the second-order characteristics of the underlying time series. 

\paragraph{How are the powers of the proposed component and combined tests related?} For the sake of illustration, we only focus on the component tests d and c, and the combined test dc, and consider three simple data generating models:
\vspace{-\medskipamount}
\begin{itemize}
\parskip0pt
\item[D($\sigma$) -] ``Change in the contemporary distribution only'': The $n/2$ first observations are i.i.d.\ from the $N(0,\sigma^2)$ distribution and the $n/2$ last observations are i.i.d.\ from the $N(0,1)$ distribution.

\item[S($\beta$) -] ``Change in the serial dependence only'': The $n/2$ first observations are i.i.d.\ standard normal and the $n/2$ last observations are drawn from an AR(1) model with parameter $\beta$ and centered normal innovations with variance $(1-\beta^2)$. The contemporary distribution is thus constant and equal to the standard normal.

\item[DS($\sigma$, $\beta$) -] ``Change in the contemporary distribution and the serial dependence'': The $n/2$ first observations are i.i.d.\ from the $N(0,\sigma^2)$ distribution and the $n/2$ last observations are drawn from an AR(1) model with parameter $\beta$ and $N(0,1)$ innovations.

\end{itemize}
At the 5\% significance level, the rejection percentages of the null hypothesis of stationarity computed from 1000 samples of size $n=128$ from model D($\sigma$), S($\beta$) or DS($\sigma$, $\beta$) for various values of $\sigma$ and $\beta$ are given in Table~\ref{relationship} for the tests d, c and dc for $h=2$. As one can see from the first four rows of the table, when one of the component tests has hardly any power, a ``dampening effect'' occurs for the combined test. However, when the two components tests tend to detect changes, most of the time, simultaneously, a ``reinforcement effect'' seems to occur for the combined test as can be seen from the last two rows of the table. 

\begin{table}[t!]
\centering
\caption{Percentages of rejection of the null hypothesis of stationarity computed from 1000 samples of size $n=128$ from model D($\sigma$), S($\beta$) or DS($\sigma$, $\beta$) for various values of $\sigma$ and $\beta$. The meaning of the abbreviations d, c, dc is given in Section~\ref{sec:MC}.} 
\label{relationship}
\begin{tabular}{lrrr}
  \hline
  \multicolumn{1}{c}{} & \multicolumn{3}{c}{$h=2$ or lag 1} \\ \cmidrule(lr){2-4} Model & d & c & dc \\ \hline
D(2): `Small change in contemporary dist.\ only' & 33.6 & 2.2 & 16.4 \\ 
  D(3): `Large change in contemporary dist.\ only' & 81.6 & 1.6 & 59.2 \\ 
  S(0.3): `Small change in serial dep.\ only' & 6.4 & 19.6 & 16.6 \\ 
  S(0.9): `Large change in serial dep.\ only' & 13.8 & 64.2 & 62.8 \\ 
  DS(2, 0.4): `Small change in both' & 17.2 & 28.8 & 35.4 \\ 
  DS(4, 0.7): `Large change in both' & 75.6 & 70.0 & 92.6 \\ 
   \hline
\end{tabular}
\end{table}

\paragraph{Is the combined test dc truly more powerful than a simple multivariate extension of the test d designed to be directly sensitive to departures from $H_0^{(h)}$ in~\eqref{eq:H0:Fh}?} Note that to implement the latter test for a given embedding dimension $h$, it suffices to proceed as in Section~\ref{sec:dftest} but by using the $h$-dimensional empirical d.f.s of the $h$-dimensional random vectors $\bm Y_i^{\scs (h)}$ in~\eqref{eq:Yi} instead of the one-dimensional empirical d.f.s generically given by~\eqref{eq:Gkl}. Let dh be the abbreviation of this test. To provide an empirical answer to the above question, we consider a similar setup as previously. The rejection percentages of the null hypothesis of stationarity computed from 1000 samples of size $n=128$ from model D($\sigma$), S($\beta$) or DS($\sigma$, $\beta$) for various values of $\sigma$ and $\beta$ are given in Table~\ref{joint} for the tests dc, dcp and dh for $h \in \{2,3\}$. As one can see, the test dh seems to have hardly any power when the non-stationarity is only due to a change in the serial dependence. Furthermore, even when the non-stationarity results from a change in the contemporary distribution, the test dh appears to be less powerful, overall, than the combined tests dc and dcp.

\begin{table}[t!]
\centering
\caption{Percentages of rejection of the null hypothesis of stationarity computed from 1000 samples of size $n=128$ from model D($\sigma$), S($\beta$) or DS($\sigma$, $\beta$) for various values of $\sigma$ and $\beta$. The meaning of the abbreviations dc, dcp and dh is given in Section~\ref{sec:MC}.} 
\label{joint}
\begin{tabular}{lrrrrr}
  \hline
  \multicolumn{1}{c}{} & \multicolumn{2}{c}{$h=2$} & \multicolumn{3}{c}{$h=3$ or lag 2} \\ \cmidrule(lr){2-3} \cmidrule(lr){4-6} Model & dc & dh & dc & dcp & dh \\ \hline
D(2): `Small change in contemporary dist.\ only' & 16.4 & 21.8 & 17.8 & 26.6 & 24.8 \\ 
  D(3): `Large change in contemporary dist.\ only' & 59.2 & 52.4 & 58.8 & 73.0 & 44.0 \\ 
  S(0.3): `Small change in serial dep.\ only' & 16.6 & 7.2 & 18.2 & 13.0 & 9.0 \\ 
  S(0.9): `Large change in serial dep.\ only' & 62.8 & 15.6 & 63.0 & 65.0 & 16.0 \\ 
  DS(2, 0.4): `Small change in both' & 35.4 & 20.6 & 42.2 & 34.8 & 30.0 \\ 
  DS(4, 0.7): `Large change in both' & 92.6 & 67.6 & 92.4 & 91.6 & 71.6 \\ 
   \hline
\end{tabular}
\end{table}

\paragraph{What is the influence of the choice of the embedding dimension $h$ on the empirical levels and the powers of the proposed tests?} The extensive simulations results available in the supplementary material indicate that, under the null hypothesis of stationarity, the tests c and dc tend, overall, to become more and more conservative as $h$ increases for fixed sample size~$n$. For fixed $h$, the empirical levels seem to get closer to the 5\% nominal level as $n$ increases, as expected theoretically. To convey some intuitions on the influence on $h$ on the empirical power under non-stationarity, we consider again the same setup as before and plot the rejection percentages of c and dc computed from 1000 samples of size $n=128$ from models D(2), D(3), S(0.3), S(0.9), DS(2,0.4) and DS(4,0.7) against the embedding dimension~$h$. As one can see from Figure~\ref{fig:h}, for the models under consideration, the empirical powers of the tests c and dc essentially decrease as $h$ increases. Additional simulations presented in the supplementary material and involving an AR(2) model instead of an AR(1) model for the serial dependence show that a similar pattern occurs from $h=3$ onwards in that case. Indeed, as discussed in Section~\ref{sec:hchoice}, for many models including those that were just mentioned, the power of the tests appears to be a decreasing function of $h$, at least from some small value of $h$ onwards.  

\begin{figure}[t!]
\begin{center}
\includegraphics*[width=1\linewidth]{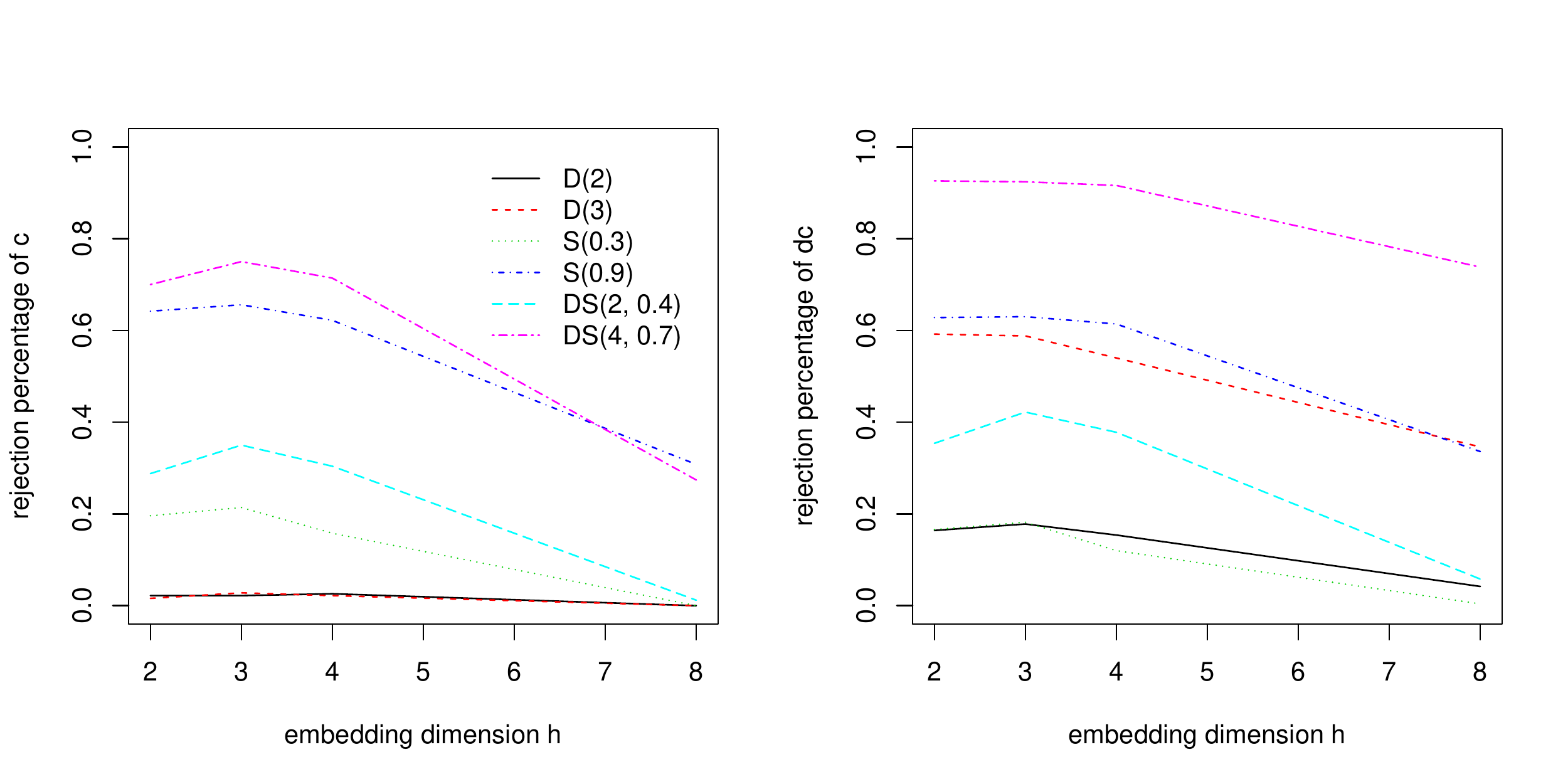}
\caption{\label{fig:h} Rejection percentages of c and dc against the embedding dimension $h \in \{2,3,4,8\}$ computed from 1000 samples of size $n=128$ from models D(2), D(3), S(0.3), S(0.9), DS(2,0.4) and DS(4,0.7).}
\end{center}
\end{figure}

\paragraph{How do the studied tests compare to existing competitors?}

As mentioned in the introduction, many tests of stationarity were proposed in the literature. Unfortunately, only a few of them seem to be implemented in statistical software. In the supplementary material, we report the results of Monte Carlo experiments investigating the finite-sample behavior of the tests of \cite{PriSub69}, \cite{Nas13} and \cite{CarNas13} that are implemented in the \textsf{R} packages \texttt{fractal} \citep{fractal}, \texttt{locits} \citep{locits} and \texttt{costat} \citep{costat}, respectively. Note that we did not consider the test of \cite{CarNas16} (implemented in the \textsf{R} package \texttt{BootWPTOS}) because we were not able to understand how to initialize the arguments of the corresponding \textsf{R} function. Under stationarity, unlike the rank-based tests d, c, dc and dcp, the three aforementioned tests were found to be too liberal for at least one of the considered models. Their behavior under the null turned out to be even more disappointing when heavy tailed innovations were used. In terms of empirical power, the results presented in the supplementary material allow in principle for a direct comparison with the results reported in \cite{CarNas13} and \cite{DetPreVet11}. Since the tests available in \textsf{R} considered in \cite{CarNas13} are far from maintaining their levels, a comparison in terms of power with these tests is clearly not meaningful. As far as the tests of \cite{DetPreVet11} are concerned, they appear, overall, to be more powerful for some of the considered models. It is however unknown whether they hold their levels when applied to stationary heavy-tailed observations as only Gaussian time series were considered in the simulations of \cite{DetPreVet11}.

\section{Illustrations}
\label{sec:illus}

By construction, the tests based on the sample mean, variance and autocovariance proposed in Section~\ref{sec:sotests} are only sensitive to changes in the second-order characteristics of a time series. The results of the simulations reported in the previous section and in the supplementary material seem to indicate that the latter tests do not always maintain their level (for instance, in the presence of conditional heteroskedasticity) and that the rank-based tests proposed in Section~\ref{sec:rank} are more powerful, even in situations only involving changes in the second-order characteristics. Therefore, we recommend the use of the rank-based tests in general.

To illustrate their application, we consider two real datasets, both available in the \textsf{R} package \texttt{copula} \citep{copula}. The first one consists of daily log-returns of Intel, Microsoft and General Electric stocks for the period from 1996 to 2000. It was used in \citet[Chapter~5]{McNFreEmb05} to illustrate the fitting of elliptical copulas. The second dataset was initially considered in \cite{GreGenGen08} to illustrate the so-called \emph{copula--GARCH} approach \citep[see, e.g.,][]{CheFan06,Pat06}. It consists of bivariate daily log-returns computed from three years of daily prices of crude oil and natural gas for the period from July 2003 to July 2006.

Prior to applying the methodologies described in the aforementioned references, it is crucial to assess whether the available data can be regarded as stretches from stationary multivariate time series. As multivariate versions of the proposed tests would need to be thoroughly investigated first (see the discussion in the next section), as an imperfect alternative, we applied the studied univariate versions to each component time series. The results are reported in Table~\ref{illus}. For the sake of simplicity, we shall ignore the necessary adjustment of p-values or global significance level due to multiple testing.

\begin{table}[t!]
\centering
\caption{Approximate p-values (multiplied by 100) of the rank-based tests of stationarity proposed in Section~\ref{sec:rank} for embedding dimension $h \in \{2,3,4\}$ applied to the component times series of the trivariate log-return data considered in \citet[Chapter~5]{McNFreEmb05} and the bivariate log-return data considered in \cite{GreGenGen08}. The daily log-returns of the Intel, Microsoft and General Electric stocks are abbreviated by INTC, MSFT and GE, respectively. The meaning of the abbreviations d, c, dc and dcp is given in Section~\ref{sec:MC}. The columns c2 and c3 report the results for the bivariate analogues of the test based on $S_{n,C^{\scs (2)}}$ defined by~\eqref{eq:SnCh} (which arise in the combined test dcp) for lags 2 and 3.}
\label{illus}
\begin{tabular}{lrrrrrrrrrrr}
  \hline
   \multicolumn{2}{c}{} & \multicolumn{2}{c}{$h=2$ or lag 1} & \multicolumn{4}{c}{$h=3$ or lag 2} & \multicolumn{4}{c}{$h=4$ or lag 3} \\ \cmidrule(lr){3-4} \cmidrule(lr){5-8} \cmidrule(lr){9-12} Variable & d & c & dc & c & dc & c2 & dcp & c & dc & c3 & dcp \\ \hline
INTC & 0.0 & 2.0 & 0.0 & 4.8 & 0.0 & 32.5 & 0.0 & 7.9 & 0.0 & 30.2 & 0.0 \\
  MSFT & 0.2 & 92.3 & 2.2 & 80.7 & 0.8 & 47.3 & 0.0 & 86.4 & 0.1 & 37.2 & 0.0 \\
  GE & 0.1 & 62.1 & 0.7 & 15.9 & 0.1 & 67.2 & 0.0 & 22.4 & 0.6 & 16.7 & 0.1 \\
   \hline \ oil & 89.6 & 22.1 & 52.5 & 55.3 & 84.0 & 46.5 & 67.8 & 89.0 & 97.2 & 5.6 & 49.0 \\
  gas & 5.0 & 16.5 & 3.9 & 17.4 & 5.4 & 90.5 & 7.4 & 43.9 & 8.8 & 85.2 & 6.2 \\
   \hline
\end{tabular}
\end{table}

As one can see from the results of the combined tests dc and dcp for embedding dimension $h \in \{2,3,4\}$, there is strong evidence against stationarity in the component series of the trivariate log-return data considered in \citet[Chapter~5]{McNFreEmb05}. For all three series, the very small p-values of the combined tests are a consequence of the very small p-value of the test d focusing on the contemporary distribution. For the Intel stock (line INTC), it is also a consequence of the small p-value of the test c for $h=2$. Although it is for instance very tempting to conclude that the non-stationarity in the log-returns of the Intel stock is due to $H_0^{\scs (1)}$ in~\eqref{eq:H0:1} and $H_{0,c}^{\scs (2)}$ in~\eqref{eq:H0:Ch} not being satisfied, such a reasoning is not formally valid without additional assumptions, as explained in the introduction. From the second horizontal block of Table~\ref{illus}, one can also conclude that there is no evidence against stationarity in the log-returns of the oil prices and only weak evidence against stationarity in the log-returns of the gas prices.

\section{Concluding remarks} \label{sec:conc}

Unlike some of their competitors that are implemented in various \textsf{R} packages, the rank-based tests of stationarity proposed in Section~\ref{sec:rank} were never observed to be too liberal for the rather typical sample sizes considered in this work. As discussed in Section~\ref{sec:hchoice}, and as empirically confirmed by the experiments of Section~\ref{sec:MC} and the supplementary material, the tests are nevertheless likely to become more conservative and less powerful as the embedding dimension $h$ is increased. 
The latter led us to make the rather general recommendation that they should be typically used with a small value of the embedding dimension $h$ such as 2, 3 or 4. It is however difficult to assess the breadth of that recommendation and it might be meaningful for the practitioner to consider the issue of the choice of $h$ in all its subtlety as attempted in the discussion of Section~\ref{sec:hchoice}.

While, unsurprisingly, the recommended tests seem to display good power for alternatives connected to the change-point detection literature, their power was not observed to be very high, overall, for the locally stationary alternatives considered in our Monte Carlo experiments. A promising approach to improve on the latter aspect would be to derive extensions of the tests allowing the comparison of blocks of observations in the spirit of \cite{HusSla01} and of \cite{KirMuh16}: once the time series is divided into moving blocks of  equal length, the main idea is to compare successive pairs of blocks by means of a statistic based on a suitable extension of the process in~\eqref{eq:Dnh} (if the focus is on serial dependence) or in~\eqref{eq:En} (if the focus is on the contemporary distribution), and to finally aggregate the statistics for each pair of blocks.

Additional future research may consist of extending the proposed tests to multivariate time series. To fix ideas, let us focus on lag $h-1$ and consider a stretch $\bm X_i=(X_{i,1}, \dots, X_{i,d})$, $i \in \{1,\dots,n+h-1\}$ from a continuous $d$-dimensional time series. A straightforward extension of the approach considered in this work is first to define the $d\times h$-dimensional random vectors $\bm Y_i^{\scs (h)} = (\bm X_i, \dots, \bm X_{i+h-1})$, $i \in \{1,\dots,n\}$.  As argued in the introduction and in Section~\ref{sec:MC}, it will then be helpful in terms of finite sample power properties to split the hypothesis  $H_0^{\scs (h)}$ in~\eqref{eq:H0:Fh} into suitable sub-hypotheses. For $A\subset \{1, \dots, d\}$ and $B\subset \{0,\dots, h-1\}$, let
\begin{align*}
H_0^{(1)}(A): &\,\exists \, G^{A} \text{ such that } (X_{1,j})_{j\in A},\dots, (X_{n-h+1,j})_{j\in A}  \text{ have d.f.\ } G^{A}, \\
H_{0,c}^{(h)}(A,B): &\,\exists \, C^{(h), A,B} \text{ such that } (X_{1+s,j})_{s\in B, j \in A},\dots,(X_{n+s,j})_{s\in B, j \in A} \text{ have copula } C^{(h), A,B}.
\end{align*}
Letting $\bar d = \{1, \dots, d\}$ and $\bar h=\{0, \dots, h-1\}$, Sklar's theorem suggests the decomposition $H_0^{\scs (h)} = H_0^{\scs (1)}(\{1\}) \cap \dots \cap H_0^{\scs (1)}(\{d\}) \cap H_{0,c}^{\scs (h)}(\bar d, \bar h)$. However, preliminary numerical experiments indicate that a straightforward extension of the approach proposed in Section~\ref{sec:comb:cop:df} to this combined hypothesis does not seem to be very powerful. The latter might be due to the curse of dimensionality identified in Section~\ref{sec:hchoice} and the fact that, under stationarity, the $d \times h$-dimensional copula $C^{\scs (h), \bar d, \bar h}$ of the $\bm Y_i^{\scs (h)}$ arising in the aforementioned decomposition does not solely control the serial dependence in the time series but also the cross-sectional dependence. As a consequence, alternative combination strategies would need to be investigated in the multivariate case. As an imperfect alternative, one might for instance consider the following hypothesis
\begin{align*}
\Big( \textstyle \bigcap_{j =1}^d H_0^{(1)}(\{j\})   \Big) \cap  \Big( H_{0,c}^{(h)}(\bar d, \{0\})\Big)  \cap \Big( \bigcap_{j =1}^d H_{0,c}^{(h)}(\{j\}, \bar h) \Big),
\end{align*}
a combined test of which would be sensible to any changes in the marginals, the contemporary dependence or the marginal serial dependence. One may easily include further hypotheses related to cross-sectional and cross-serial dependencies, like for instance $\bigcap_{i\ne j\in \bar d}H_{0,c}^{\scs (h)}(\{i,j\}, \{0,1\})$. The amount of potential adaptations appears to be very large, whence a further investigation, in particular from a finite-sample point-of-view, is beyond the scope of this paper.


\appendix

\section{Dependent multiplier sequences}
\label{app:dep}

A sequence of  random variables $(\xi_{i,n})_{i \in \Z}$ is a \emph{dependent multiplier sequence} if the three following conditions are fulfilled:
\begin{enumerate}[({$\M$}1)]
\item The sequence $(\xi_{i,n})_{i \in \Z}$ is independent of the available sample $X_1,\dots,X_{n+h-1}$ and strictly stationary with $\Ex(\xi_{0,n}) = 0$, $\Ex(\xi_{0,n}^2) = 1$ and $\sup_{n \geq 1} \Ex(|\xi_{0,n}|^\nu) < \infty$ for all $\nu \geq 1$.
\item There exists a sequence $\ell_n \to \infty$ of strictly positive constants such that $\ell_n = o(n)$ and the sequence $(\xi_{i,n})_{i \in \Z}$ is $\ell_n$-dependent, i.e., $\xi_{i,n}$ is independent of $\xi_{i+p,n}$ for all $p > \ell_n$ and $i \in \N$.
\item There exists a function $\varphi:\R \to [0,1]$, symmetric around 0, continuous at $0$, satisfying $\varphi(0)=1$ and $\varphi(x)=0$ for all $|x| > 1$ such that $\Ex(\xi_{0,n} \xi_{p,n}) = \varphi(p/\ell_n)$ for all $p \in \Z$.
\end{enumerate}
Roughly speaking, such sequences extend to the serially dependent setting the multiplier sequences that appear in the \emph{multiplier central limit theorem} \citep[see, e.g.,][Theorem 10.1 and Corollary 10.3]{Kos08}. The latter result lies at the heart of the proof of the asymptotic validity of many types of bootstrap schemes for independent observations. In particular and as it shall become clearer below, the bandwidth parameter $\ell_n$ plays a role somehow similar to the block length in the block bootstrap of \cite{Kun89}.

Two ways of generating dependent multiplier sequences are discussed in \citet[Section 5.2]{BucKoj16}. Throughout  this work, we use the so-called \emph{moving average approach} based on an initial independent and identically distributed (i.i.d.) standard normal sequence and Parzen's kernel
$$
\kappa(x) = (1 - 6x^2 + 6|x|^3) \1(|x| \leq 1/2) + 2(1-|x|)^3 \1(1/2 < |x| \leq 1), \quad x \in \R.
$$
Specifically, let $(b_n)$ be a sequence of integers such that $b_n \to \infty$, $b_n = o(n)$ and $b_n \geq 1$ for all $n \in \N$. Let $Z_1,\dots,Z_{n+2b_n-2}$ be i.i.d. $\Nc(0,1)$. Then, let $\ell_n=2b_n-1$ and, for any $j \in \{1,\dots,\ell_n\}$, let $w_{j,n} = \kappa\{(j-b_n)/b_n\}$ and $\tilde w_{j,n} = w_{j,n} ( \sum_{j'=1}^{\ell_n} w_{j',n}^2 )^{-1/2}$. Finally, for all $i \in \{1,\dots,n\}$, let
\begin{equation}
\label{eq:movave}
\xi_{i,n} = \sum_{j=1}^{\ell_n} \tilde w_{j,n} Z_{j+i-1}.
\end{equation}
Then, as verified in \citet[Section 5.2]{BucKoj16}, the infinite size version of $\xi_{1,n},\dots,\xi_{n,n}$ satisfies Assumptions~($\M 1$)-($\M 3$), when $n$ is sufficiently large.

As can be expected, the bandwidth parameter $\ell_n$ (or, equivalently, $b_n$) will have a crucial influence on the finite-sample performance of the tests studied in this work. In practice, for the rank-based (resp.\ second-order) tests of Section~\ref{sec:rank} (resp.\ Section~\ref{sec:sotests}), we apply to the available univariate sequence $X_1,\dots,X_{n+h-1}$ the data-adaptive procedure proposed in \citet[Section 5.1]{BucKoj16} \citep[resp.][Section 2.4]{BucKoj16b}, which is based on the seminal work of \cite{PapPol01}, \cite{PolWhi04} and \cite{PatPolWhi09}, among others. Roughly speaking, the latter amounts to choosing $\ell_n$ as $K_n n^{1/5}$, which asymptotically minimizes a certain integrated mean squared error, for a constant $K_n$ that can be estimated from $X_1,\dots,X_{n+h-1}$. 

Monte Carlo experiments studying the finite-sample behavior of the data-adaptive procedure of \citet[Section 5.1]{BucKoj16} for estimating the bandwidth parameter $b_n$ can be found in \citet[Section 6]{BucKoj16}. A small simulation showing how the automatically-chosen bandwidth parameter $b_n$ is affected by the strength of the serial dependence in an AR(1) model is presented in the supplementary material.

\section{Proofs}
\label{app:proofs}

\begin{proof}[Proof of Proposition~\ref{prop:combined:general}]
As we continue, we adopt the notation $\bar F^*_{T_j}(x) = \Pr(T_{n,j}^{\scs [1]} \geq x \mid \bm X_n)$, $x \in \R$, $j \in \{1,\dots,r\}$. Note in passing that the functions $\bar F^*_{T_j}$ are random and that we can rewrite $W_n$ in~\eqref{eq:Wn} as $W_n = \psi\{ \bar F_{T_1}^*(T_{n,1}), \dots, \bar F_{T_r}^*(T_{n,r}) \}$. In addition, recall that $\bar F_{T_j}(x) = \Pr(T_j \geq x)$, $x \in \R$, $j \in \{1,\dots,r\}$. Combining either~\eqref{eq:uncond:Tn} or~\eqref{eq:cond:Tn} with Lemma~2.2 in \cite{BucKoj18} and Problem 23.1 in \cite{Van98}, we obtain that
\begin{equation}
\label{eq:aeSj}
\sup_{x \in \R} | \bar F_{T_j}^*(x) - \bar F_{T_j}(x)| \p 0, \quad j \in \{1,\dots,r\}.
\end{equation}
Furthermore, Lemma~2.2 in \cite{BucKoj18} implies that~\eqref{eq:aeSj} is equivalent to
\begin{equation}
\label{eq:aeSj2}
\sup_{x \in \R} \Big|\textstyle  \frac{1}{M_n} \sum_{i=1}^{M_n} \1(T_{n,j}^{[i]} \geq x) - \bar F_{T_j}(x) \Big| \p 0, \quad j \in \{1,\dots,r\}.
\end{equation}
Again, from Lemma~2.2 in \cite{BucKoj18}, we also have that~\eqref{eq:uncond:Tn} or~\eqref{eq:cond:Tn} imply that
$$
(\bm T_n, \bm T_n^{[1]}, \dots, \bm T_n^{[N]}) \leadsto (\bm T, \bm T^{[1]}, \dots, \bm T^{[N]}),
$$
for all $N \in \N$, where $\bm T^{[1]},\dots,\bm T^{[N]}$ are independent copies of $\bm T$. Combining this last result with the continuous mapping theorem, we immediately obtain that, for any $N\in\N$,
\begin{equation}
\label{eq:jointFS}
(\bar{\bm F}_T(\bm T_n), \bar{\bm F}_T(\bm T_n^{[1]}), \dots, \bar{\bm F}_T(\bm T_n^{[N]})) \leadsto (\bar{\bm F}_T(\bm T), \bar{\bm F}_T(\bm T^{[1]}), \dots, \bar{\bm F}_T(\bm T^{[N]})),
\end{equation}
where $\bar{\bm F}_T(\bm x) = (\bar F_{T_1}(x_1),\dots,\bar F_{T_r}(x_r))$, $\bm x \in \R^r$. Combining~\eqref{eq:jointFS} with~\eqref{eq:aeSj2}, the continuity of~$\psi$ and the continuous mapping theorem, we obtain that~\eqref{eq:uncondM} holds for all $N \in \N$.

From now on, assume that $W$ has a continuous d.f. As a straightforward consequence of~\eqref{eq:aeSj} and the continuous mapping theorem, the weak convergence in \eqref{eq:jointFS} implies that, for any $N\in\N$,
\begin{equation*}
(W_n,W_n^{[1]},\dots,W_n^{[N]}) \leadsto (W,W^{[1]},\dots,W^{[N]}),
\end{equation*}
where $W_n$ is defined by~\eqref{eq:Wn} and $W_n^{\scs [i]} = \psi\{ \bar F_{T_1}^*(T_{n,1}^{\scs [i]}), \dots, \bar F_{T_r}^*(T_{n,r}^{\scs [i]}) \}, i \in \{1,\dots,N\}.$
The previous display has the following two consequences: first, by Problem 23.1 in \cite{Van98},\begin{equation}
\label{eq:unif}
\sup_{x \in \R} | \Pr(W_n \leq x) - \Pr(W \leq x)| \p 0.
\end{equation}
Second, since $W_n^{\scs [1]},\dots,W_n^{\scs [N]}$ are identically distributed and independent conditionally on the data, by Lemma~2.2 in \cite{BucKoj18}, we have that
\begin{equation}
\label{eq:cond}
\sup_{x \in \R} | \Pr(W_n^{[1]} \leq x\mid \bm X_n) - \Pr(W_n \leq x)| \p 0.
\end{equation}

Let us next prove~\eqref{eq:condM}. In view of~\eqref{eq:cond}, 
 it suffices to show that
\begin{equation}
\label{eq:goal}
\sup_{x \in\R} | \Pr(W_{n,M_n}^{[1]} \le x \mid \bm X_n) -  \Pr(W_n^{[1]} \leq x \mid \bm X_n) | \p 0.
\end{equation}
Using the fact that, for any $a,b,x\in\R$ and $\eps>0$,
\begin{equation}
\label{eq:tool}
| \1(a \le x) - \1(b \le x) | \le \1(|x-a| \le \eps) +\1(|a-b|> \eps),
\end{equation}
we have that
\begin{multline*}
\sup_{x \in\R} \big | \Pr (W_n^{[1]} \le x \mid \bm X_n) -  \Pr(W_{n,M_n}^{[1]} \leq x \mid \bm X_n) \big |
\le
\sup_{x \in\R} \Pr( |W_n^{[1]} - x | \le \eps \mid \bm X_n )  \\
+  \Pr( | W_n^{[1]} - W_{n,M_n}^{[1]} | > \eps \mid \bm X_n ).
\end{multline*}
From~\eqref{eq:unif} and~\eqref{eq:cond}, $\sup_{x \in \R} \Pr( |W_n^{\scs [1]} - x | \le \eps\mid \bm X_n)$ converges in probability to $\sup_{x \in \R} \Pr ( |W - x | \le \eps )$ which can be made arbitrary small by decreasing $\eps$.
From~\eqref{eq:aeSj},~\eqref{eq:aeSj2},~\eqref{eq:jointFS} and the continuous mapping theorem, we obtain that $W_n^{\scs [1]} - W_{\scs n,M_n}^{\scs [1]} =o_\Pr(1)$, which implies that
\begin{equation}
\label{eq:tool2}
 \Pr( | W_n^{[1]} - W_{n,M_n}^{[1]} | > \eps ) = \Ex \{ \Pr( | W_n^{[1]} - W_{n,M_n}^{[1]} | > \eps \mid \bm X_n ) \} \to 0,
\end{equation}
and thus that $\Pr( | W_n^{\scs [1]} - W_{\scs n,M_n}^{\scs [1]} | > \eps \mid \bm X_n ) =o_\Pr(1)$. Hence,~\eqref{eq:goal} holds and thus so does~\eqref{eq:condM}.

Finally, let show that~\eqref{eq:MC} holds. Since $W_n^{\scs [1]},\dots,W_n^{\scs [N]}$ are identically distributed and independent conditionally on the data, by Lemma~2.2 in \cite{BucKoj18}, we have that~\eqref{eq:cond} implies
\begin{equation}
\label{eq:MCunobs}
\sup_{x \in \R} \bigg| \frac{1}{M_n} \sum_{i=1}^{M_n} \1(W_n^{[i]} \le x) - \Pr(W_n \leq x)\bigg| \p 0.
\end{equation}
Whence~\eqref{eq:MC} is proved if we show that
\begin{equation}
\label{eq:goal2}
\sup_{x \in\R} \bigg| \frac{1}{M_n} \sum_{i=1}^{M_n} \1(W_{n,M_n}^{[i]} \le x)  - \frac{1}{M_n} \sum_{i=1}^{M_n} \1(W_n^{[i]} \le x) \bigg| \p 0.
\end{equation}
Using again~\eqref{eq:tool}, the term on the left of the previous display is smaller than
$$
\sup_{x \in\R} \frac{1}{M_n} \sum_{i=1}^{M_n} \1( | W_n^{[i]} - x| \le \eps) +  \frac{1}{M_n} \sum_{i=1}^{M_n} \1( | W_n^{[i]} - W_{n,M_n}^{[i]} | \ge \eps).
$$
From~\eqref{eq:MCunobs} and~\eqref{eq:unif}, the first term converges in probability to $\sup_{x \in \R} \Pr ( |W - x | \le \eps )$ which can be made arbitrary small by decreasing $\eps$. The second term converges in probability to zero by Markov's inequality: for any $\lambda > 0$,
\begin{align*}
\Pr\left\{ \frac{1}{M_n} \sum_{i=1}^{M_n} \1( | W_n^{[i]} - W_{n,M_n}^{[i]} | \ge \eps) > \lambda \right\}
&\leq
\lambda^{-1} \Ex \left\{  \frac{1}{M_n} \sum_{i=1}^{M_n} \1( | W_n^{[i]} - W_{n,M_n}^{[i]} | \ge \eps) \right\} \\
&\leq
\lambda^{-1} \Pr( | W_n^{[1]} - W_{n,M_n}^{[1]} | \ge \eps) \to 0
\end{align*}
since the $W_n^{\scs [i]} - W_{\scs n,M_n}^{\scs [i]}$ are identically distributed and by~\eqref{eq:tool2}. Therefore,~\eqref{eq:goal2} holds and, hence, so does~\eqref{eq:MC}. Note that, from the fact that $\bm T$ and $W$ have continuous d.f.s, we could have alternatively proved the analogue statement with `$\le$' replaced by `$<$'. As a consequence, we immediately obtain that $p_{n,M_n}(W_{\scs n,M_n}^{\scs [0]})$ has the same weak limit as $\bar F_{W_n}(W_{\scs n,M_n}^{\scs [0]})$, where $\bar F_{W_n}(w) = \Pr(W_n \ge w)$, $w \in \R$. By the analogue to~\eqref{eq:unif} with `$\le$' replaced by `$<$', the latter has the same asymptotic distribution as $\bar F_W(W_{\scs n,M_n}^{\scs [0]})$, where $\bar F_W(w) = \Pr(W \ge w)$, $w \in \R$. By the weak convergence $W_{\scs n,M_n}^{\scs [0]} \leadsto W$ following from~\eqref{eq:uncondM} and the continuous mapping theorem, $\bar F_W(W_{\scs n,M_n}^{\scs [0]})$ is asymptotically standard uniform.
\end{proof}

\begin{proof}[Proof of Proposition~\ref{prop:combined:alternative}]
Notice first that assumption~$(ii)$ implies that the corresponding approximate p-value $p_{n,M_n}(T_{n,j_0}^{\scs [0]})$ given by~\eqref{eq:pval_T} converges to zero in probability. Indeed,
$$
\E\{ p_{n,M_n}(T_{n,j_0}^{[0]}) \} = \frac1{M_n+1} \left\{ \frac12 + \sum_{k=1}^{M_n} \Pr(T_{n,j_0}^{[k]} \ge T_{n,j_0}^{[0]}) \right\} = \Pr(T_{n,j_0}^{[1]} \ge T_{n,j_0}^{[0]}) + O(M_n^{-1}).
$$
Next, a consequence of assumption~$(iii)$ is that, for any $j \in \{1, \dots, r\}$,
\[
\big( p_{n,M_n}(T_{n,j}^{[1]}) , \dots, p_{n,M_n}(T_{n,j}^{[M_n]})  \big)
\]
is a permutation of the vector 
\[
\big( \tfrac{3}{2M_n + 2} , \dots, \tfrac{2M_n + 1}{2M_n+ 2}   \big).
\]
It follows that, for any $x \in (0,1)$,
\begin{equation}
  \label{eq:emp:df}
\frac{1}{M_n} \sum_{k=1}^{M_n} \1 \big\{ p_{n,M_n}(T_{n,j}^{[k]}) \leq x \big\} = \frac{1}{M_n} \sum_{k=1}^{M_n} \1 \big( \tfrac{2k + 1}{2M_n+ 2} \leq x \big) = x + O(M_n^{-1}), 
\end{equation}
where $\ip{.}$ is the floor function. Then, let $\bar w = \max_{j \in \{1,\dots,r\}} w_j$. Starting from \eqref{eq:pval_W}, and relying on assumptions $(i)$ and $(iii)$, we successively obtain
\begin{align*}
p_{n,M_n}(W_{n,M_n}^{[0]})
&=
\frac1{M_n} \sum_{k=1}^{M_n} \1 \Big[ \sum_{j=1}^r w_j \varphi \big\{ p_{n,M_n}(T_{n,j}^{[k]}) \big\} \ge \sum_{j=1}^r  w_j \varphi \big\{ p_{n,M_n}(T_{n,j}^{[0]}) \big\} \Big] \\
&\le 
\frac1{M_n} \sum_{k=1}^{M_n}  \1 \Big[ \bar w \sum_{j=1}^r \varphi \big\{ p_{n,M_n}(T_{n,j}^{[k]}) \big\} \ge  w_{j_0} \varphi \big\{ p_{n,M_n}(T_{n,j_0}^{[0]}) \big\} \Big] \\
&\le 
\frac1{M_n} \sum_{k=1}^{M_n} \1 \Big[ \exists\, j \in \{1, \dots, r \} : r \bar w \varphi \big\{ p_{n,M_n}(T_{n,j}^{[k]}) \big\} \ge  w_{j_0} \varphi \big\{ p_{n,M_n}(T_{n,j_0}^{[0]}) \big\} \Big] \\
&\le 
\frac1{M_n} \sum_{k=1}^{M_n}  \sum_{j=1}^r \1 \Big[ \varphi \big\{ p_{n,M_n}(T_{n,j}^{[k]}) \big\} \ge \tfrac{w_{j_0}}{r \bar w} \varphi \big\{ p_{n,M_n}(T_{n,j_0}^{[0]}) \big\} \Big] \\
  &=                                                                                     \frac{r}{M_n} \sum_{k=1}^{M_n}  \1 \Big[ \varphi \big( \tfrac{2k+1}{2M_n+ 2} \big) \ge \tfrac{w_{j_0}}{r \bar w} \varphi \big\{ p_{n,M_n}(T_{n,j_0}^{[0]}) \big\} \Big]
  \\
  &=                                                                                     \frac{r}{M_n} \sum_{k=1}^{M_n}  \1 \left( \tfrac{2k+1}{2M_n+ 2} \le \varphi^{-1} \Big[ \tfrac{w_{j_0}}{r \bar w} \varphi \big\{ p_{n,M_n}(T_{n,j_0}^{[0]}) \big\} \Big] \right) \\
  &= r \varphi^{-1} \Big[ \tfrac{w_{j_0}}{r \bar w} \varphi \big\{ p_{n,M_n}(T_{n,j_0}^{[0]}) \big\} \Big] +O(M_n^{-1}) \p 0,
\end{align*}
where the last statement follows from~\eqref{eq:emp:df} and the fact that $p_{n,M_n}(T_{n,j_0}^{\scs [0]}) \p 0$.
\end{proof}

\begin{proof}[Proof of Proposition~\ref{prop:weak_Cnh_sm}]
The result is a consequence of Proposition~3.3 in \cite{BucKojRohSeg14} and the fact that the strong mixing coefficients of the sequence $(\bm Y_i^{\scs (h)})_{i \in \Z}$ defined through~\eqref{eq:Yi} can be expressed from those of the sequence $(X_i)_{i \in \Z}$ as $\alpha_r^{\bm Y} = \alpha_{\scs (r-h+1) \vee 0}^{X}, r \in \N$, where $\vee$ is the maximum operator.
\end{proof}

\begin{proof}[Proof of Proposition~\ref{prop:acs}]
The assertions concerning weak convergence are simple consequences of the continuous mapping theorem and Proposition~\ref{prop:weak_Cnh_sm}. It remains to show that $\Lc(S_{C^{(h)}})$, the distribution of $S_{C^{(h)}}$, is absolutely continuous with respect to the Lebesgue measure. For that purpose, note that, with probability one, the sample paths of $\D_{C^{(h)}}$ are elements of $\mathcal C([0,1]\times [0,1]^{h})$, the space of continuous real-valued functions on $[0,1]\times [0,1]^{h}$. We may write $S_{C^{(h)}}= \{ f(\D_{C^{(h)}}) \}^2$, where
\[
f:\mathcal C([0,1]^{h+1}) \to \R, \quad f(g) = \sup_{s \in [0,1]}  \left\{ \int_{[0, 1]^h} g^2(s,\bm{u})  \, \dd C^{(h)}(\bm{u}) \right\}^{1/2},
\]
and it is sufficient to show that  $\Lc \{ f(\D_{C^{(h)}}) \}$ is absolutely continuous. Now, if $\mathcal C([0,1]^{h+1})$ is equipped with the supremum norm $\|\cdot\|_\infty$, then $f$ is continuous and convex. We may hence apply Theorem 7.1 in \cite{DavLif84}:  $\Lc \{ f(\D_{C^{(h)}}) \}$ is concentrated on $[a_0,\infty)$ and absolutely continuous on $(a_0, \infty)$, where
\[
a_0 = \inf\{ f(g) : g \text{ belongs to the support of }  \Lc (\D_{C^{(h)}}) \}.
\]
It hence remains to be shown that $\Lc \{ f(\D_{C^{(h)}}) \}$ has no atom at $a_0$.
First of all, note that $a_0=0$. Indeed, by Lemma 1.2(e) in \cite{DerFehMatSch03}, we have $\Pr(\|\D_{C^{(h)}}\|_\infty\le \eps) >0$ for any $\eps>0$. Hence, for any $\eps>0$, there exist functions $g$ in the support of the distribution of $\D_{C^{(h)}}$  such that $f(g)\le \eps$, whence $a_0=0$ as asserted. Moreover, $f(\D_{C^{(h)}})=0$ holds if and only if $\mathbb D_{C^{(h)}}(s,\bm u) = 0$ for any $s\in[0,1]$ and any $\bm u$ in the support of the distribution induced by $C^{(h)}$ (by continuity of the sample paths). Then, choose an arbitrary point $\bm u^*$ in the latter support such that $\sigma^2 = \Var\{\mathbb C_{C^{(h)}}(0,1,\bm u^*)\}>0$. A straightforward calculation shows that $\mathbb C_{C^{(h)}}(0,1/2, \bm u^*)$ and $\mathbb C_{C^{(h)}}(1/2,1, \bm u^*)$ are uncorrelated and have the same variance $\tfrac12 \sigma^2$.
 Hence,
\begin{align*}
\Var \{ \mathbb D_{C^{(h)}}(\tfrac12, \bm u^*) \}
&=
\Var\{ \tfrac12 \mathbb C_{C^{(h)}}(0,1/2, \bm u^*) - \tfrac12   \mathbb C_{C^{(h)}}(1/2,1 ,\bm u^*) \} \\
&=
\tfrac14 \Var\{ \mathbb C_{C^{(h)}}(0,1/2, \bm u^*) \} + \tfrac14   \Var \{ \mathbb C_{C^{(h)}}(1/2,1, \bm u^*) \}
= \tfrac14 \sigma^2 >0.
\end{align*}
As  consequence,
$
\Pr(f(\D_{C^{(h)}})=0) \le \Pr(\mathbb D_{C^{(h)}}(\tfrac12 ,\bm u^*) = 0 ) = 0,
$
which finally implies that $\Lc(f(\D_{C^{(h)}}))$ and therefore  $\Lc(S^{\scs (h)})$  is absolutely continuous.
\end{proof}

\begin{proof}[Proof of Proposition~\ref{prop:S_mult}]
The result is a consequence of Proposition~4.2 in \cite{BucKojRohSeg14} and the fact that the strong mixing coefficients of the sequence $(\bm Y_i^{\scs (h)})_{i \in \Z}$ can be expressed as $\alpha_r^{\bm Y} = \alpha_{\scs (r-h+1) \vee 0}^{X}, r \in \N$.
\end{proof}

\begin{proof}[Proof of Proposition~\ref{prop:weak_Gn_sm}]
The assertions concerning weak convergence are simple consequences of Theorem~1 of \cite{Buc15} and of the continuous mapping theorem. Absolute continuity of $S_G$ can be shown along similar lines as for $S_{C^{(h)}}$ in Proposition~\ref{prop:acs}.
\end{proof}

\begin{proof}[Proof of Proposition~\ref{prop:combined1}]
To prove the first claim, one first needs to show that the finite-dimensional distributions of $\big(\D_{n,C^{(h)}}, \hat{\D}_{\scs n,C^{(h)}}^{\scs [1]}, \dots, \hat{\D}_{\scs n,C^{(h)}}^{\scs [M]}, \E_n, \hat{\E}_n^{\scs [1]}, \dots, \hat{\E}_n^{\scs [M]} \big)$ converge weakly to those of $\big(\D_{C^{(h)}}, \D_{\scs C^{(h)}}^{\scs [1]}, \dots, \D_{\scs C^{(h)}}^{\scs [M]}, \E, \E^{[1]}, \dots, \E^{[M]} \big)$. The proof is a more notationally involved version of the proof of Lemma~A.1 in \cite{BucKoj16}. Joint asymptotic tightness follows from Proposition~\ref{prop:S_mult} as well as from the fact that, for any $m \in \N$, $\hat{\E}_n^{\scs [m]} \leadsto \E^{[m]}$ in $\ell^\infty([0,1] \times \R)$ as a consequence of Corollary~2.2 in \cite{BucKoj16} and the continuous mapping theorem.
\end{proof}

\section*{Acknowledgments}

The authors would like to thank two anonymous Referees and a Co-Editor for their constructive and insightful comments on an earlier version of this manuscript. Axel Bücher gratefully acknowledges support by the Collaborative Research Center ``Statistical modeling of nonlinear dynamic processes'' (SFB 823) of the German Research Foundation. Parts of this paper were written when Axel Bücher was a postdoctoral researcher at Ruhr-Universität Bochum, Germany. Jean-David Fermanian's work was supported by the grant ``Investissements d’Avenir'' (ANR-11-IDEX0003/Labex Ecodec) of the French National Research Agency.

\bibliographystyle{chicago}
\bibliography{biblio}


\newpage
\thispagestyle{empty}

\begin{center}
{\LARGE Supplementary material for  \\ [2mm] ``Combining cumulative sum change-point detection tests for assessing the stationarity of univariate time series''}
\vspace{1cm}

{\large Axel B\"ucher\footnote{Heinrich-Heine-Universität D\"usseldorf,
Mathematisches Institut,
Universit\"atsstr.~1, 40225 D\"usseldorf, Germany.
{E-mail:} \texttt{axel.buecher@hhu.de}}
\hspace{1cm}
Jean-David Fermanian\footnote{CREST-ENSAE, J120, 3, avenue Pierre-Larousse, 92245 Malakoff cedex, France. {E-mail:} \texttt{jean-david.fermanian@ensae.fr}}
\hspace{1cm}
Ivan Kojadinovic\footnote{CNRS / Universit\'e de Pau et des Pays de l'Adour, Laboratoire de math\'ematiques et applications -- IPRA, UMR 5142, B.P. 1155, 64013 Pau Cedex, France.
{E-mail:} \texttt{ivan.kojadinovic@univ-pau.fr}}

\vspace{.2cm} \today 
\vspace{1cm}
}
\end{center}

\begin{abstract}
After providing additional results stating conditions under which the procedure for combining dependent tests described in Section~\ref{sec:combine:tests} is consistent, we briefly illustrate the data-adaptive procedure used to estimate the bandwidth parameter arising in dependent multiplier sequences and report the results of additional Monte Carlo experiments investigating the finite-sample performance of the tests that were proposed in Sections~\ref{sec:rank} and~\ref{sec:sotests}.  
\end{abstract}


\section{Additional results on the consistency of the procedure for combining dependent tests}

The following result is an analogue of Proposition~\ref{prop:combined:alternative} that allows one to consider the function~$\psi_S$ in~\eqref{eq:stouffer} as combining function $\psi$ in the proposed global testing procedure.

\begin{prop}
  Let $M=M_n\to\infty$ as $n\to\infty$. Assume that
  \begin{enumerate}[(i)]

  \item the combining function $\psi$ is of the form 
    \[
      \psi(p_1, \dots, p_r) = \sum_{j=1}^r w_j \varphi(p_j),
    \]
    where $\varphi$ is decreasing, non-negative on $(0,1/2)$ and one-to-one from $(0,1)$ to $(-\infty,\infty)$,
    
  \item the global statistic $W_{n,M_n}^{[0]}$ diverges to infinity in probability, 

  \item for any $j \in \{1,\dots,r\}$, the sample of bootstrap replicates $T_{n,j}^{\scs [1]}, \dots, T_{n,j}^{\scs [M_n]}$ does not contain ties.
  \end{enumerate}
  Then, the approximate p-value  $p_{n,M_n}(W_{\scs n,M_n}^{\scs [0]})$ of the global test converges to zero in probability, where $p_{n,M_n}(W_{\scs n,M_n}^{\scs [0]})$ is defined by~\eqref{eq:pval_W}.
\end{prop}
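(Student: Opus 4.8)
The plan is to combine assumption~(ii) with a purely combinatorial control, valid on the event on which~(iii) holds, of the fraction of bootstrap replicates $W_{n,M_n}^{[k]}$, $k\in\{1,\dots,M_n\}$, that can exceed an arbitrary fixed level. Write $\bar w=\max_{j}w_j$ and $\varphi^+=\max(\varphi,0)$. First I would record, exactly as in the proof of Proposition~\ref{prop:combined:alternative}, that~(iii) forces $\big(p_{n,M_n}(T_{n,1}^{[k]}),\dots,p_{n,M_n}(T_{n,r}^{[k]})\big)$ coordinatewise to run through a permutation: for every $j$, the family $\big(p_{n,M_n}(T_{n,j}^{[1]}),\dots,p_{n,M_n}(T_{n,j}^{[M_n]})\big)$ is a permutation of $\big(\tfrac{3}{2M_n+2},\dots,\tfrac{2M_n+1}{2M_n+2}\big)$; in particular
\[
\#\{k\in\{1,\dots,M_n\}:p_{n,M_n}(T_{n,j}^{[k]})\le x\}\le (M_n+1)\,x,\qquad x\in(0,1).
\]

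Next I would establish the key deterministic estimate. Since $0<w_j\le\bar w$ and $\varphi\le\varphi^+$, every replicate satisfies $W_{n,M_n}^{[k]}=\sum_{j=1}^r w_j\varphi\big(p_{n,M_n}(T_{n,j}^{[k]})\big)\le\bar w\sum_{j=1}^r\varphi^+\big(p_{n,M_n}(T_{n,j}^{[k]})\big)$. Hence, for a fixed level $L>0$, the event $\{W_{n,M_n}^{[k]}\ge L\}$ entails $\varphi^+\big(p_{n,M_n}(T_{n,j}^{[k]})\big)\ge L/(r\bar w)$ for some $j$, so $\varphi\big(p_{n,M_n}(T_{n,j}^{[k]})\big)\ge L/(r\bar w)>0$, which, since $\varphi$ is decreasing and maps $(0,1)$ one-to-one onto $\R$, is the same as $p_{n,M_n}(T_{n,j}^{[k]})\le\varphi^{-1}\big(L/(r\bar w)\big)$. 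A union bound over $j$ together with the counting bound above then gives
\[
\#\{k\in\{1,\dots,M_n\}:W_{n,M_n}^{[k]}\ge L\}\le r\,(M_n+1)\,\varphi^{-1}\big(L/(r\bar w)\big).
\]

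Then I would conclude as follows. On the event $\{W_{n,M_n}^{[0]}>L\}$ one has $\{k:W_{n,M_n}^{[k]}\ge W_{n,M_n}^{[0]}\}\subseteq\{k:W_{n,M_n}^{[k]}\ge L\}$, so by~\eqref{eq:pval_W} and the last display $p_{n,M_n}(W_{n,M_n}^{[0]})\le\frac{M_n+1}{M_n}\,r\,\varphi^{-1}\big(L/(r\bar w)\big)$. Given $\eps>0$: since $\varphi$ is decreasing and onto $\R$, $\varphi^{-1}(t)\to 0$ as $t\to+\infty$, so I can pick $L$ (depending only on $\eps,r,\bar w$) with $r\,\varphi^{-1}(L/(r\bar w))<\eps/2$; for $n$ large enough that $M_n>1$ the right-hand side above is then $<\eps$, whence $\Pr\big(p_{n,M_n}(W_{n,M_n}^{[0]})>\eps\big)\le\Pr\big(W_{n,M_n}^{[0]}\le L\big)\to 0$ by~(ii). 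This is the claimed $p_{n,M_n}(W_{n,M_n}^{[0]})\p 0$.

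The one genuinely new point relative to Proposition~\ref{prop:combined:alternative}, and the place where care is needed, is that for $\psi_S$ the function $\varphi=\Phi^{-1}(1-\cdot)$ is not non-negative on all of $(0,1)$, so one cannot bound $W_{n,M_n}^{[0]}$ below by a single summand $w_{j_0}\varphi\big(p_{n,M_n}(T_{n,j_0}^{[0]})\big)$ and reduce to the behaviour of one component test, as was done for $\psi_F$. This is exactly why divergence of the aggregated statistic is assumed directly in~(ii); passing to $\varphi^+$ in the coordinatewise upper bound is what lets the argument survive the loss of non-negativity, and the only further property of $\varphi$ actually used is $\varphi^{-1}(t)\to 0$ as $t\to\infty$, which follows from~(i). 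A subsidiary point to check is that the counting identity coming from~(iii) is invoked at $x=\varphi^{-1}(L/(r\bar w))\in(0,1)$, which is legitimate for every $L>0$.
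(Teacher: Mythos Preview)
Your proof is correct and follows essentially the same approach as the paper's: introduce $\bar w=\max_j w_j$ and $\varphi^+=\varphi\vee 0$, bound $W_{n,M_n}^{[k]}\le \bar w\sum_j\varphi^+\{p_{n,M_n}(T_{n,j}^{[k]})\}$, pass to a union over $j$, and exploit the permutation property from~(iii). The only cosmetic difference is that the paper substitutes the random threshold $(r\bar w)^{-1}W_{n,M_n}^{[0]}$ directly into $\varphi^{-1}$ and obtains $p_{n,M_n}(W_{n,M_n}^{[0]})\le r\,\varphi^{-1}\{(r\bar w)^{-1}W_{n,M_n}^{[0]}\}+O(M_n^{-1})\p 0$, whereas you fix a deterministic level $L$ first and run an $\eps$--$L$ argument; the two are equivalent.
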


\begin{proof}
Let $\bar w = \max_{j \in \{1,\dots,r\}} w_j$ and let $\varphi^+ = \varphi \vee 0$, where $\vee$ is the maximum operator. Starting from \eqref{eq:pval_W}, and relying on assumptions $(i)$ and $(iii)$, we successively obtain
\begin{align*}
p_{n,M_n}(W_{n,M_n}^{[0]})
&=
\frac1{M_n} \sum_{k=1}^{M_n} \1 \Big[ \sum_{j=1}^r w_j \varphi \big\{ p_{n,M_n}(T_{n,j}^{[k]}) \big\} \ge W_{n,M_n}^{[0]} \big\} \Big] \\
&\le 
\frac1{M_n} \sum_{k=1}^{M_n}  \1 \Big[ \bar w \sum_{j=1}^r \varphi^+ \big\{ p_{n,M_n}(T_{n,j}^{[k]}) \big\} \ge  W_{n,M_n}^{[0]} \Big] \\
&\le 
\frac1{M_n} \sum_{k=1}^{M_n} \1 \Big[ \exists\, j \in \{1, \dots, r \} : r \bar w \varphi^+ \big\{ p_{n,M_n}(T_{n,j}^{[k]}) \big\} \ge  W_{n,M_n}^{[0]} \Big] \\
  &\le \frac1{M_n} \sum_{k=1}^{M_n}  \sum_{j=1}^r \1 \Big[ \varphi^+ \big\{ p_{n,M_n}(T_{n,j}^{[k]}) \big\} \ge (r \bar w)^{-1} W_{n,M_n}^{[0]} \Big] \\
  &=                                                                                     \frac{r}{M_n} \sum_{k=1}^{M_n}  \1 \Big\{ \varphi^+ \big( \tfrac{2k+1}{2M_n+ 2} \big) \ge (r \bar w)^{-1} W_{n,M_n}^{[0]} \Big\}.
\end{align*}
Using assumption~$(i)$, we have that, for $x > 0$, 
\begin{align*}
  \frac1{M_n} \sum_{k=1}^{M_n}  \1 \Big\{ \varphi^+ \big( \tfrac{2k+1}{2M_n+ 2} \big) \ge x \Big\} &= \frac1{M_n} \sum_{k=1}^{\ip{M_n/2}}  \1 \Big\{ \varphi \big( \tfrac{2k+1}{2M_n+ 2} \big) \ge x \Big\} + O(M_n^{-1}) \\
                                                                                                   &= \frac1{M_n} \sum_{k=1}^{\ip{M_n/2}}  \1 \Big\{ \tfrac{2k+1}{2M_n+ 2} \le \varphi^{-1}(x) \Big\} + O(M_n^{-1}) \\
  &= \varphi^{-1}(x)/2 + O(M_n^{-1})
\end{align*}
since $\varphi^{-1}(x) \in (0,1/2)$. As a consequence, we obtain that
$$
p_{n,M_n}(W_{n,M_n}^{[0]}) = r \varphi^{-1} \big\{ (r \bar w)^{-1} W_{n,M_n}^{[0]} \big\} /2 + O(M_n^{-1}) \p 0.
$$
\end{proof}

The next results provides sufficient conditions so that assumption~$(ii)$ of the previous proposition is satisfied. 

\begin{prop}
  Let $M=M_n\to\infty$ as $n\to\infty$. Assume that
  \begin{enumerate}[(i)]
    
  \item the combining function $\psi$ is of the form 
    \[
      \psi(p_1, \dots, p_r) = \sum_{j=1}^r w_j \varphi(p_j),
    \]
    where $\varphi$ is decreasing, non-negative on $(0,1/2)$ and one-to-one from $(0,1)$ to $(-\infty,\infty)$,

  \item there exists $r_0 \in \{1,\dots,r-1\}$ such that
    \begin{enumerate}[(a)]
    \item $H_0^{(1)} \cap \dots \cap H_0^{(r_0)}$ holds, $(T_{n,1},\dots,T_{n,r_0})$ converges weakly to a random vector having a continuous d.f., and
    $$
    \sup_{\bm x \in \R^{r_0}} | \Pr(T_{n,1}^{[1]} \leq x_1, \dots, T_{n,r_0}^{[1]} \leq x_{r_0} \mid \bm X_n) -  \Pr(T_{n,1} \leq x_1, \dots, T_{n,r_0} \leq x_{r_0})|  \p 0,
    $$
    \item for any $j \in \{r_0+1,\dots,r\}$, $H_0^{(j)}$ does not hold and $\Pr(T_{n,j}^{\scs [1]} \ge T_{n,j})$ converges to zero.
    \end{enumerate}
    
  \end{enumerate}
  Then, $W_{n,M_n}^{\scs [0]}$ diverges to infinity in probability. 
\end{prop}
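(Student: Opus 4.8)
The plan is to split the global statistic as $W_{n,M_n}^{[0]} = A_n + B_n$, where $A_n = \sum_{j=1}^{r_0} w_j \varphi\{p_{n,M_n}(T_{n,j}^{[0]})\}$ collects the contributions of the tests whose null hypotheses hold, and $B_n = \sum_{j=r_0+1}^{r} w_j \varphi\{p_{n,M_n}(T_{n,j}^{[0]})\}$ collects those of the ``alternative'' tests. I would then show that $A_n = O_\Pr(1)$ while $B_n \p +\infty$, whence the conclusion follows at once.

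For $B_n$, I would first argue, exactly as in the proof of Proposition~\ref{prop:combined:alternative}, that for each $j \in \{r_0+1,\dots,r\}$ the approximate p-value $p_{n,M_n}(T_{n,j}^{[0]})$ given by~\eqref{eq:pval_T} converges to zero in probability: taking expectations yields $\Ex\{p_{n,M_n}(T_{n,j}^{[0]})\} = \Pr(T_{n,j}^{[1]} \ge T_{n,j}^{[0]}) + O(M_n^{-1}) \to 0$ by assumption~$(ii)(b)$, and nonnegativity of $p_{n,M_n}(T_{n,j}^{[0]})$ upgrades this to convergence in probability. Since $\varphi$ is monotone and onto $(-\infty,+\infty)$, it is continuous on $(0,1)$ with $\varphi(p) \to +\infty$ as $p \downarrow 0$, so $\varphi\{p_{n,M_n}(T_{n,j}^{[0]})\} \p +\infty$ for each such $j$; as the weights are strictly positive and (since $r_0 \le r-1$) the index set $\{r_0+1,\dots,r\}$ is nonempty, $B_n \p +\infty$.

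For $A_n$, I would apply Proposition~\ref{prop:combined:general} to the sub-collection of tests $T_{n,1},\dots,T_{n,r_0}$ with combining function $\tilde\psi(p_1,\dots,p_{r_0}) = \sum_{j=1}^{r_0} w_j \varphi(p_j)$. This $\tilde\psi$ is an admissible combining function, being decreasing in each argument ($w_j>0$) and continuous on $(0,1)^{r_0}$ (a monotone surjection between open intervals is continuous). Assumption~$(ii)(a)$ supplies precisely the hypotheses required by Proposition~\ref{prop:combined:general} for this sub-collection --- namely $M_n \to \infty$, that $H_0^{(1)} \cap \dots \cap H_0^{(r_0)}$ holds, that $(T_{n,1},\dots,T_{n,r_0})$ converges weakly to a vector with continuous d.f., and that~\eqref{eq:cond:Tn} holds for the first $r_0$ coordinates --- while the quantities $p_{n,M_n}(T_{n,j}^{[0]})$ entering $A_n$ depend only on the $j$th test and hence coincide with those built from the sub-collection. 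Proposition~\ref{prop:combined:general} therefore gives $A_n \leadsto \sum_{j=1}^{r_0} w_j \varphi\{\bar F_{T_j}(T_j)\}$; since a continuous multivariate d.f.\ has continuous margins, each $\bar F_{T_j}(T_j)$ is uniformly distributed on $(0,1)$, so the limit is an a.s.\ finite random variable and $A_n = O_\Pr(1)$.

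Combining the two blocks, $W_{n,M_n}^{[0]} = A_n + B_n$ with $A_n$ tight and $B_n \p +\infty$ yields $W_{n,M_n}^{[0]} \p +\infty$. The step I expect to require the most care is the verification that the sub-collection genuinely meets all the hypotheses of Proposition~\ref{prop:combined:general} --- in particular that the continuity of the joint limiting d.f.\ descends to the margins (so that $\bar F_{T_j}(T_j)$ is nondegenerate and the weak limit of $A_n$ is finite-valued), and that the truncated combining function $\tilde\psi$ remains continuous on $(0,1)^{r_0}$ notwithstanding the blow-up of $\varphi$ at the endpoints; once these points are settled, the ``tight plus divergent'' argument is entirely routine.
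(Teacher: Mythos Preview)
Your proposal is correct and follows essentially the same approach as the paper's proof: split $W_{n,M_n}^{[0]}$ into the sum over $j\le r_0$ (shown to be $O_\Pr(1)$ via Proposition~\ref{prop:combined:general} applied to the sub-collection) and the sum over $j>r_0$ (shown to diverge in probability by the same expectation argument as in Proposition~\ref{prop:combined:alternative} combined with $\varphi(0^+)=+\infty$). Your additional checks---that the $p_{n,M_n}(T_{n,j}^{[0]})$ depend only on the $j$th coordinate, that $\tilde\psi$ is continuous, and that continuity of the joint limiting d.f.\ passes to the margins so the limit of $A_n$ is a.s.\ finite---are all sound and merely make explicit what the paper leaves implicit.
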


\begin{proof}
  As a consequence of assumption~$(ii)$~$(a)$ and Proposition~\ref{prop:combined:general}, we have that the random variable $\sum_{j=1}^{r_0} w_j \varphi\{p_{n,M_n}(T_{n,j}^{\scs [0]})\}$ converges in distribution, and thus, that
  \begin{equation}
    \label{eq:bounded:proba}
    \sum_{j=1}^{r_0} w_j \varphi\{p_{n,M_n}(T_{n,j}^{\scs [0]})\} = O_\Pr(1).
  \end{equation}
  From assumption~$(ii)$~$(b)$, proceeding as in the proof of Proposition~\ref{prop:combined:alternative}, we immediately obtain that $p_{n,M_n}(T_{n,j}^{\scs [0]})$ converges to zero in probability for all $j \in \{r_0+1,\dots,r\}$, which combined with assumption~$(i)$ and the continuous mapping theorem implies that
  \begin{equation}
    \label{eq:div:proba}
    \sum_{j=r_0+1}^{r} w_j \varphi\{p_{n,M_n}(T_{n,j}^{\scs [0]})\} \p \infty.
  \end{equation}
  The desired result follows from~\eqref{eq:bounded:proba} and~\eqref{eq:div:proba}.
\end{proof}

\section{Data-adaptive bandwidth parameter of dependent multiplier sequences for an AR(1) model}

As explained in Appendix~\ref{app:dep}, the bandwidth parameter $\ell_n$ (or, equivalently, $b_n$) arising in the generation of dependent multiplier sequences through~\eqref{eq:movave} will have a crucial influence on the finite-sample performance of the tests studied in this work. In this section, we conduct a small simulation to illustrate the finite-sample properties of the data-adaptive procedure proposed in \citet[Section 5.1]{BucKoj16}. For $\beta \in \{0,0.1,\dots,0.9\}$, we generate 1000 samples of size $n=128$ from an AR(1) model with standard normal innovations and parameter~$\beta$. From each sample, we estimate the bandwidth parameter $b_n$. The mean and standard deviations of the 1000 estimates are represented in Figure~\ref{fig:b} against the value of $\beta$. As can be seen, the stronger the serial dependence, the larger $b_n$ is on average, suggesting that $b_n$ (or, equivalently, $\ell_n$) do indeed play a role similar to the block length in the block bootstrap of \cite{Kun89}.

\begin{figure}[t!]
\begin{center}
\includegraphics*[width=0.5\linewidth]{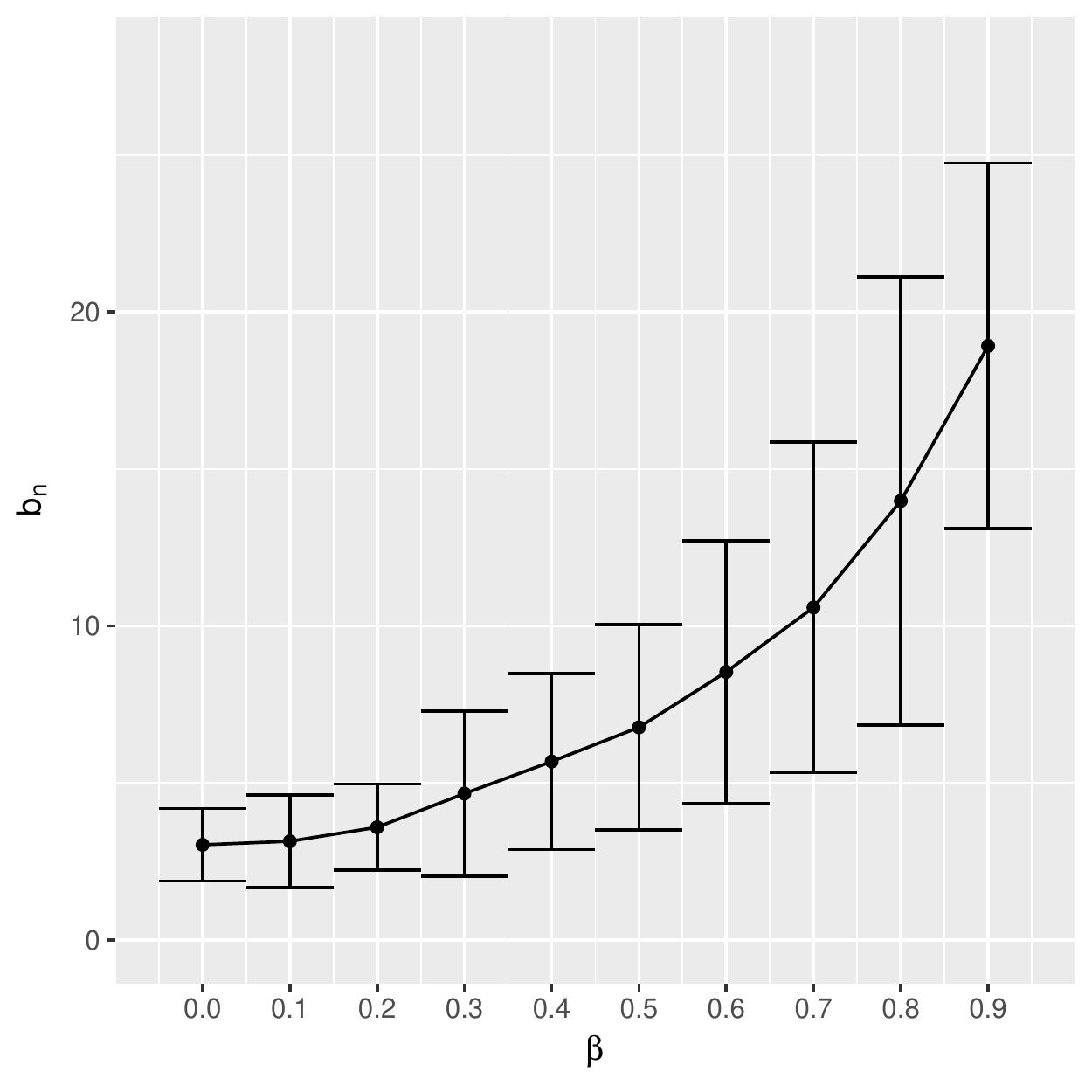}
\caption{\label{fig:b} Mean and standard deviation of 1000 estimates of the bandwidth parameter $b_n$ arising in dependent multiplier sequences computed from samples of size $n=128$ from an AR(1) model with standard normal innovations and parameter $\beta$.}
\end{center}
\end{figure}

\section{Additional Monte Carlo experiments}

After describing the main data generating models for studying the finite-sample properties of the proposed tests, we investigate in detail the behavior of some competitor tests available in \textsf{R}, study how well the proposed tests hold their level and finally assess their power under various alternatives, some belonging to the locally stationary process literature, others more in line with the change-point detection literature.

\subsection{Data generating processes}

The following ten strictly stationary models were used to generate observations under the null hypothesis of stationarity. Either standard normal or standardized Student $t$ with 4 degrees of freedom innovations were considered (standardization refers to the fact that the Student $t$ with 4 degrees of freedom distribution was rescaled to have variance one). The first seven models were considered in \cite{Nas13} (and are denoted by S1--S7 therein):
\vspace{-\medskipamount}
\begin{enumerate}[{N}1 -][10]\setcounter{enumi}{0}\parskip0pt
\item i.i.d.\ observations from the innovation distribution.
\item AR(1) model with parameter $0.9$.
\item AR(1) model with parameter $-0.9$.
\item MA(1) model with parameter $0.8$.
\item MA(1) model with parameter $-0.8$.
\item ARMA(1, 0, 2) with the AR coefficient $-0.4$, and the MA coefficients $(-0.8, 0.4)$.
\item AR(2) with parameters $1.385929$ and $-0.9604$. This process is stationary, but close to the ``unit root'': a ``rough'' stochastic process with spectral peak near $\pi/4$.
\item GARCH(1,1) model with parameters $(\omega,\beta,\alpha)=(0.012,0.919,0.072)$. The latter values were estimated by \cite{JonPooRoc07} from SP500 daily log-returns.
\item the exponential autoregressive model considered in \cite{AueTjo90} whose generating equation is
\begin{equation*}
X_{t} = \{ 0.8 - 1.1 \exp ( - 50 X_{t-1}^2 ) \} X_{t-1} + 0.1 \eps_{t}.
\end{equation*}
\item the nonlinear autoregressive model used in \citet[Section 3.3]{PapPol01} whose generating equation is
\begin{equation*}
X_t = 0.6 \sin( X_{t-1} ) + \eps_{t}.
\end{equation*}
\end{enumerate}
\vspace{-\medskipamount}
For all these models, a burn-in period of 100 observations was used.

To simulate observations under the alternative hypothesis of non-stationarity, models connected to the literature on locally stationary processes were considered first. The first four are taken from \cite{DetPreVet11} and were used to generate univariate series $X_{1,n},\dots,X_{n,n}$ of length $n \in \{128,256,512\}$ by means of the following equations:
\vspace{-\medskipamount}
\begin{enumerate}[{A}1 -][10]\setcounter{enumi}{0}\parskip0pt

\item $X_{t,n} = 1.1 \cos\{1.5 - \cos(4 \pi t / n)\} \eps_{t-1} + \eps_t$,

\item $X_{t,n} = 0.6 \sin(4 \pi t / n) X_{t-1,n} + \eps_t$,

\item $X_{t,n} = (0.5 X_{t-1,n} + \eps_t) \1(t \in \{1,\dots,n/4\} \cup \{3n/4+1,\dots,n\}) + (-0.5 X_{t-1,n} + \eps_t) \1(t \in \{n/4+1,\dots,3n/4\})$,

\item $X_{t,n} = (-0.5 X_{t-1,n} + \eps_t) \1(t \in \{1,\dots,n/2\} \cup \{n/2+n/64+1,\dots,n\}) + 4 \eps_t \1(t \in \{n/2+1,\dots,n/2+n/64\})$,
\end{enumerate}
\vspace{-\medskipamount}

\noindent where $\eps_0,\eps_1,\dots,\eps_n$ are i.i.d.\ standard normal and with the convention that $X_{0,n} = 0$. The next four models under the alternative were considered in \cite{Nas13} (and are denoted P1--P4 therein), the last three ones being locally stationary wavelet (LSW) processes \citep[see, e.g., Equation~(1) in][]{Nas13}:
\vspace{-\medskipamount}
\begin{enumerate}[{A}1 -][10]\setcounter{enumi}{4}\parskip0pt
\item A time-varying AR model $X_t = \alpha_t X_{t-1} + \eps_t$ with i.i.d.\ standard normal innovations and an AR parameter evolving linearly from 0.9 to $-0.9$ over the $n$ observations.
\item A LSW process based on Haar wavelets with spectrum $S_j(z) = 0$ for $j > 1$ and $S_1 (z) = 1/4 - (z - 1/2 )^2$ for $z \in (0, 1)$. This process is a time-varying moving average process.
\item A LSW process based on Haar wavelets with spectrum $S_j (z) = 0$ for $j > 2$, $S_1 (z)$ as for A6 and $S_2 (z) = S_1 (z + 1/2 )$ using periodic boundaries (for the construction of the spectrum only).
\item A LSW process based on Haar wavelets with spectrum $S_j (z) = 0$ for $j = 2$ and $j > 4$. Moreover, $S_1 (z) = \exp\{-4(z - 1/2 )^2 \}$, $S_3 (z) = S_1(z - 1/4 )$ and $S_4 (z) = S_1(z + 1/4 )$, again assuming periodic boundaries.
\end{enumerate}
\vspace{-\medskipamount}

\noindent Models A1--A8 considered thus far are connected to the literature on locally stationary processes. In a second set of experiments, we focused on models that are more in line with the change-point detection literature:
\vspace{-\medskipamount}
\begin{enumerate}[{A}1 -][10]\setcounter{enumi}{8}\parskip0pt
\item An AR(1) model with one break: the $n/2$ first observations are i.i.d.\ from the innovation distribution (standard normal or standardized $t_4$) and the $n/2$ last observations are from an AR(1) model with parameter $\beta \in \{-0.8,-0.4,0,0.4,0.8\}$.
\item An AR(2) model with one break: the $n/2$ first observations are i.i.d.\ from the innovation distribution (standard normal or standardized $t_4$) and the $n/2$ last observations are from an AR(2) model with parameter $(0,\beta)$ with $\beta \in \{-0.8,-0.4,0,0.4,0.8\}$.
\end{enumerate}
\vspace{-\medskipamount}

\noindent Note that both the contemporary distribution and the serial dependence are changing under these scenarios (unless $\beta=0$).  Also note that there is no relationship between $X_t$ and $X_{t-1}$ for Model A10, that is, $H_{0,c}^{\scs (2)}$ in~\eqref{eq:H0:Ch} is met with $C^{\scs (2)}$ the bivariate independence copula.

Finally, we considered two simple models under the alternative for which the contemporary distribution remains unchanged:
\vspace{-\medskipamount}
\begin{enumerate}[{A}1 -][10]\setcounter{enumi}{10}\parskip0pt
\item An AR(1) model with a break affecting the innovation variance: the $n/2$ first observations are i.i.d.\ standard normal and the $n/2$ last observations are drawn
from an AR(1) model with parameter $\beta \in \{0,0.4,0.8\}$ and centered normal innovations with variance $(1-\beta^2)$. The contemporary distribution is thus the standard normal.
\item A max-autoregressive model with one break: the $n/2$ first observations are i.i.d.\ standard Fr\'echet, and the last $n/2$ observations follow the recursion
\begin{equation*}
X_t = \max\{ \beta X_{t-1}, (1-\beta) Z_t\},
\end{equation*}
where $\beta \in \{0,0.4,0.8\}$ and the $Z_t$ are i.i.d\ standard Fr\'echet. The contemporary distribution is standard Fr\'echet regardless of the choice of $\beta$, see, e.g., Example 10.3 in \cite{BeiGoeSegTeu04}.
\end{enumerate}
\vspace{-\medskipamount}

\subsection{Some competitors to our tests}
\label{sec:competitors}

As mentioned in the introduction, many tests of stationarity have been proposed in the literature. Unfortunately, only a few of them seem to have been implemented in statistical software. In this section, we focus on the tests of \cite{PriSub69}, \cite{Nas13} and \cite{CarNas13} that have been implemented in the \textsf{R} packages \texttt{fractal} \citep{fractal}, \texttt{locits} \citep{locits} and \texttt{costat} \citep{costat}, respectively. Note that we did not include the test of \cite{CarNas16} (implemented in the \textsf{R} package \texttt{BootWPTOS}) in our simulations because we were not able to understand how to initialize the arguments of the corresponding \textsf{R} function.

\begin{table}[t!]
\centering
\caption{Percentages of rejection of the null hypothesis of stationarity computed from 1000 samples of size $n \in \{128, 256, 512\}$ generated from Models N1--N10, first with $N(0,1)$ innovations and then with standardized $t_4$ innovations. The column PSR.T corresponds to the test of \cite{PriSub69} implemented in the \textsf{R} package \texttt{fractal}, the column hwtos2 corresponds to the test of \cite{Nas13} implemented in the \textsf{R} package \texttt{locits} and the column BTOS corresponds to the test of \cite{CarNas13} implemented in the \textsf{R} package \texttt{costat}. }
\label{H0others}
\begingroup\footnotesize
\begin{tabular}{lrrrrrrr}
  \hline
  \multicolumn{2}{c}{} & \multicolumn{3}{c}{$N(0,1)$ innovations} & \multicolumn{3}{c}{Standardized $t_4$ innovations} \\ \cmidrule(lr){3-5} \cmidrule(lr){6-8} Model & $n$ & PSR.T & hwtos2 & BTOS & PSR.T & hwtos2 & BTOS \\ \hline
N1 & 128 & 7.0 & 0.4 & 0.0 & 30.6 & 4.2 & 5.8 \\
   & 256 & 5.7 & 3.3 & 0.0 & 47.3 & 15.1 & 8.5 \\
   & 512 & 5.9 & 3.0 & 0.1 & 62.1 & 19.0 & 13.5 \\
  N2 & 128 & 46.3 & 1.0 & 0.0 & 66.6 & 2.4 & 0.0 \\
   & 256 & 22.0 & 3.6 & 0.0 & 59.5 & 12.5 & 0.0 \\
   & 512 & 11.1 & 4.6 & 0.0 & 65.5 & 13.6 & 0.0 \\
  N3 & 128 & 6.8 & 9.7 & 21.1 & 31.1 & 12.5 & 32.2 \\
   & 256 & 6.7 & 15.8 & 35.8 & 46.0 & 22.8 & 41.4 \\
   & 512 & 5.7 & 17.7 & 38.5 & 63.3 & 29.1 & 52.6 \\
  N4 & 128 & 7.2 & 2.0 & 0.0 & 34.8 & 3.5 & 1.0 \\
   & 256 & 7.4 & 4.7 & 0.0 & 46.8 & 18.8 & 2.4 \\
   & 512 & 6.3 & 3.9 & 0.0 & 61.1 & 15.8 & 2.6 \\
  N5 & 128 & 12.7 & 0.0 & 0.4 & 38.1 & 0.1 & 6.6 \\
   & 256 & 9.1 & 0.0 & 0.5 & 48.3 & 7.4 & 11.6 \\
   & 512 & 6.3 & 0.3 & 0.8 & 64.1 & 6.7 & 16.7 \\
  N6 & 128 & 24.3 & 0.2 & 2.0 & 44.0 & 0.2 & 12.9 \\
   & 256 & 8.6 & 0.7 & 2.4 & 46.8 & 6.5 & 16.6 \\
   & 512 & 7.1 & 0.2 & 3.2 & 63.7 & 4.9 & 21.1 \\
  N7 & 128 & 62.8 & 1.1 & 0.5 & 63.8 & 1.3 & 0.6 \\
   & 256 & 63.1 & 7.1 & 5.0 & 73.5 & 9.1 & 8.3 \\
   & 512 & 49.2 & 20.6 & 17.5 & 78.6 & 29.0 & 22.5 \\
  N8 & 128 & 20.1 & 1.9 & 2.5 & 58.1 & 6.0 & 17.4 \\
   & 256 & 38.5 & 8.9 & 4.1 & 81.0 & 30.1 & 33.8 \\
   & 512 & 56.5 & 11.5 & 6.5 & 94.8 & 44.3 & 47.9 \\
  N9 & 128 & 34.8 & 5.9 & 0.0 & 68.9 & 13.6 & 0.4 \\
   & 256 & 25.8 & 16.6 & 0.2 & 71.3 & 34.8 & 1.6 \\
   & 512 & 17.9 & 22.8 & 0.5 & 75.6 & 41.9 & 3.1 \\
  N10 & 128 & 5.2 & 0.4 & 0.0 & 19.8 & 1.6 & 3.7 \\
   & 256 & 4.2 & 2.0 & 0.0 & 36.1 & 12.2 & 5.3 \\
   & 512 & 4.6 & 2.7 & 0.1 & 61.1 & 12.3 & 8.7 \\
   \hline
\end{tabular}
\endgroup
\end{table}

The rejection percentages of the three aforementioned tests were estimated for Models N1--N10 generating observations under the null. These tests were carried out at the 5\% significance level and the empirical levels were estimated from 1000 samples. As one can see in Table~\ref{H0others}, all these tests turn out to be too liberal in at least one scenario. Overall, the empirical levels are even higher when heavy tailed innovations are used instead of standard normal ones.


\subsection{Empirical levels of the proposed tests}

To estimate the levels of the proposed tests, we considered the same setting as in the previous section. We started with the component tests described in Sections~\ref{sec:rank} and~\ref{sec:sotests}, and then considered various combinations of them, based on the weighted version of Stouffer's method and a weighted version of Fisher's method. As explained previously, the former (resp.\ latter) consists of using $\psi_S$ in~\eqref{eq:stouffer} (resp.\ $\psi_F$ in~\eqref{eq:fisher}) as the function $\psi$ in Sections~\ref{sec:rank} and~\ref{sec:sotests}. As Stouffer's method sometimes gave inflated levels, for the sake of brevity, we only report the results for Fisher's method in this section.

As previously, all the tests were carried out at the 5\% significance level and the empirical levels were estimated from 1000 samples generated from Models N1--N10. The values 128, 256 and 512 were considered for the sample size $n$ and the embedding dimension $h$ was taken to be in the set $\{2,3,4,8\}$.  To save space in the forthcoming tables providing the results, each component test is abbreviated by a single letter, as already explained in Section~\ref{sec:MC}.

\begin{sidewaystable}[t!]
\centering
\caption{Percentages of rejection of the null hypothesis of stationarity computed from 1000 samples of size $n \in \{128, 256, 512\}$ generated from Models N1--N5, first with $N(0,1)$ innovations and then with standardized $t_4$ innovations.  The meaning of the abbreviations d, m, v, c, dc, dcp, a, va and mva is given in Section~\ref{sec:MC}.}
\label{H0_1}
\begingroup\footnotesize
\begin{tabular}{llrrrrrrrrrrrrrrrrrrrrr}
  \hline
  \multicolumn{6}{c}{} & \multicolumn{5}{c}{$h=2$ or lag 1} & \multicolumn{5}{c}{$h=3$ or lag 2} & \multicolumn{5}{c}{$h=4$ or lag 3} & \multicolumn{2}{c}{$h=8$}\\ \cmidrule(lr){7-11} \cmidrule(lr){12-16} \cmidrule(lr){17-21} \cmidrule(lr){22-23} Model & Innov. & $n$ & d & m & v & c & dc & a & va & mva & c & dc & dcp & va & mva & c & dc & dcp  & va & mva & c & dc \\ \hline
N1 & N(0,1) & 128 & 4.0 & 4.7 & 3.8 & 3.0 & 3.9 & 3.0 & 3.5 & 3.6 & 4.6 & 4.1 & 4.2 & 3.9 & 4.1 & 3.2 & 3.4 & 5.7 & 4.1 & 4.5 & 0.0 & 0.7 \\
   &  & 256 & 4.9 & 5.3 & 4.1 & 4.4 & 4.1 & 4.5 & 4.8 & 4.6 & 5.5 & 5.2 & 5.9 & 4.4 & 5.1 & 4.2 & 4.7 & 5.9 & 4.4 & 4.9 & 0.0 & 0.5 \\
   &  & 512 & 5.8 & 4.7 & 4.6 & 4.6 & 5.0 & 4.3 & 5.5 & 4.4 & 6.2 & 6.6 & 5.0 & 5.4 & 4.3 & 6.6 & 6.3 & 5.2 & 4.9 & 5.0 & 0.2 & 2.3 \\
   & Stand. $t_4$ & 128 & 4.5 & 3.4 & 1.6 & 4.2 & 5.0 & 5.2 & 3.7 & 3.3 & 4.9 & 5.3 & 5.7 & 4.7 & 3.0 & 3.4 & 4.0 & 5.9 & 3.7 & 2.4 & 0.0 & 0.7 \\
   &  & 256 & 5.1 & 3.8 & 1.4 & 4.2 & 5.9 & 4.2 & 3.6 & 4.0 & 5.9 & 5.5 & 5.4 & 3.0 & 2.6 & 5.3 & 5.4 & 5.4 & 2.8 & 3.0 & 0.0 & 0.4 \\
   &  & 512 & 4.7 & 4.5 & 2.6 & 5.8 & 5.7 & 4.9 & 3.8 & 4.0 & 8.6 & 6.7 & 4.5 & 4.0 & 3.1 & 7.1 & 6.0 & 5.9 & 3.2 & 3.1 & 0.3 & 1.2 \\
  N2 & N(0,1) & 128 & 2.5 & 2.5 & 1.3 & 1.6 & 3.3 & 1.3 & 4.9 & 3.2 & 1.6 & 2.6 & 5.1 & 7.5 & 3.8 & 1.5 & 2.0 & 6.4 & 8.7 & 4.1 & 0.2 & 0.0 \\
   &  & 256 & 5.1 & 4.2 & 1.8 & 2.3 & 3.5 & 1.3 & 5.7 & 3.0 & 2.2 & 3.9 & 5.5 & 8.0 & 4.4 & 2.1 & 3.6 & 6.3 & 9.1 & 4.4 & 0.5 & 1.8 \\
   &  & 512 & 5.1 & 5.0 & 1.5 & 1.5 & 2.2 & 1.7 & 4.4 & 4.2 & 1.9 & 3.1 & 3.8 & 6.9 & 5.2 & 1.4 & 3.1 & 4.8 & 7.5 & 5.3 & 0.7 & 2.5 \\
   & Stand. $t_4$ & 128 & 2.4 & 2.0 & 1.1 & 2.1 & 3.4 & 1.0 & 4.4 & 3.0 & 1.3 & 3.0 & 5.2 & 6.7 & 3.3 & 1.6 & 2.2 & 5.7 & 7.5 & 3.9 & 0.0 & 0.4 \\
   &  & 256 & 5.6 & 4.8 & 1.2 & 1.0 & 1.3 & 1.1 & 4.5 & 2.8 & 0.8 & 1.9 & 4.4 & 7.3 & 4.0 & 0.5 & 2.0 & 4.9 & 8.3 & 4.4 & 0.9 & 1.3 \\
   &  & 512 & 6.1 & 5.9 & 1.9 & 0.6 & 1.9 & 1.4 & 4.5 & 3.5 & 1.0 & 2.7 & 4.1 & 6.4 & 4.5 & 1.1 & 2.7 & 5.0 & 7.6 & 4.7 & 0.9 & 2.8 \\
  N3 & N(0,1) & 128 & 0.5 & 0.0 & 0.5 & 2.2 & 3.2 & 0.5 & 2.8 & 0.7 & 0.7 & 0.9 & 6.5 & 3.3 & 0.7 & 0.0 & 0.0 & 6.3 & 4.1 & 1.0 & 0.0 & 0.0 \\
   &  & 256 & 1.4 & 0.0 & 0.5 & 2.4 & 4.1 & 0.6 & 2.3 & 0.7 & 0.1 & 1.0 & 6.7 & 3.6 & 1.0 & 0.0 & 0.0 & 6.6 & 4.0 & 1.2 & 0.0 & 0.0 \\
   &  & 512 & 2.7 & 0.3 & 1.4 & 2.2 & 4.8 & 1.1 & 4.2 & 1.6 & 0.5 & 2.0 & 7.7 & 5.9 & 1.9 & 0.2 & 0.5 & 7.9 & 6.9 & 2.1 & 0.0 & 0.0 \\
   & Stand. $t_4$ & 128 & 0.4 & 0.1 & 0.4 & 2.0 & 3.7 & 0.1 & 1.8 & 0.4 & 0.1 & 0.3 & 5.0 & 2.5 & 0.3 & 0.0 & 0.0 & 5.4 & 2.6 & 0.2 & 0.0 & 0.0 \\
   &  & 256 & 2.2 & 0.1 & 0.8 & 2.7 & 5.3 & 0.4 & 2.8 & 0.6 & 0.4 & 2.0 & 7.9 & 4.4 & 0.8 & 0.0 & 0.3 & 8.5 & 5.4 & 0.8 & 0.0 & 0.0 \\
   &  & 512 & 2.8 & 0.7 & 1.0 & 2.9 & 6.3 & 0.6 & 3.4 & 1.8 & 0.3 & 2.0 & 7.7 & 5.2 & 2.0 & 0.1 & 0.7 & 8.7 & 5.2 & 2.2 & 0.0 & 0.2 \\
  N4 & N(0,1) & 128 & 4.8 & 5.3 & 3.9 & 2.5 & 4.2 & 2.5 & 7.3 & 5.5 & 3.3 & 4.5 & 7.3 & 7.0 & 4.9 & 3.7 & 4.6 & 6.6 & 6.9 & 4.7 & 0.1 & 1.1 \\
   &  & 256 & 5.7 & 7.3 & 3.5 & 2.5 & 4.6 & 3.7 & 7.9 & 7.7 & 4.0 & 5.3 & 7.0 & 8.2 & 6.6 & 4.9 & 5.7 & 6.5 & 6.8 & 6.4 & 0.9 & 1.9 \\
   &  & 512 & 5.6 & 6.6 & 4.6 & 4.9 & 5.4 & 4.7 & 7.1 & 7.9 & 6.2 & 5.8 & 6.9 & 7.4 & 7.2 & 6.6 & 5.6 & 7.3 & 7.4 & 6.3 & 1.9 & 3.3 \\
   & Stand. $t_4$ & 128 & 3.4 & 3.9 & 2.7 & 3.8 & 4.6 & 1.2 & 4.5 & 2.8 & 3.5 & 5.0 & 6.4 & 6.2 & 3.7 & 3.5 & 4.7 & 6.3 & 6.4 & 3.4 & 0.2 & 1.4 \\
   &  & 256 & 4.4 & 3.6 & 1.5 & 2.4 & 3.7 & 1.5 & 4.3 & 2.3 & 3.8 & 5.0 & 5.9 & 5.4 & 3.1 & 4.9 & 5.5 & 6.4 & 4.8 & 2.5 & 0.3 & 0.8 \\
   &  & 512 & 5.2 & 4.7 & 2.7 & 4.4 & 5.2 & 1.5 & 5.8 & 4.5 & 5.8 & 5.5 & 6.4 & 5.9 & 4.7 & 6.6 & 6.0 & 6.8 & 5.8 & 5.0 & 2.3 & 4.5 \\
  N5 & N(0,1) & 128 & 1.3 & 0.0 & 1.1 & 3.2 & 2.0 & 1.3 & 3.4 & 0.6 & 2.2 & 1.8 & 2.1 & 3.1 & 0.5 & 0.3 & 0.2 & 2.2 & 2.7 & 0.2 & 0.0 & 0.0 \\
   &  & 256 & 2.0 & 0.0 & 1.3 & 3.5 & 3.4 & 1.3 & 3.5 & 0.5 & 2.6 & 2.7 & 3.6 & 3.6 & 0.1 & 0.8 & 1.3 & 3.2 & 3.2 & 0.0 & 0.0 & 0.2 \\
   &  & 512 & 3.2 & 0.1 & 2.0 & 4.7 & 4.4 & 3.1 & 5.9 & 1.6 & 3.6 & 3.7 & 5.0 & 4.9 & 1.3 & 1.7 & 2.6 & 4.6 & 4.9 & 0.9 & 0.0 & 0.2 \\
   & Stand. $t_4$ & 128 & 1.4 & 0.0 & 1.1 & 3.8 & 3.0 & 1.6 & 4.3 & 0.4 & 2.5 & 1.7 & 3.3 & 3.9 & 0.2 & 0.5 & 0.6 & 3.3 & 3.2 & 0.1 & 0.0 & 0.0 \\
   &  & 256 & 2.4 & 0.0 & 0.7 & 3.9 & 2.9 & 1.5 & 3.4 & 0.3 & 2.6 & 2.1 & 4.2 & 3.2 & 0.1 & 0.8 & 0.8 & 2.9 & 2.6 & 0.0 & 0.0 & 0.2 \\
   &  & 512 & 3.9 & 0.0 & 1.1 & 4.0 & 4.0 & 1.3 & 4.0 & 0.6 & 3.6 & 4.3 & 5.1 & 3.2 & 0.3 & 1.5 & 1.8 & 5.1 & 2.9 & 0.1 & 0.0 & 0.7 \\
   \hline
\end{tabular}
\endgroup
\end{sidewaystable}
\begin{sidewaystable}[t!]
\centering
\caption{Continued from Table~\ref{H0_1}. Percentages of rejection of the null hypothesis of stationarity computed from 1000 samples of size $n \in \{128, 256, 512\}$ generated from Models N6--N10, first with $N(0,1)$ innovations and then with standardized $t_4$ innovations.  The meaning of the abbreviations d, m, v, c, dc, dcp, a, va and mva is given in Section~\ref{sec:MC}.}
\label{H0_2}
\begingroup\footnotesize
\begin{tabular}{llrrrrrrrrrrrrrrrrrrrrr}
  \hline
  \multicolumn{6}{c}{} & \multicolumn{5}{c}{$h=2$ or lag 1} & \multicolumn{5}{c}{$h=3$ or lag 2} & \multicolumn{5}{c}{$h=4$ or lag 3} & \multicolumn{2}{c}{$h=8$}\\ \cmidrule(lr){7-11} \cmidrule(lr){12-16} \cmidrule(lr){17-21} \cmidrule(lr){22-23} Model & Innov. & $n$ & d & m & v & c & dc & a & va & mva & c & dc & dcp & va & mva & c & dc & dcp  & va & mva & c & dc \\ \hline
N6 & N(0,1) & 128 & 2.5 & 0.1 & 1.1 & 3.4 & 3.7 & 1.2 & 4.1 & 1.0 & 0.2 & 0.9 & 5.0 & 5.8 & 1.2 & 0.0 & 0.0 & 5.0 & 5.5 & 1.1 & 0.0 & 0.0 \\
   &  & 256 & 4.1 & 1.1 & 2.9 & 2.7 & 5.5 & 2.4 & 7.9 & 3.1 & 1.3 & 3.3 & 7.2 & 8.6 & 3.9 & 0.0 & 0.5 & 6.8 & 7.8 & 3.2 & 0.0 & 0.2 \\
   &  & 512 & 5.5 & 1.6 & 3.8 & 4.2 & 5.8 & 4.3 & 10.3 & 6.5 & 1.5 & 4.0 & 7.1 & 12.1 & 7.1 & 0.2 & 1.1 & 8.2 & 11.8 & 5.8 & 0.0 & 0.8 \\
   & Stand. $t_4$ & 128 & 1.8 & 0.1 & 1.1 & 2.6 & 3.0 & 1.2 & 4.0 & 0.9 & 0.1 & 0.9 & 4.5 & 4.4 & 1.2 & 0.0 & 0.2 & 4.9 & 3.8 & 0.8 & 0.0 & 0.0 \\
   &  & 256 & 4.1 & 0.7 & 2.2 & 2.7 & 4.3 & 1.5 & 5.1 & 2.4 & 1.1 & 2.3 & 5.9 & 7.1 & 2.3 & 0.2 & 0.9 & 6.2 & 6.2 & 1.7 & 0.0 & 0.5 \\
   &  & 512 & 4.6 & 1.5 & 2.5 & 4.7 & 5.5 & 2.2 & 6.9 & 3.2 & 1.1 & 2.5 & 6.9 & 8.4 & 3.4 & 0.0 & 0.6 & 7.1 & 7.5 & 3.1 & 0.0 & 0.5 \\
  N7 & N(0,1) & 128 & 0.0 & 0.0 & 0.0 & 0.1 & 0.0 & 0.0 & 0.0 & 0.0 & 0.4 & 0.4 & 0.8 & 2.1 & 0.0 & 3.5 & 3.4 & 5.1 & 2.8 & 0.0 & 0.1 & 0.1 \\
   &  & 256 & 1.8 & 0.0 & 0.1 & 0.0 & 0.0 & 0.2 & 3.8 & 0.1 & 0.0 & 0.7 & 0.7 & 2.2 & 0.1 & 0.8 & 1.6 & 2.2 & 4.3 & 0.1 & 0.1 & 0.5 \\
   &  & 512 & 6.8 & 0.0 & 0.9 & 0.0 & 0.5 & 0.8 & 4.9 & 0.9 & 0.0 & 2.1 & 3.2 & 2.3 & 0.1 & 1.0 & 3.7 & 5.2 & 5.3 & 0.4 & 0.2 & 1.0 \\
   & Stand. $t_4$ & 128 & 0.0 & 0.0 & 0.0 & 0.1 & 0.0 & 0.0 & 0.2 & 0.0 & 0.6 & 0.3 & 0.9 & 2.2 & 0.0 & 2.8 & 2.4 & 4.0 & 2.7 & 0.0 & 0.0 & 0.0 \\
   &  & 256 & 1.6 & 0.0 & 0.3 & 0.0 & 0.1 & 0.0 & 3.6 & 0.2 & 0.0 & 0.3 & 0.6 & 1.4 & 0.0 & 0.9 & 1.4 & 3.3 & 4.2 & 0.1 & 0.2 & 0.2 \\
   &  & 512 & 6.5 & 0.0 & 0.6 & 0.0 & 1.0 & 0.9 & 3.2 & 0.7 & 0.0 & 2.1 & 3.1 & 2.5 & 0.4 & 0.5 & 3.2 & 5.2 & 3.5 & 0.7 & 0.3 & 0.8 \\
  N8 & N(0,1) & 128 & 5.9 & 5.0 & 23.5 & 4.4 & 6.0 & 4.3 & 17.7 & 11.5 & 4.9 & 5.5 & 6.3 & 22.4 & 11.6 & 3.0 & 4.2 & 7.0 & 23.4 & 12.3 & 0.0 & 0.7 \\
   &  & 256 & 6.3 & 5.7 & 31.9 & 5.8 & 6.6 & 4.0 & 26.2 & 19.1 & 5.6 & 6.4 & 6.7 & 30.8 & 24.1 & 4.3 & 5.6 & 6.9 & 31.9 & 23.2 & 0.0 & 1.2 \\
   &  & 512 & 7.1 & 5.5 & 37.2 & 5.7 & 6.4 & 4.1 & 30.6 & 25.4 & 7.9 & 8.8 & 7.7 & 33.0 & 27.7 & 8.0 & 8.5 & 7.4 & 35.5 & 28.5 & 0.5 & 2.4 \\
   & Stand. $t_4$ & 128 & 6.0 & 3.6 & 13.0 & 4.0 & 4.4 & 4.3 & 11.2 & 8.0 & 4.6 & 5.8 & 4.9 & 16.5 & 7.5 & 3.2 & 4.9 & 6.0 & 15.6 & 7.9 & 0.0 & 0.7 \\
   &  & 256 & 6.8 & 4.3 & 19.5 & 3.4 & 5.7 & 2.8 & 14.4 & 10.4 & 6.2 & 7.1 & 6.6 & 20.2 & 12.3 & 5.2 & 6.1 & 7.7 & 21.5 & 12.3 & 0.0 & 1.5 \\
   &  & 512 & 8.2 & 4.4 & 23.6 & 4.1 & 6.3 & 3.1 & 19.6 & 15.2 & 6.0 & 8.0 & 7.3 & 21.3 & 15.9 & 6.2 & 7.9 & 8.3 & 24.4 & 16.0 & 0.2 & 2.2 \\
  N9 & N(0,1) & 128 & 5.0 & 4.6 & 1.8 & 2.6 & 4.0 & 0.9 & 5.3 & 2.5 & 2.7 & 3.8 & 5.2 & 6.0 & 2.4 & 2.2 & 3.4 & 6.5 & 4.7 & 2.2 & 0.0 & 0.5 \\
   &  & 256 & 5.6 & 4.8 & 1.9 & 2.3 & 4.4 & 2.4 & 4.9 & 3.6 & 2.5 & 4.2 & 6.2 & 6.4 & 3.6 & 2.4 & 4.2 & 6.4 & 6.5 & 3.3 & 0.1 & 1.3 \\
   &  & 512 & 4.8 & 4.1 & 1.4 & 3.0 & 4.0 & 1.0 & 4.8 & 3.0 & 2.9 & 4.2 & 5.3 & 6.3 & 3.5 & 2.6 & 4.2 & 5.1 & 6.7 & 3.5 & 0.7 & 2.3 \\
   & Stand. $t_4$ & 128 & 3.0 & 2.1 & 1.7 & 2.4 & 3.5 & 1.0 & 4.7 & 1.9 & 2.4 & 2.8 & 4.1 & 5.4 & 1.9 & 1.4 & 2.5 & 4.5 & 5.5 & 2.0 & 0.0 & 0.3 \\
   &  & 256 & 5.5 & 4.6 & 1.4 & 2.5 & 2.9 & 1.1 & 3.6 & 2.6 & 2.1 & 3.8 & 4.9 & 5.4 & 2.9 & 1.3 & 3.4 & 4.9 & 5.1 & 2.5 & 0.1 & 0.7 \\
   &  & 512 & 4.6 & 4.2 & 1.8 & 2.9 & 4.3 & 1.6 & 5.1 & 3.9 & 3.3 & 5.2 & 5.8 & 7.0 & 4.3 & 3.2 & 4.8 & 6.8 & 7.3 & 4.1 & 0.5 & 2.8 \\
  N10 & N(0,1) & 128 & 5.9 & 5.9 & 3.9 & 3.7 & 5.2 & 4.0 & 5.2 & 4.5 & 3.8 & 5.8 & 6.5 & 5.7 & 4.7 & 3.1 & 5.9 & 7.1 & 5.1 & 4.4 & 0.2 & 0.8 \\
   &  & 256 & 6.6 & 6.8 & 3.3 & 3.5 & 5.1 & 3.0 & 5.1 & 4.7 & 3.3 & 5.9 & 6.8 & 6.2 & 5.2 & 3.9 & 5.3 & 7.5 & 5.3 & 5.2 & 0.6 & 1.6 \\
   &  & 512 & 7.4 & 6.7 & 3.8 & 4.5 & 6.6 & 4.0 & 5.4 & 6.8 & 5.0 & 7.2 & 7.8 & 6.1 & 6.7 & 5.4 & 6.6 & 8.6 & 6.1 & 6.3 & 3.1 & 3.6 \\
   & Stand. $t_4$ & 128 & 7.0 & 7.8 & 1.6 & 4.0 & 6.1 & 4.8 & 4.6 & 6.1 & 4.4 & 5.9 & 7.7 & 5.5 & 5.4 & 4.0 & 5.9 & 9.4 & 4.1 & 5.5 & 0.1 & 1.7 \\
   &  & 256 & 6.9 & 5.3 & 2.0 & 3.6 & 6.1 & 3.9 & 3.8 & 4.1 & 4.0 & 5.9 & 7.9 & 2.7 & 4.1 & 4.8 & 6.6 & 8.2 & 3.0 & 4.5 & 0.4 & 2.3 \\
   &  & 512 & 6.7 & 5.8 & 2.2 & 3.4 & 5.6 & 3.6 & 3.3 & 4.0 & 5.0 & 6.1 & 6.4 & 3.7 & 4.3 & 5.3 & 6.1 & 7.2 & 3.0 & 4.4 & 4.2 & 4.0 \\
   \hline
\end{tabular}
\endgroup
\end{sidewaystable}

The empirical levels of the tests are reported in Tables~\ref{H0_1} and~\ref{H0_2} for $h \in \{2,3,4,8\}$ (for $h=8$, to save computing time, only the tests c and dc were carried out). As one can see, the rank-based tests d, c, dc and dcp of Section~\ref{sec:rank} never turned out, overall, to be too liberal (unlike their competitors considered in Section~\ref{sec:competitors} -- see Table~\ref{H0others}). Their analogues of Section~\ref{sec:sotests} focusing on second-order characteristics behave reasonably well except for Model N8. The latter is due to the fact that the test v is way too liberal for Model N8 as can be seen from Table~\ref{H0_2}, a probable consequence of the conditional heteroskedasticity of the model. For fixed $n$, as $h$ increases, the empirical autocopula test c (and thus the combined test dc) can be seen to be more and more conservative, as already mentioned in Section~\ref{sec:hchoice}. The latter clearly appears by considering the last vertical blocks of Tables~\ref{H0_1} and~\ref{H0_2} corresponding to $h=8$. Nonetheless, the rejection percentages therein hint at the fact that the empirical levels, as expected theoretically, should improve as $n$ increases further.

\subsection{Empirical powers}

The empirical powers of the proposed tests were estimated under Models A1--A12 from 1000 samples of size $n \in \{128,256,512\}$ for $h \in \{2,3,4,8\}$ (again, for $h=8$, only the tests c and dc were carried out). For Models A1--A8 (those that are connected to the literature on locally stationary processes), the rejection percentages of the null hypothesis of stationarity are reported in Table~\ref{H1}. As one can see, for $h \in \{2,3,4\}$, the rank-based combined tests of Section~\ref{sec:comb:cop:df} (dc and dcp) almost always seem to be more powerful than the combined tests of second-order stationarity that have been considered in Section~\ref{sec:sotests} (va and mva). Furthermore, the tests focusing on the contemporary distribution (d, m and v) hardly have any power overall, suggesting that the distribution of $X_t$ does not change (too) much for the models under consideration (note in passing the very disappointing behavior of the test m for Models A6--A8). The latter explains why the test c is more powerful than the combined tests dc and dcp, and why the test a is almost always more powerful than va and mva for $h=2$. Finally, note that, except for A8, the power of all the tests focusing on serial dependence decreases, overall, as $h$ increases (see also the discussion in Section~\ref{sec:hchoice}). At least for Models A1--A4, the latter is a consequence of the fact that the serial dependence is completely determined by the distribution of $(X_t,X_{t-1})$.

\begin{sidewaystable}[t!]
\centering
\caption{Percentages of rejection of the null hypothesis of stationarity computed from 1000 samples of size $n \in \{128, 256, 512\}$ generated from Models A1--A8. The meaning of the abbreviations d, m, v, c, dc, dcp, a, va and mva is given in Section~\ref{sec:MC}.}
\label{H1}
\begingroup\footnotesize
\begin{tabular}{lrrrrrrrrrrrrrrrrrrrrr}
  \hline
  \multicolumn{5}{c}{} & \multicolumn{5}{c}{$h=2$ or lag 1} & \multicolumn{5}{c}{$h=3$ or lag 2} & \multicolumn{5}{c}{$h=4$ or lag 3} & \multicolumn{2}{c}{$h=8$} \\ \cmidrule(lr){6-10} \cmidrule(lr){11-15} \cmidrule(lr){16-20} \cmidrule(lr){21-22} Model & $n$ & d & m & v & c & dc & a & va & mva & c & dc & dcp & va & mva & c & dc & dcp  & va & mva  & c & dc \\ \hline
A1 & 128 & 3.3 & 2.5 & 4.3 & 5.0 & 4.3 & 5.5 & 6.5 & 5.1 & 6.7 & 5.2 & 5.5 & 6.5 & 4.0 & 3.2 & 3.8 & 6.2 & 6.3 & 4.6 & 0.2 & 0.9 \\
   & 256 & 5.5 & 5.0 & 5.2 & 11.4 & 10.5 & 12.9 & 12.9 & 10.4 & 13.2 & 11.0 & 9.1 & 11.8 & 8.8 & 6.9 & 6.7 & 9.1 & 10.9 & 8.5 & 0.2 & 1.2 \\
   & 512 & 4.2 & 4.4 & 6.4 & 42.5 & 27.5 & 37.7 & 30.9 & 23.1 & 42.5 & 26.6 & 16.0 & 19.9 & 14.9 & 25.2 & 15.6 & 10.6 & 15.3 & 11.5 & 1.4 & 3.5 \\
  A2 & 128 & 4.9 & 5.2 & 2.1 & 9.7 & 9.3 & 13.2 & 10.1 & 9.5 & 10.1 & 8.2 & 8.7 & 9.2 & 8.2 & 4.7 & 5.0 & 8.3 & 7.1 & 6.4 & 0.0 & 0.8 \\
   & 256 & 6.1 & 6.3 & 3.9 & 36.1 & 26.2 & 34.9 & 26.2 & 22.2 & 31.9 & 23.5 & 15.0 & 16.9 & 14.6 & 17.2 & 14.3 & 12.1 & 14.4 & 11.4 & 0.9 & 2.1 \\
   & 512 & 5.6 & 5.5 & 4.1 & 76.1 & 58.2 & 61.6 & 40.6 & 34.1 & 72.4 & 51.0 & 28.1 & 21.6 & 18.1 & 54.8 & 37.6 & 18.1 & 17.4 & 13.0 & 8.7 & 7.1 \\
  A3 & 128 & 6.2 & 6.3 & 4.3 & 13.1 & 12.9 & 18.8 & 16.9 & 15.1 & 11.8 & 11.9 & 10.4 & 13.5 & 11.4 & 6.0 & 7.2 & 10.1 & 11.7 & 10.1 & 0.1 & 1.1 \\
   & 256 & 5.8 & 5.4 & 4.9 & 42.1 & 30.6 & 40.7 & 28.3 & 23.6 & 36.1 & 25.2 & 17.6 & 18.7 & 14.3 & 19.5 & 16.1 & 13.8 & 15.1 & 11.0 & 1.4 & 2.3 \\
   & 512 & 5.7 & 6.0 & 4.6 & 92.4 & 77.0 & 83.8 & 63.6 & 51.0 & 89.3 & 71.4 & 36.9 & 30.7 & 23.5 & 75.2 & 54.4 & 25.8 & 23.4 & 16.8 & 15.9 & 11.5 \\
  A4 & 128 & 2.0 & 1.9 & 1.8 & 3.3 & 2.5 & 2.4 & 4.9 & 2.9 & 3.0 & 2.5 & 4.1 & 5.7 & 3.0 & 1.4 & 1.2 & 3.8 & 6.1 & 2.9 & 0.0 & 0.0 \\
   & 256 & 4.0 & 2.0 & 1.7 & 3.4 & 3.4 & 2.2 & 5.1 & 2.9 & 3.3 & 3.2 & 4.3 & 6.1 & 3.0 & 1.6 & 1.9 & 4.8 & 5.4 & 2.5 & 0.0 & 0.2 \\
   & 512 & 3.4 & 2.1 & 2.3 & 3.9 & 5.1 & 2.5 & 5.8 & 3.7 & 4.0 & 4.9 & 6.0 & 6.5 & 3.7 & 2.8 & 4.0 & 5.5 & 6.8 & 3.4 & 0.0 & 0.7 \\
  A5 & 128 & 4.9 & 5.6 & 3.2 & 64.0 & 49.9 & 45.6 & 44.2 & 37.1 & 53.4 & 44.5 & 28.3 & 31.8 & 24.7 & 38.5 & 29.9 & 19.8 & 26.7 & 20.3 & 1.2 & 2.5 \\
   & 256 & 5.6 & 5.4 & 4.0 & 84.5 & 75.5 & 60.4 & 54.4 & 43.6 & 77.6 & 68.2 & 46.2 & 40.9 & 29.9 & 63.6 & 52.1 & 35.8 & 36.6 & 25.6 & 12.1 & 9.5 \\
   & 512 & 4.9 & 4.4 & 8.4 & 96.1 & 91.5 & 76.5 & 72.4 & 63.2 & 93.0 & 86.8 & 75.0 & 64.6 & 46.5 & 84.4 & 74.9 & 67.2 & 62.9 & 42.1 & 38.8 & 31.2 \\
  A6 & 128 & 1.2 & 0.0 & 0.1 & 3.8 & 3.6 & 0.1 & 0.4 & 0.0 & 2.3 & 2.1 & 3.2 & 0.3 & 0.0 & 0.4 & 0.6 & 3.4 & 0.5 & 0.0 & 0.0 & 0.0 \\
   & 256 & 7.1 & 0.0 & 0.3 & 4.6 & 8.2 & 0.3 & 1.4 & 0.0 & 3.9 & 6.9 & 10.4 & 1.6 & 0.0 & 0.7 & 2.7 & 10.5 & 1.9 & 0.0 & 0.0 & 0.4 \\
   & 512 & 45.6 & 0.0 & 1.2 & 5.1 & 28.0 & 1.4 & 5.3 & 0.7 & 4.2 & 27.9 & 39.2 & 4.6 & 0.3 & 2.4 & 18.3 & 45.0 & 5.3 & 0.3 & 0.0 & 5.8 \\
  A7 & 128 & 0.2 & 0.0 & 2.5 & 3.9 & 0.9 & 2.0 & 3.4 & 0.4 & 6.9 & 1.8 & 0.5 & 4.2 & 0.7 & 4.7 & 1.3 & 0.7 & 3.5 & 0.5 & 0.0 & 0.0 \\
   & 256 & 0.8 & 0.0 & 1.4 & 6.5 & 1.8 & 2.2 & 2.3 & 0.2 & 6.4 & 1.8 & 1.8 & 2.8 & 0.3 & 4.9 & 1.8 & 2.2 & 2.7 & 0.2 & 0.0 & 0.1 \\
   & 512 & 2.8 & 0.0 & 2.0 & 19.9 & 10.3 & 1.9 & 1.9 & 0.1 & 18.1 & 9.3 & 8.8 & 4.3 & 0.2 & 12.7 & 6.5 & 7.1 & 3.6 & 0.1 & 0.0 & 0.2 \\
  A8 & 128 & 0.0 & 0.0 & 12.2 & 12.5 & 4.0 & 7.5 & 15.8 & 7.2 & 20.0 & 7.3 & 3.6 & 18.0 & 4.3 & 23.8 & 9.6 & 5.4 & 24.3 & 5.9 & 7.8 & 2.0 \\
   & 256 & 0.0 & 0.0 & 12.1 & 30.4 & 11.4 & 11.3 & 19.9 & 6.4 & 39.3 & 21.7 & 16.5 & 25.6 & 6.2 & 45.3 & 27.8 & 21.3 & 34.9 & 10.4 & 42.4 & 20.7 \\
   & 512 & 0.4 & 0.0 & 16.4 & 37.6 & 26.6 & 16.8 & 24.8 & 17.1 & 54.2 & 34.8 & 38.3 & 26.3 & 17.1 & 73.6 & 46.7 & 52.7 & 30.6 & 22.5 & 87.3 & 57.2 \\
   \hline
\end{tabular}
\endgroup
\end{sidewaystable}

The results of Table~\ref{H1} allow in principle for a direct comparison with the results reported in \cite{CarNas13} and \cite{DetPreVet11}. Since the tests available in \textsf{R} considered in \cite{CarNas13} and in Section~\ref{sec:competitors} are far from maintaining their levels, a comparison in terms of power with these tests is clearly not meaningful. As far as the tests of \cite{DetPreVet11} are concerned, they appear, overall, to be more powerful for Models A1--A4 (results for Models A5--A8 are not available in the latter reference). It is however unknown whether they hold their levels when applied to stationary heavy-tailed observations as only Gaussian time series were considered in the simulations of \cite{DetPreVet11}.

While Models A1--A8 considered thus far are connected to the literature on locally stationary processes, the remaining Models A9--A12 are more in line with the change-point literature. For the latter, all our tests (except m) turn out to display substantially more power. This should not come as a surprise given that the tests are based on the CUSUM approach and are hence designed to detect alternatives involving one single break.

Table~\ref{H1ar1} reports the empirical powers of the proposed tests for Model A9. Recall that both the contemporary distribution and the serial dependence is changing under this scenario (unless $\beta=0$). As one can see, even in this setting that should possibly be favorable to the tests focusing on second-order stationarity, the rank-based tests involving test c appear more powerful, overall, than those involving test a. Furthermore, with a few exceptions, the test c is always at least slightly more powerful than the combined test~dc. As expected given the data generating model and in line with the discussion of Section~\ref{sec:hchoice}, the increase of $h$ leads to a decrease in the power of c and dc. In addition, for $h \in \{3,4\}$, dcp is more powerful than dc, which can be explained by the fact that the serial dependence in the data generating model is solely of a bivariate nature.

\begin{sidewaystable}[t!]
\centering
\caption{Percentages of rejection of the null hypothesis of stationarity computed from 1000 samples of size $n \in \{128, 256, 512\}$ generated from Model A9 with $\beta \in \{-0.8, -0.4, 0, 0.4, 0.8\}$. The meaning of the abbreviations d, m, v, c, dc, dcp, a, va and mva is given in Section~\ref{sec:MC}.}
\label{H1ar1}
\begingroup\footnotesize
\begin{tabular}{lrrrrrrrrrrrrrrrrrrrr}
  \hline
  \multicolumn{6}{c}{} & \multicolumn{5}{c}{$h=2$ or lag 1} & \multicolumn{5}{c}{$h=3$ or lag 2} & \multicolumn{5}{c}{$h=4$ or lag 3} \\ \cmidrule(lr){7-11} \cmidrule(lr){12-16} \cmidrule(lr){17-21} Innov. & $n$ & $\beta$ & d & m & v & c & dc & a & va & mva & c & dc & dcp & va & mva & c & dc & dcp  & va & mva \\ \hline
N(0,1) & 128 & -0.8 & 10.0 & 1.6 & 14.4 & 61.2 & 59.5 & 42.1 & 52.5 & 37.7 & 49.8 & 52.9 & 58.8 & 56.2 & 36.4 & 20.6 & 25.8 & 49.6 & 54.3 & 33.0 \\
   &  & -0.4 & 4.6 & 4.8 & 4.6 & 32.0 & 22.6 & 34.2 & 28.2 & 20.2 & 26.3 & 18.7 & 14.3 & 21.5 & 13.3 & 14.9 & 10.4 & 11.5 & 15.9 & 9.8 \\
   &  & 0.0 & 3.9 & 3.1 & 3.8 & 3.6 & 4.2 & 3.9 & 4.2 & 3.8 & 5.3 & 5.1 & 5.2 & 4.6 & 2.7 & 3.8 & 3.7 & 5.9 & 4.3 & 2.7 \\
   &  & 0.4 & 8.6 & 8.3 & 3.9 & 28.0 & 26.6 & 31.2 & 24.1 & 20.8 & 27.9 & 25.2 & 19.0 & 15.5 & 12.2 & 17.8 & 16.7 & 16.4 & 12.5 & 9.8 \\
   &  & 0.8 & 6.7 & 6.1 & 12.7 & 52.8 & 45.2 & 39.5 & 47.5 & 32.4 & 49.2 & 43.2 & 41.2 & 50.9 & 32.1 & 39.6 & 34.4 & 33.4 & 45.9 & 27.8 \\
   & 256 & -0.8 & 40.9 & 2.5 & 46.1 & 95.5 & 97.0 & 74.2 & 81.8 & 69.5 & 93.5 & 95.0 & 96.6 & 83.1 & 70.7 & 69.4 & 84.8 & 95.2 & 82.8 & 67.6 \\
   &  & -0.4 & 5.7 & 4.6 & 10.2 & 66.4 & 52.8 & 69.0 & 60.3 & 48.6 & 57.3 & 44.7 & 34.9 & 45.0 & 33.0 & 35.4 & 28.2 & 23.3 & 32.1 & 22.3 \\
   &  & 0.0 & 3.7 & 4.0 & 4.6 & 4.4 & 3.3 & 4.7 & 4.1 & 4.1 & 6.2 & 5.1 & 4.6 & 4.9 & 4.8 & 5.3 & 4.2 & 4.8 & 5.1 & 4.8 \\
   &  & 0.4 & 6.8 & 6.6 & 6.6 & 63.6 & 54.9 & 69.8 & 57.3 & 48.4 & 60.8 & 51.7 & 37.6 & 39.7 & 28.7 & 50.6 & 42.3 & 25.1 & 27.3 & 19.5 \\
   &  & 0.8 & 10.0 & 6.9 & 37.6 & 89.4 & 85.0 & 74.7 & 81.2 & 68.2 & 86.9 & 82.2 & 80.8 & 84.0 & 67.0 & 80.5 & 76.5 & 72.1 & 82.0 & 63.3 \\
   & 512 & -0.8 & 87.7 & 2.0 & 85.9 & 100.0 & 100.0 & 95.2 & 97.7 & 93.9 & 100.0 & 100.0 & 100.0 & 98.3 & 94.7 & 99.2 & 99.9 & 99.9 & 98.4 & 93.7 \\
   &  & -0.4 & 6.8 & 5.6 & 17.3 & 95.7 & 89.8 & 98.1 & 93.8 & 88.6 & 89.6 & 83.9 & 75.8 & 81.4 & 67.2 & 75.2 & 62.9 & 48.8 & 63.2 & 48.4 \\
   &  & 0.0 & 4.2 & 4.5 & 4.2 & 5.2 & 5.2 & 5.0 & 4.1 & 4.0 & 6.5 & 5.6 & 4.9 & 4.2 & 4.7 & 6.3 & 4.9 & 5.1 & 4.6 & 4.9 \\
   &  & 0.4 & 7.0 & 5.5 & 12.8 & 95.0 & 89.7 & 96.2 & 90.5 & 86.4 & 91.9 & 86.4 & 74.9 & 77.8 & 63.4 & 86.8 & 78.2 & 50.5 & 60.0 & 48.0 \\
   &  & 0.8 & 11.2 & 6.8 & 84.2 & 99.7 & 99.7 & 96.1 & 98.5 & 94.3 & 99.6 & 99.5 & 99.5 & 99.1 & 94.4 & 99.0 & 98.3 & 98.8 & 98.6 & 93.7 \\
  St.\ $t_4$ & 128 & -0.8 & 15.8 & 1.0 & 7.7 & 61.5 & 67.1 & 34.4 & 37.6 & 23.5 & 51.5 & 59.9 & 67.7 & 42.5 & 23.5 & 21.8 & 35.1 & 60.3 & 41.0 & 22.1 \\
   &  & -0.4 & 4.4 & 2.5 & 2.6 & 38.3 & 27.8 & 25.9 & 21.0 & 11.2 & 33.5 & 23.5 & 16.4 & 13.2 & 6.1 & 15.9 & 11.7 & 12.0 & 10.6 & 4.3 \\
   &  & 0.0 & 5.1 & 3.6 & 1.2 & 3.2 & 4.6 & 4.4 & 4.1 & 2.8 & 5.8 & 3.9 & 5.3 & 3.9 & 1.9 & 3.9 & 3.2 & 6.1 & 3.5 & 1.7 \\
   &  & 0.4 & 9.4 & 7.0 & 2.1 & 37.3 & 32.0 & 26.1 & 19.8 & 14.2 & 32.0 & 29.1 & 23.8 & 12.5 & 8.9 & 22.3 & 21.2 & 21.0 & 10.6 & 6.7 \\
   &  & 0.8 & 9.4 & 7.4 & 7.9 & 56.3 & 52.9 & 33.5 & 37.2 & 27.3 & 50.2 & 48.0 & 47.2 & 39.4 & 25.2 & 39.2 & 38.9 & 39.7 & 36.8 & 21.1 \\
   & 256 & -0.8 & 56.3 & 2.2 & 26.6 & 96.1 & 98.5 & 65.3 & 69.8 & 54.7 & 95.8 & 98.6 & 97.7 & 73.4 & 56.2 & 73.3 & 92.2 & 96.9 & 72.1 & 53.0 \\
   &  & -0.4 & 5.4 & 3.0 & 3.9 & 74.5 & 61.4 & 58.9 & 44.6 & 31.9 & 65.2 & 54.4 & 43.3 & 30.3 & 18.1 & 41.5 & 32.4 & 28.0 & 21.5 & 12.3 \\
   &  & 0.0 & 4.7 & 4.8 & 1.7 & 4.8 & 4.1 & 4.8 & 3.5 & 3.2 & 6.9 & 6.4 & 5.1 & 2.2 & 2.5 & 6.6 & 6.2 & 5.3 & 2.3 & 2.4 \\
   &  & 0.4 & 6.4 & 5.6 & 3.2 & 71.5 & 62.9 & 58.1 & 45.3 & 35.6 & 67.7 & 57.8 & 44.4 & 27.8 & 19.4 & 56.5 & 47.1 & 30.6 & 18.1 & 13.2 \\
   &  & 0.8 & 11.2 & 6.1 & 20.6 & 92.4 & 87.6 & 62.1 & 65.6 & 51.2 & 89.2 & 86.1 & 84.2 & 68.7 & 51.5 & 83.9 & 81.0 & 77.0 & 66.0 & 47.5 \\
   & 512 & -0.8 & 96.1 & 2.8 & 58.4 & 100.0 & 100.0 & 88.4 & 92.3 & 83.9 & 100.0 & 100.0 & 100.0 & 93.9 & 84.6 & 99.9 & 100.0 & 100.0 & 93.8 & 84.0 \\
   &  & -0.4 & 8.8 & 4.7 & 5.1 & 98.6 & 96.2 & 89.3 & 81.5 & 69.7 & 95.3 & 91.3 & 85.2 & 65.7 & 46.4 & 84.6 & 74.8 & 60.0 & 46.3 & 27.3 \\
   &  & 0.0 & 4.9 & 3.3 & 1.6 & 5.1 & 5.6 & 4.3 & 3.9 & 5.1 & 6.7 & 6.2 & 5.4 & 3.9 & 3.4 & 7.2 & 6.2 & 6.0 & 2.4 & 2.3 \\
   &  & 0.4 & 7.0 & 5.4 & 4.9 & 98.1 & 95.8 & 87.8 & 79.3 & 69.7 & 97.2 & 92.8 & 81.4 & 62.6 & 46.1 & 93.1 & 85.4 & 59.5 & 40.8 & 28.1 \\
   &  & 0.8 & 14.9 & 5.7 & 57.1 & 99.9 & 100.0 & 87.5 & 89.7 & 82.5 & 99.7 & 99.6 & 99.9 & 91.7 & 83.1 & 99.7 & 99.4 & 99.5 & 91.9 & 82.2 \\
   \hline
\end{tabular}
\endgroup
\end{sidewaystable}

The rejection percentages for Model A10 are reported in Table~\ref{H1ar2}. As expected, the empirical powers of tests c and a are very low for $h=2$ since there is no relationship between $X_t$ and $X_{t-1}$. The tests focusing on the contemporary distribution are more powerful, in particular the test v. Consequently, the combined tests at lag 1 involving v do display some power. For $h \in \{3,4\}$, the two most powerful tests are dcp and va. The fact that dcp is more powerful than dc can again be explained by the bivariate nature of the serial dependence.

\begin{sidewaystable}[t!]
\centering
\caption{Percentages of rejection of the null hypothesis of stationarity computed from 1000 samples of size $n \in \{128, 256, 512\}$ generated from Model A10 with $\beta \in \{-0.8, -0.4, 0, 0.4, 0.8\}$. The meaning of the abbreviations d, m, v, c, dc, dcp, a, va and mva is given in Section~\ref{sec:MC}.}
\label{H1ar2}
\begingroup\footnotesize
\begin{tabular}{lrrrrrrrrrrrrrrrrrrrr}
  \hline
  \multicolumn{6}{c}{} & \multicolumn{5}{c}{$h=2$ or lag 1} & \multicolumn{5}{c}{$h=3$ or lag 2} & \multicolumn{5}{c}{$h=4$ or lag 3} \\ \cmidrule(lr){7-11} \cmidrule(lr){12-16} \cmidrule(lr){17-21} Innov. & $n$ & $\beta$ & d & m & v & c & dc & a & va & mva & c & dc & dcp & va & mva & c & dc & dcp  & va & mva \\ \hline
N(0,1) & 128 & -0.8 & 2.6 & 0.4 & 11.6 & 1.5 & 3.7 & 2.8 & 10.2 & 4.4 & 16.1 & 21.9 & 26.7 & 35.4 & 15.3 & 11.3 & 16.2 & 20.0 & 29.9 & 11.0 \\
   &  & -0.4 & 2.4 & 2.5 & 7.0 & 3.7 & 2.6 & 2.3 & 4.2 & 2.6 & 9.3 & 5.6 & 9.3 & 19.4 & 9.8 & 8.7 & 4.9 & 6.2 & 15.5 & 7.2 \\
   &  & 0.0 & 3.9 & 3.1 & 3.8 & 3.6 & 4.2 & 3.9 & 4.2 & 3.8 & 5.3 & 5.1 & 5.2 & 4.6 & 2.7 & 3.8 & 3.7 & 5.9 & 4.3 & 2.7 \\
   &  & 0.4 & 8.8 & 11.3 & 4.3 & 5.2 & 9.1 & 6.2 & 6.2 & 7.8 & 9.0 & 11.3 & 21.0 & 19.0 & 15.8 & 7.4 & 10.1 & 17.3 & 14.8 & 12.7 \\
   &  & 0.8 & 7.3 & 8.6 & 11.3 & 1.5 & 5.6 & 4.1 & 14.1 & 13.4 & 2.5 & 5.6 & 25.2 & 38.7 & 27.4 & 2.8 & 5.0 & 22.7 & 36.7 & 23.9 \\
   & 256 & -0.8 & 28.8 & 0.8 & 33.9 & 1.2 & 13.0 & 1.3 & 18.5 & 10.0 & 67.5 & 80.8 & 74.7 & 63.8 & 38.0 & 52.5 & 73.3 & 61.4 & 56.6 & 30.9 \\
   &  & -0.4 & 4.0 & 3.2 & 8.8 & 3.4 & 4.2 & 3.2 & 6.4 & 4.3 & 21.6 & 15.4 & 24.9 & 39.0 & 25.3 & 20.4 & 13.9 & 15.8 & 27.5 & 18.1 \\
   &  & 0.0 & 3.7 & 4.0 & 4.6 & 4.4 & 3.3 & 4.7 & 4.1 & 4.1 & 6.2 & 5.1 & 4.6 & 4.9 & 4.8 & 5.3 & 4.2 & 4.8 & 5.1 & 4.8 \\
   &  & 0.4 & 9.1 & 9.2 & 8.9 & 4.7 & 7.9 & 6.0 & 9.2 & 10.8 & 11.6 & 15.6 & 33.2 & 41.5 & 35.8 & 14.5 & 17.1 & 25.6 & 30.2 & 26.0 \\
   &  & 0.8 & 10.4 & 7.6 & 36.2 & 1.1 & 6.1 & 3.2 & 27.2 & 21.2 & 4.4 & 8.4 & 41.8 & 71.0 & 53.3 & 5.8 & 11.1 & 31.1 & 66.1 & 45.2 \\
   & 512 & -0.8 & 81.3 & 1.5 & 78.2 & 2.9 & 56.0 & 1.9 & 53.0 & 35.4 & 99.5 & 99.8 & 98.7 & 90.7 & 75.7 & 97.4 & 99.5 & 97.0 & 88.1 & 69.3 \\
   &  & -0.4 & 4.1 & 3.6 & 17.4 & 3.7 & 3.0 & 4.4 & 10.9 & 7.7 & 47.8 & 35.3 & 58.9 & 74.0 & 51.7 & 49.0 & 35.3 & 32.7 & 53.4 & 35.1 \\
   &  & 0.0 & 4.2 & 4.5 & 4.2 & 5.2 & 5.2 & 5.0 & 4.1 & 4.0 & 6.5 & 5.6 & 4.9 & 4.2 & 4.7 & 6.3 & 4.9 & 5.1 & 4.6 & 4.9 \\
   &  & 0.4 & 8.0 & 7.1 & 15.8 & 4.2 & 7.8 & 5.3 & 11.7 & 11.8 & 23.7 & 20.7 & 62.5 & 72.2 & 57.6 & 32.4 & 25.9 & 36.0 & 51.9 & 40.0 \\
   &  & 0.8 & 17.1 & 8.1 & 77.7 & 1.4 & 8.3 & 3.8 & 60.4 & 46.3 & 16.2 & 27.7 & 83.2 & 93.8 & 84.2 & 27.8 & 40.3 & 58.5 & 91.9 & 78.3 \\
  St.\ $t_4$ & 128 & -0.8 & 5.7 & 0.4 & 6.9 & 1.4 & 5.1 & 4.4 & 7.6 & 4.5 & 17.0 & 29.5 & 36.0 & 26.6 & 10.2 & 11.5 & 21.6 & 28.1 & 21.7 & 6.5 \\
   &  & -0.4 & 2.0 & 1.9 & 2.5 & 2.8 & 1.8 & 4.2 & 4.3 & 3.2 & 13.4 & 6.7 & 12.2 & 14.4 & 6.9 & 10.1 & 6.6 & 8.1 & 11.1 & 4.5 \\
   &  & 0.0 & 5.1 & 3.6 & 1.2 & 3.2 & 4.6 & 4.4 & 4.1 & 2.8 & 5.8 & 3.9 & 5.3 & 3.9 & 1.9 & 3.9 & 3.2 & 6.1 & 3.5 & 1.7 \\
   &  & 0.4 & 11.8 & 10.3 & 2.8 & 5.5 & 9.8 & 7.4 & 7.7 & 8.4 & 8.0 & 11.6 & 23.9 & 15.9 & 14.9 & 8.2 & 10.9 & 21.5 & 11.6 & 11.5 \\
   &  & 0.8 & 9.9 & 7.9 & 8.5 & 1.6 & 7.1 & 8.4 & 16.0 & 14.5 & 1.9 & 6.4 & 30.3 & 35.1 & 22.7 & 1.6 & 5.4 & 26.1 & 33.4 & 22.4 \\
   & 256 & -0.8 & 44.7 & 0.2 & 19.7 & 0.6 & 20.1 & 3.2 & 12.5 & 5.9 & 69.4 & 88.5 & 83.0 & 46.8 & 24.9 & 55.7 & 82.3 & 74.3 & 40.4 & 18.4 \\
   &  & -0.4 & 3.0 & 1.9 & 4.3 & 3.9 & 3.0 & 2.9 & 4.5 & 2.8 & 23.3 & 16.0 & 28.8 & 27.8 & 15.1 & 22.9 & 14.9 & 16.3 & 17.8 & 7.9 \\
   &  & 0.0 & 4.7 & 4.8 & 1.7 & 4.8 & 4.1 & 4.8 & 3.5 & 3.2 & 6.9 & 6.4 & 5.1 & 2.2 & 2.5 & 6.6 & 6.2 & 5.3 & 2.3 & 2.4 \\
   &  & 0.4 & 10.0 & 9.1 & 2.9 & 4.4 & 9.1 & 5.6 & 6.5 & 7.8 & 11.0 & 15.3 & 41.3 & 28.0 & 21.0 & 14.8 & 17.6 & 28.9 & 18.2 & 15.7 \\
   &  & 0.8 & 13.7 & 7.9 & 20.3 & 1.3 & 6.7 & 5.1 & 18.5 & 14.3 & 3.0 & 8.3 & 50.1 & 53.4 & 39.2 & 4.7 & 11.7 & 35.8 & 49.0 & 32.5 \\
   & 512 & -0.8 & 92.3 & 2.1 & 52.5 & 2.9 & 70.6 & 2.1 & 29.2 & 18.8 & 99.8 & 100.0 & 99.8 & 77.7 & 55.1 & 98.3 & 99.7 & 98.6 & 72.0 & 46.4 \\
   &  & -0.4 & 7.5 & 2.6 & 8.1 & 4.3 & 6.0 & 3.3 & 5.8 & 4.9 & 60.1 & 49.9 & 69.8 & 55.7 & 32.6 & 57.6 & 47.5 & 41.9 & 33.5 & 18.9 \\
   &  & 0.0 & 4.9 & 3.3 & 1.6 & 5.1 & 5.6 & 4.3 & 3.9 & 5.1 & 6.7 & 6.2 & 5.4 & 3.9 & 3.4 & 7.2 & 6.2 & 6.0 & 2.4 & 2.3 \\
   &  & 0.4 & 10.5 & 8.3 & 6.5 & 6.0 & 8.2 & 5.4 & 6.5 & 6.6 & 27.1 & 28.2 & 74.4 & 53.6 & 39.7 & 36.0 & 33.6 & 49.4 & 33.0 & 25.7 \\
   &  & 0.8 & 19.1 & 6.7 & 51.8 & 1.1 & 9.7 & 4.3 & 37.7 & 27.8 & 14.2 & 28.3 & 86.5 & 80.7 & 64.6 & 27.1 & 42.7 & 64.2 & 75.0 & 57.7 \\
   \hline
\end{tabular}
\endgroup
\end{sidewaystable}

Finally, the rejection percentages for Models A11 and A12 are given in Table~\ref{H1otherAR}. The columns c2 and c3 report the results for the bivariate analogues of the tests based on $S_{n,C^{\scs (2)}}$ defined by~\eqref{eq:SnCh} for lags 2 and 3 (these tests arise in the combined test dcp). To save computing time, we did not include the tests of second-order stationarity as these were found less powerful, overall, in the previous experiments (for Models A12, moments do not exist, whence an application would not even be meaningful). Comparing the results for Model A11 with those of Table~\ref{H1ar1} for the same values of $h$ reveals, as expected, a higher power of the test c. In addition, the test c for lag 1 is more powerful than the test c2, which, in turn, is more powerful than the test c3, a consequence of the data generating models. Finally, the fact that the test d displays some power for $\beta = 0.8$ seems to be only a consequence of the sample sizes under consideration and the very strong serial dependence in the second half of the observations.

\begin{table}[t!]
\centering
\caption{Percentages of rejection of the null hypothesis of stationarity computed from 1000 samples of size $n \in \{128, 256, 512\}$ generated from Models A11 and A12 with $\beta \in \{0, 0.4, 0.8\}$. The meaning of the abbreviations d, c, dc, dcp is given in Section~\ref{sec:MC}. The columns c2 and c3 report the results for the bivariate analogues of the test based on $S_{n,C^{\scs (2)}}$ defined by~\eqref{eq:SnCh} for lags 2 and 3 (these tests arise in the combined test dcp).}
\label{H1otherAR}
\begingroup\footnotesize
\begin{tabular}{lrrrrrrrrrrrrr}
  \hline
  \multicolumn{4}{c}{} & \multicolumn{2}{c}{$h=2$ or lag 1} & \multicolumn{4}{c}{$h=3$ or lag 2} & \multicolumn{4}{c}{$h=4$ or lag 3} \\ \cmidrule(lr){5-6} \cmidrule(lr){7-10} \cmidrule(lr){11-14} Model & $n$ & $\beta$ & d & c & dc & c & dc & c2 & dcp & c & dc & c3 & dcp \\ \hline
A11 & 128 & 0.0 & 4.3 & 4.9 & 4.5 & 4.8 & 3.8 & 5.2 & 5.1 & 4.1 & 3.2 & 6.4 & 5.1 \\
   &  & 0.4 & 8.3 & 69.3 & 60.7 & 60.6 & 53.3 & 8.4 & 41.8 & 48.6 & 40.4 & 7.1 & 30.3 \\
   &  & 0.8 & 22.1 & 91.9 & 90.4 & 86.4 & 84.6 & 80.0 & 91.2 & 80.8 & 79.5 & 55.5 & 88.6 \\
   & 256 & 0.0 & 4.3 & 3.8 & 4.3 & 6.3 & 5.9 & 4.8 & 5.0 & 5.9 & 4.9 & 5.6 & 5.5 \\
   &  & 0.4 & 6.4 & 99.5 & 96.1 & 96.9 & 91.2 & 20.0 & 78.4 & 92.3 & 84.5 & 6.2 & 54.7 \\
   &  & 0.8 & 25.3 & 99.3 & 98.5 & 96.4 & 94.9 & 95.8 & 98.6 & 92.4 & 90.4 & 85.9 & 96.8 \\
   & 512 & 0.0 & 4.6 & 6.6 & 5.9 & 7.0 & 6.5 & 4.2 & 5.2 & 5.6 & 6.2 & 5.0 & 5.0 \\
   &  & 0.4 & 6.4 & 100.0 & 100.0 & 99.9 & 100.0 & 56.7 & 99.2 & 99.8 & 99.6 & 10.7 & 83.6 \\
   &  & 0.8 & 22.1 & 100.0 & 100.0 & 100.0 & 100.0 & 100.0 & 100.0 & 99.9 & 99.7 & 99.7 & 100.0 \\
  A12 & 128 & 0.0 & 5.6 & 3.7 & 4.7 & 4.9 & 4.9 & 4.4 & 5.9 & 3.3 & 4.1 & 6.3 & 7.4 \\
   &  & 0.4 & 6.5 & 30.8 & 24.9 & 31.3 & 25.5 & 6.8 & 18.7 & 23.5 & 20.1 & 6.5 & 14.3 \\
   &  & 0.8 & 10.0 & 69.9 & 60.0 & 67.6 & 61.1 & 32.8 & 55.8 & 62.0 & 57.8 & 15.3 & 48.1 \\
   & 256 & 0.0 & 4.5 & 4.4 & 3.7 & 6.8 & 5.2 & 5.9 & 4.2 & 5.1 & 4.8 & 5.5 & 4.7 \\
   &  & 0.4 & 7.1 & 67.0 & 55.2 & 66.3 & 53.7 & 11.1 & 35.6 & 54.8 & 42.5 & 6.0 & 24.0 \\
   &  & 0.8 & 7.9 & 97.8 & 93.5 & 97.1 & 92.9 & 74.3 & 91.5 & 95.3 & 91.2 & 41.0 & 85.0 \\
   & 512 & 0.0 & 4.8 & 4.7 & 4.6 & 8.6 & 7.5 & 4.7 & 4.9 & 7.4 & 6.9 & 4.4 & 4.6 \\
   &  & 0.4 & 6.5 & 96.0 & 90.8 & 94.3 & 88.5 & 22.4 & 75.5 & 89.4 & 83.1 & 8.2 & 51.6 \\
   &  & 0.8 & 7.2 & 100.0 & 99.7 & 100.0 & 99.6 & 99.3 & 99.6 & 99.9 & 99.2 & 89.2 & 99.6 \\
   \hline
\end{tabular}
\endgroup
\end{table}

\end{document}